\numberwithin{equation}{section}
\theoremstyle{plain}
\newtheorem{theorem}{Theorem}
\numberwithin{theorem}{section}
\newtheorem{lemma}{Lemma}       	
\numberwithin{lemma}{section}
\newtheorem{proposition}{Proposition}
\numberwithin{proposition}{section}
\numberwithin{corollary}{section}
\theoremstyle{definition}
\newtheorem{definition}{Definition}
\numberwithin{definition}{section}
\newtheorem{example}{Example}
\numberwithin{example}{section}
\newtheorem{remark}{Remark}
\numberwithin{remark}{section}
\newtheorem{assumption}{Assumption}
\numberwithin{assumption}{section}
\newcommand\Eb{\mathds{E}}
\newcommand\Fb{\mathds{F}}
\newcommand\Ib{\mathds{1}}
\newcommand\Gb{\mathds{G}}
\newcommand\Pb{\mathds{P}}
\newcommand\Rb{\mathds{R}}
\newcommand\Ac{\mathcal{A}}
\newcommand\Bc{\mathscr{B}}
\newcommand\Cc{\mathcal{C}}
\newcommand\Fc{\mathscr{F}}
\newcommand\Gc{\mathcal{G}}
\newcommand\Mc{\mathcal{M}}
\newcommand\Qc{\mathcal{Q}}
\newcommand\Ic{\mathcal{I}}
\newcommand\Uc{\mathcal{U}}
\newcommand\om{\omega}
\newcommand\Om{\Omega}
\newcommand\sig{\sigma}
\newcommand\gam{\gamma}
\newcommand\Gam{\Gamma}
\newcommand\lam{\lambda}
\newcommand\del{\delta}
\newcommand\Pho{\overline{\Phi}}
\newcommand\It{\tilde{I}}
\newcommand\Xt{\widetilde{X}}
\newcommand\Ut{\widetilde{U}}
\newcommand\dd{\mathrm{d}}
\newcommand\ee{\mathrm{e}}
\newcommand{\ks}{{k^*}}
\newcommand{\CMIM}{\mathcal{CMIM}}
\newcommand{\diag}{\operatorname{diag}}
\newcommand{\erf}{\operatorname{erf}}
\newcommand \al {\alpha}
\providecommand{\keywords}[1]{\noindent\textbf{\textit{Keywords: }} #1}
\begin{document}

\title{
Rank-Dependent Predictable Forward Performance Processes 
}

\author{
Bahman Angoshtari\thanks{Department of Mathematics, University of Miami, Coral Gables, FL.  \textbf{e-mail}: \url{bangoshtari@miami.edu}}
\and Shida Duan\thanks{Department of Mathematics, University of Miami, Coral Gables, FL.  \textbf{e-mail}: \url{shidaduan@miami.edu}}
}
\date{This version: \today}

\maketitle

\begin{abstract}
Predictable forward performance processes (PFPPs) are stochastic optimal control frameworks for an
agent who controls a randomly evolving system but can only prescribe the system dynamics
for a short period ahead. This is a common scenario in 
which a controlling agent frequently re-calibrates her model. We introduce a new class of PFPPs based on rank-dependent utility, generalizing existing models that are based on expected utility theory (EUT). We establish existence of rank-dependent PFPPs under a conditionally complete market and exogenous probability distortion functions which are updated periodically. We show that their construction reduces to solving an integral equation that generalizes the integral equation obtained under EUT in previous studies. We then propose a new approach for solving the integral equation via theory of Volterra equations. We illustrate our result in the special case of conditionally complete Black-Scholes model.
\end{abstract}

\keywords{forward performance criteria, rank dependent utility, probability
distortion, time consistency, inverse investment problems, Volterra integral equations, completely monotonic inverse marginals.}

\noindent\textbf{\textit{Mathematics Subject Classification:}} 91G10, 91G80, 60H30.

%
%

\section{Introduction}\label{sec:intro}

In the classical models of portfolio choice, model parameters are assumed to be known and the main focus is on characterizing the investment policy that optimizes an objective over an investment horizon. This approach is justified by the separation principle in the stochastic control theory, which states that filtering and control problems decouple under certain conditions. In such settings, it is justified to estimate the model parameters once at the beginning of the investment horizon, obtain the optimal policy implied by the estimated parameters, and follow this policy till the end of the investment horizon.

This is not done in practice, however. Investment managers frequently update their models within their investment horizon. In other words, their investment horizon consists of a sequence of trading periods. At the beginning of each period, new model parameters are estimated which yield the trading policy to be followed in that period until the next model update. The reason for this approach is simple: there is no single model (that is, a model and specific values of its parameters) that describes the underlying market over the whole investment horizon. Each model is trusted for a limited time and becomes obsolete before the end of the investment horizon.

This mismatch (between the investment horizon and how long individual models are valid) leads to a challenge in setting up a mathematical model for obtaining trading policies. On one hand, the investment objectives should be chosen over the longer investment horizon. On the other, models are fully known in advance over shorter periods. We are then left with the following four alternatives:
\begin{itemize}
\item We choose a myopic objective. For such an objective, optimizing over shorter trading periods is equivalent to optimizing over the longer investment horizon. There are not many such objectives, however. One example is to maximize expected logarithm under certain market models (say, Black-Scholes model). Most objectives are non-myopic.

\item We forgo the need for specifying an objective over the whole investment horizon and, instead, treat each trading period as a separate investment horizon. That is, our future planning does not go beyond the time we trust our model. This can be very restrictive, especially if the underlying model needs frequent updating (think of a scenario in which the model is updated daily, and one cannot plan even two days ahead).

\item We try to model how the parameters evolve over the investment horizon. In general, this is a daunting task. After all, model parameters are what we \emph{do not} want to model. If we could, they would already be part of the model, whose dynamics depend on other parameters (which, clearly, leads to a circular argument). An appropriate technique is to use robust control, in which model parameters are assumed to be uncertain, in a Knightian sense. This would allow us (at least theoretically) to set up and solve an optimization problem over multiple trading periods. The resulting optimal policies, however, tend to be very conservative, since they are mainly obtained with the ``worst-case scenario'' in mind.

\item We come up with a framework that is ``\emph{agnostic}'' to how model parameters evolve through the investment horizon. To be precise, we assume that the market parameters are stochastic processes within our assumed probability space, and that there is a model that describe them (i.e., they have a probability law). However, our objective is designed in a way that the optimal policies can be obtained \emph{without} the knowledge of the parameters law and just by observing their values once available.\footnote{In certain engineering and machine learning literature, such models are referred to as ``model-independent.'' Perhaps, a better terminology would be \emph{model-agnostic}.}
\end{itemize}

Predictable forward performance processes (PFPPs) fall into the last category. They are stochastic optimal control frameworks for an agent who controls a dynamically evolving system but can only prescribe the system dynamics for a short period ahead. PFPPs allow the agent to form time-consistent optimal policies over time horizons that spans multiple (possibly, infinite number of) model updates, thus
mitigating her model uncertainty.

Investments under PFPPs follow a forward-in-time iterative procedure. The agent starts by choosing a function $U_0(x)$, for all values of $x>0$. The function $U_0(\cdot)$ is interpreted as how she feels about (all possible values) of her initial wealth. She then estimates her first model. Assume that the estimated parameters are denoted by $\theta_1$, and the model is valid for the time period $[0,1]$. That is, the first trading period is $[0,1]$. She then solves an \emph{inverse} investment problem
\begin{align}\label{eq:InversInv1}
U_0(x) = \sup_X\Big\{\Eb[U_1(X,\theta_1)|\theta_1] : X\in \Ac_1(x,\theta_1)\Big\},\quad x>0,
\end{align}
in which $\Ac_1(x,\theta_1)$ denotes the set of all (random) admissible wealth at time 1 associated with wealth $x$ at time $0$ and assuming that the parameter values are $\theta_1$. Here, the utility function $U_1(\cdot,\theta_1)$ is unknown. In essence, \eqref{eq:InversInv1} is the inverse of a classical expected utility problem, in which the value function $U_0(\cdot)$ is known, and a terminal utility function $U_1(\cdot,\theta_1)$ is to be found. Once \eqref{eq:InversInv1} is solved, one also obtains an optimal wealth $X_1^{*,x}$ at time 1 and a corresponding trading strategy that replicates it over the first trading period (starting with an initial wealth $x$). At time 1, the model is updated and new values of the parameters, say $\theta_2$, are obtained over the second trading period. A second \emph{inverse} investment problem
\begin{align}\label{eq:InversInv2}
U_1(x,\theta_1) = \sup_X\Big\{\Eb[U_2(X,\theta_1,\theta_2)|\theta_1,\theta_2] : X\in \Ac_2(x,\theta_1,\theta_2)\Big\},\quad x>0,
\end{align}
is considered for the unknown function $U_2(\cdot,\theta_1,\theta_2)$. Solving this problem also determines the optimal policy for the second trading period $[1,2]$. This process is continued until the end of the investment horizon is reached. In particular, since the inverse investment problems are solved \emph{after} each parameter value is observed, the optimal policy can be executed \emph{without} the advance knowledge of the law of the parameters $\{\theta_n\}_{n=1}^{\infty}$.\footnote{Here, we are only concerned with pathwise construction of the optimal wealth process and the corresponding investment policy. That is, the optimal policy given a single path of the parameter values. We are \emph{not} claiming that we can obtain the unconditional law of the optimal wealth process without knowing the unconditional law of the parameters. Such a goal is, clearly, nonsensical.} The underlying idea is, thus, to choose an initial ``value function'' $U_0(\cdot)$, and then choose subsequent value functions $\{U_n(\cdot, \theta_1,\dots,\theta_n)\}_{n=1}^{\infty}$ by making sure that the next value function is consistent with the previous one after updating the model. By consistency, we mean that the functions $U_{n-1}(\cdot, \theta_1,\dots,\theta_{n-1})$ and $U_n(\cdot, \theta_1,\dots,\theta_n)$ satisfy a relationship similar to \eqref{eq:InversInv1} and \eqref{eq:InversInv2}.

The theory of PFPPs is still developing and the corresponding literature consists of only a handful of publications, mainly in the context of portfolio optimization in the binomial model. \cite{AZZ20} introduced PFPPs in the context of the binomial model, with one trade during each trading period. They established existence of PFPPs in this model, showed that their construction is done by solving a sequence of inverse investment problems, and solve the inverse investment problem (in a single period binomial model) by reducing it to a functional equation. \cite{Kallblad2020} studied, among other things, the single period inverse investment problem (which is used in construction of PFPPS) in the context of Black-Scholes model. Therein, it was shown that the inverse problem can be solved using the Weierstrass transform. \cite{StrubZhou2021} studied the (single period) inverse investment problem in the context of a complete semi-martingale market. They showed that this problem reduces to an integral equation which generalizes the functional equation in \cite{AZZ20}. \cite{LiangStrubWang2021} extended the results of \cite{AZZ20} to a binomial model with multiple trades in each trading period. Their theoretical results are close to the one in \cite{AZZ20}, in that construction of PFPPs reduces to a similar functional equation. They also studied a new application in robo-advising. In \cite{LiangStrubWang2023}, the same authors extended their model to multi-agents and mean-field game settings. \cite{Waldon2024} considered model ambiguity and analyzed robust PFPPs in the context of the binomial model and for linear and quadratic utility functions. To the best of our knowledge, the most general results on PFPPs so far is that of \cite{Angoshtari23}, who established existence of PFPP in the more general setting of conditionally complete markets (see Section \ref{sec:setup} below). Therein, it was established that existence and construction of PFPPs reduce to solving a sequence of integral equations, each of the form obtained in \cite{StrubZhou2021}. A new general approach for solving these integral equation based on Fourier transform was developed, as well as a closed-form solution under a special class of completely monotonic inverse marginal (CMIM) functions (see Section \ref{sec:CMIM} below).

It must be noted that the main idea of PFPPs originated from \emph{forward performance measurement}, which was proposed and extended in a series of papers by Musiela and Zariphopoulou, see \cite{MZ09, MZ10, MZ10-SPDE, MZ11}. The literature of forward performance measurement has since grown significantly. In the interest of keeping this introduction to a reasonable length, we will not provide a literature review and, instead, refer to \cite{LiangSunZariphopoulou2023} for a recent account of related work. There is a clear distinction between forward performance measurement and PFPPs. In particular, in the setting of forward performance measurement, the agent's preference is updated continuously in time and its dynamics can be modeled by differential equations, backward stochastic differential equations, or stochastic partial differential equations. In the framework of PFPPs, however, the agent's preference is updated in \emph{discrete time} and its dynamics are governed by inverse control problems and integral equations, which lead to fundamentally different mathematical theory than those developed for forward performance measurement.

Our main goal in this paper is to advance the theory of PFPPs by generalizing the results of \cite{Angoshtari23} beyond expected utility (EU). We choose rank-dependent utility (RDU) framework which was first proposed by \cite{Quiggin1982} and is also a key part of the Noble-prize-winning cumulative prospect theory of \cite{tversky1992advances}. RDU theory has been successful in explaining several empirically observed phenomena that the EU theory fails to explain. On the mathematical side, RDU framework poses a challenge since it yields an objective function that is both non-concave and time-inconsistent. These mathematical challenges were resolved through the quantile formulation introduced by \cite{JinZhou2008}, which lead to a significant advancement of RDU theory. In the interest of space, we refrain from providing an account of related work and, instead, refer the reader to one of the many excellent recent work such as \cite{HuJinZhou2021}.

Our main contribution is threefold. Firstly, we introduce the notion of \emph{rank-dependent predictable forward performance processes (RDPFPPs)} (see Definition \ref{def:RDPFPP}), which incorporates a random and exogenously evolving sequence of probability distortion functions into the framework of PFPPs. In other words, at the beginning of each trading period, the investor chooses (that is, estimates) both a model for the underlying market and a probability distortion for her objective in that period. Similar to the setting of PFPPs, we assume that the unconditional law for (the parameters of) the market models and the probability distortion functions are unknown. That is, we do not specify the dynamics of how the model and distortion functions evolve.

Secondly, we show that existence and construction of RDPFPPs reduces to solving an integral equation which generalizes the integral equation studied in \cite{StrubZhou2021} and \cite{Angoshtari23}, see Theorem \ref{thm:RDPFPP} and the integral equation \eqref{eq:IntEq}. In particular, we show that construction of RDPFPPs (and obtaining the corresponding optimal policy) does not require knowledge of the unconditional law of the model parameters and the probability distortion functions. Thus, this construction process is agnostic to the specific model driving the evolution of the distortion functions and market models.

Thirdly, we analyze the integral equation \eqref{eq:IntEq} by transforming it to a linear Volterra integral equation. To the best of our knowledge, we are the first in using the theory of Volterra integral equation in solving an inverse investment problem. Exploiting this connection allows us to obtain rather explicit formulae for the solution of the integral equation, see Propositions \ref{prop:Volterra_S}, \ref{prop:Volterra_RS}, \ref{prop:Volterra_Conc1}, and \ref{prop:Volterra_Conc2}. Our results cover cases when the special monotonicity assumption by \cite{JinZhou2008} holds, and when it does not hold. We also provide closed-form solutions for the case of completely-monotonic inverse marginal (CMIM) functions, see \eqref{eq:I0_CMIM_special} and Theorem \ref{thm:CMIM}. In particular, we prove that if the initial inverse marginal $I_0(\cdot)$ is CMIM, then so is the solution $I(\cdot)$ of the integral equation \eqref{eq:IntEq}.

Other minor contributions are as follows. We consider a conditionally complete market model with filtration that is more general than the one considered by \cite{Angoshtari23}. To illustrate our results, we use the conditionally complete Black-Scholes model in which trading is time-continuous. Note that existing examples (i.e. \cite{AZZ20}, \cite{LiangStrubWang2021}, and \cite{LiangStrubWang2023}) are mainly in the context of the simpler Binomial model for which trading is in discrete-time. Finally, our results for CMIM functions are proved for the general class of CMIM functions, while the results of \cite{Angoshtari23} were proved only for a special class of CMIM functions satisfying the representation \eqref{eq:CMIM_special}. See Remark \ref{rem:NotSpecial}.

We end this introduction by comparing our results to two related work. Firstly, the arguments in the proof of our first main result, namely, Theorem \ref{thm:RDPFPP}, are build on the approach developed by \cite{Xu2016} and, specifically, on the change-of-variable in Section 3 therein. Note, however, that we solve the inverse investment problem rather than the investment problem (that is, we assume that the value function is known and solve for the terminal utility function). See Remark \ref{rem:Xu2016} for further discussion.

The second related work is that of \cite{HeStrubZariphopoulou2021} which motivated our work. They were the first to merge RDU theory and forward performance measurement by introducing the notion of \emph{forward rank-dependent performance criteria} (FRDPC). An FRDPC is defined as a pair of deterministic processes $\big(\{u_t(\cdot)\}_{t\ge0},\{w_{s,t}(\cdot)\}_{0\le s<t}\big)$, in which $u_t(\cdot)$ is a utility function for agents' wealth at time $t$, and $w_{s,t}(\cdot)$ is a probability distortion function for time period $[s,t]$. In particular, since the agent's preference is updated in continuous time, there is a continuum of utility and distortion functions. This leads to a conflict between time-inconsistency of the RDU framework and time-consistency of forward performance criteria. To overcome this challenge, \cite{HeStrubZariphopoulou2021} started with two distinct definitions for FRDPC, one based on the counterpart of the martingale/supermartingale properties of forward performance criteria but under the probability distortion, and another based on time-consistency of candidate policies. They then proved that these two seemingly different notions are equivalent. This inherent conflict, however, still poses many challenges for further development of FRDPC. For instance, it is unclear how to introduce random utility and distortion functions. Furthermore, the definition of FRDPC significantly restricts the class of probability distortions for which FRDPC are not degenerate (i.e. they lead to non-zero allocation in the risky market). In particular, \cite{HeStrubZariphopoulou2021} shows that in the non-degenerate case, the probability distortion function must be of a specific form dictated by the quantile function of the pricing kernel (see equation (1) therein). They also prove that all viable distortion functions in FRDPC must satisfy the special monotonicity condition of \cite{JinZhou2008}.

In essence, RDPFPPs are the counterpart of FRDPC in which updating of preferences (and probability distortion functions) is done in discrete-time. Since RPPFPPs have no need for a continuum of utility and distortion functions, we do not face the conflict between time-inconsistency of the RDU framework and time-consistency of forward performance criteria. As such, our model is more flexible than that of \cite{HeStrubZariphopoulou2021} when it comes to the choice of utility and distortion functions. In particular, we can consider \emph{random} utility and distortion functions. Furthermore, our distortion functions do not necessarily need to satisfy \cite{JinZhou2008} monotonicity condition. On mathematical side, our arguments are fundamentally different from those in \cite{HeStrubZariphopoulou2021}, as we need to solve an inverse control problem and a corresponding integral equation.

The rest of the article is organized as follows. In Subsection \ref{subsec:notations}, for ease of reference, we provide notations that are commonly used throughout the paper. In Section \ref{sec:setup}, we define and motivate our market setup, namely, conditionally complete markets. Subsection \ref{subsec:GBM} provides the example of conditionally complete Black-Scholes market, to illustrate our (rather abstract) setup. In Section \ref{sec:RDPFPP}, we introduce RDPFPPs as a generalization of predictable forward performance processes that incorporate exogenous probability weighting functions. Section \ref{sec:Construction} includes our first main result, Theorem \ref{thm:RDPFPP}, which establishes existence of an RDPFPP through an iterative procedure whose main step is solving an integral equation. Section \ref{sec:IntEq} includes our second main result. It analyzes the integral equation and provides existence and uniqueness conditions for its solution, see Assumption \ref{asmp:Pho} and Propositions \ref{prop:Volterra_S}, \ref{prop:Volterra_RS}, \ref{prop:Volterra_Conc1}, and \ref{prop:Volterra_Conc2}. Section \ref{sec:CMIM} provides our third main result on regularity of the solution to the integral equation. In particular, Theorem \ref{thm:CMIM} states that if the initial inverse marginal is completely monotonic, then so is the solution. Section \ref{sec:Example_BS} provides numerical examples in the contexts of conditionally complete Black-Scholes market. We conclude in Section \ref{sec:conclude} and discuss some future research directions. Proofs are included in Appendices.

\subsection{Notations}\label{subsec:notations}
We use the following notations throughout the paper. $\Rb_+:=(0,+\infty)$. For $A\subseteq\Rb^m$, $\Cc^n(A)$ is the set of real-valued and continuously $n$-times differentiable functions with domain $A$, and $\Bc(A)$ is the set of all Borel subsets of $A$. Unless otherwise mentioned, equations, inequalities, and properties involving random variables are assumed to hold $\Pb$-almost surely with the probability space $(\Om,\Fc,\Pb)$ introduced in Section \ref{sec:setup}. The conditional probability of an event $E\in\Fc$ given a $\sig$-algebra $\Gc\subset\Fc$ is the $\Gc$-measurable random variable $\Pb(E|\Gc):=\Eb[\Ib_{E}|\Gc]$, in which $\Ib_{E}$ is the indicator of $E$. The cdf of a random variable $X$ is denoted by $F_X(x):=\Pb(X\le x)$, $x\in\Rb$, and its quantile function by $F^{-1}_X(q):=\sup\{x:F_X(x)\le q\}$, $0<q<1$. 
The conditional cdf of a random variable $X$ given a $\sig$-algebra $\Gc\subset\Fc$ is the $\Gc$-measurable random function $F_X(x|\Gc)(\om) := \Pb(X\le x|\Gc)(\om)$, $x\in\Rb$, and its conditional quantile function is the $\Gc$-measurable random function $F^{-1}_X(q|\Gc)(\om):= \sup\{x:F_X(x|\Gc)(\om)\le q\}$, $(q,\om)\in(0,1)\times\Om$.


%
%
\section{Conditionally complete market model}\label{sec:setup}

We consider an agent who manages a portfolio, consisting of a riskless asset and several risky assets, over multiple trading periods. For notational convenience, we take the first trading period to be the time period $[0,1)$, the second to be $[1,2)$, etc. At the beginning of each trading period, the agent makes two choices about that period, namely, a market model and a probability distortion function. For example, at time $n-1$, she may estimate the parameters of her market model and have full confidence in the estimated parameters over period $[n-1,n)$.\footnote{We leave uncertainty in model parameters as a future research direction. See the recent work \cite{Waldon2024}.} 
Similarly, she may consult with her clients periodically (say, as a robo-advisor would) and update her probability distortion function accordingly. Once a model and a probability distortion function are chosen at time $n-1$, the agent chooses an objective for the trading period $[n-1,n)$ and trade ``optimally'' in the market according to this objective. Our goal is to develop a systematic way of choosing these objectives. 
For the rest of this section, we specify our market model and the main assumptions we impose on it.

\begin{remark}\label{rem:noTrading}
To avoid  unnecessary clutter in the model setup and arguments, we will not follow the conventional approach of first introducing the price processes and then admissible trading policies. Instead, we assume existence of a conditional state-price density for each period (see Assumption \ref{asmp:rho}) and use them to define the set $\Ac$ of admissible wealth at the end of trading periods (see Definition \ref{def:Admis}). In other words, we assume that the agent only chooses a complete market at the beginning of each trading period (to be used for that period only). This assumption enables us to abstract away trading and focus on the set of admissible wealth at the end of the time period. Once an optimal end-of-period admissible wealth is specified, a corresponding trading strategy exists to replicate it, since the market is complete within the period.
This approach is justified since $\Ac$ and its properties are all that matter for characterizing and constructing RDPFPPs in the next section. 
At the end of this section, we will provide a concrete example to motivate our abstract setup and justify our main assumptions.
\qed
\end{remark}

Our model is defined within a complete probability space $(\Om, \Fc, \Pb)$ endowed with three filtrations $\Fb$, $\Gb$, and $\Fb^+$, satisfying the following assumption.
\begin{assumption}\label{asmp:Filtration}
There exist two complete filtrations $\{\Fc_n\}_{n=0}^\infty$ and $\{\Gc_n\}_{n=1}^\infty$ satisfying $\Gc_n \subset \Fc_n$ for $n\ge 1$. Furthermore, we define $\{\Fc_n^+\}_{n=0}^\infty$ in which $\Fc_n^+$ is the sigma-algebra generated by $\Fc_n\cup\Gc_{n+1}$.\qed
\end{assumption}

Note that $\Gc_{n+1}\subset \Fc^+_n$ and $\Fc_{n-1}^+\subset \Fc_n\subset \Fc_n^+$, while we have not assumed any direct relationship between $\Gc_{n+1}$ and $\Fc_n$.

In the context of the trading scenario discussed at the beginning of this section, these filtrations are interpreted as follows. 
We interpret $\Gc_n$, $n\ge1$, as  the $\sig$-field generated by the (random) parameters of the chosen market models and distortion functions of the first  $n$ periods, namely, trading periods $[0,1), \dots, [n-1,n)$. This implies that $\Gc_n$ is associated with time $n-1$ and just \emph{after} the $n$-th model and the distortion function is chosen. We interpret $\Fc_n$, $n\ge0$, as the information available at time $n$ and just \emph{before} the model and the distortion function for time period $[n,n+1)$ are estimated. In other words, $\Fc_n$ is the $\sig$-field generated by the price paths of assets over time period $[0,n]$ as well as (the parameters of) the chosen market models and distortion functions of the first $n$ periods $[0,1), \dots, [n-1,n)$. The condition $\Gc_n \subset \Fc_n$, $n\ge 1$, implies that at time $n$, and before choosing the $(n+1)$-th model and distortion function, we know the first $n$ models and distortion functions. Finally, $\Fc_n^+$ is interpreted as the information available at time $n$ and right \emph{after} the $(n+1)$-th model and distortion function are chosen.

We assume that the market model for each period $[n-1,n)$, $n\ge 1$, is conditionally complete given its parameter values. In particular, we assume that the set of end-of-period admissible wealth is characterized by the  \emph{conditional state-price density} $\{\rho_n\}_{n=1}^\infty$ satisfying the following conditions.

\begin{assumption}\label{asmp:rho}
For $n\ge1$, there exists a strictly positive, atomless, and $\Fc_n$-measurable random variable $\rho_n$ such that $\Eb\left[\rho_n\middle|\Fc^+_{n-1}\right]=1$ and that $\rho_n$ is conditionally independent of $\Fc_{n-1}^+$ given $\Gc_n$.\footnote{That is, $\Pb(\rho_n\in B|\Fc_{n-1}^+)=\Pb(\rho_n\in B|\Gc_n)$ for all Borel sets $B$.
}
We denote by $F_n(t|\Gc_n)$, $t>0$, the conditional cdf of $\rho_n$ given $\Gc_n$, and by $F_n^{-1}(q|\Gc_n)$, $0<q<1$, its conditional quantile function.\footnote{see subsection \ref{subsec:notations} for the definition of conditional cdf and quantile functions.}
\qed
\end{assumption}

As it will be discussed after Definition \ref{def:Admis} below, Assumption \ref{asmp:rho} implies a form of market completeness over time period $[n-1,n)$. The conditional independence assumption was introduced by \cite{Angoshtari23} (see Assumption 2.7 therein), and plays a central role in construction of RDPFPPs. It can be interpreted as the following statement: ``all the relevant parameters of the market model and distortion function for the time period $[n-1,n)$ are included in (generation of) $\Gc_n$.'' See \cite{Angoshtari23} for further discussion on the significance of this assumption. The assumption that $\rho_n$ is atomless is common in the rank-dependent utility literature\footnote{For instance, it is assumed in \cite{JinZhou2008}, \cite{CarlierDana2011}, \cite{HeZhou2011}, \cite{XiaZhou2016}, \cite{Xu2016}, \cite{HeKouwenbergZhou2017}, \cite{2019JinXiaZhou}, \cite{HeStrubZariphopoulou2021}, and \cite{HuJinZhou2021} that the pricing kernel is atomless.} and is adapted to avoid certain complications in the arguments.

\begin{remark}\label{rem:SubjectiveComplete}
It should be pointed out that we are assuming that the market is (conditionally) complete from the perspective of the agent. That is, at time $n-1$ and given $\Gc_n$, the agent believes that the market is complete and that the pricing kernel is $\rho_n$ over time period $[n-1,n)$. Our goal is to find an optimal control framework that accommodates such a belief and, in particular, allows for periodically updating the market model (and probability distortion function). For this reason, it is more appropriate to call our market model conditionally \emph{and subjectively (that is, subjective to the agent's belief)} complete.\qed
\end{remark}

Next, we define the set of admissible wealth processes observed at times 0,1,$\dots$ (i.e. beginning and end of trading periods). The definition relies on the conditional state-price densities $\{\rho_n\}_{n=1}^\infty$ in Assumption \ref{asmp:rho}.

\begin{definition}\label{def:Admis}
For any $n\ge 1$ and a non-negative $\Fc_{n-1}^+$-measurable random variable $\xi$, we define the set of all admissible (discounted) wealth at time $n$ starting with wealth $\xi$ at time $n-1$ as follows
\begin{align}\label{eq:An}
	\Ac_n(\xi) := \Big\{X\,:\, X\text{ is $\Fc_n$-measurable, } X\ge 0,\,
	\Eb\left[\rho_n X\middle|\Fc^+_{n-1}\right]= \xi\Big\}.
\end{align}
Furthermore,
\begin{align}\label{eq:Admiss}
	\Ac:=\Big\{\{X_n\}_{n=0}^\infty\,:\, X_n\text{ is $\Fc_n$-measurable, } X_n\ge 0,\, \Eb[\rho_n X_n|\Fc^+_{n-1}]= X_{n-1} \Big\},
\end{align}
is the set of all admissible wealth processes.  That is, we define an admissible (discounted) wealth process as an $\Fb$-adapted process $X=\{X_n\}_{n=0}^\infty$ such that $X_0\ge0$ and $X_n\in\Ac_n(X_{n-1})$ for all $n\ge1$.\qed
\end{definition}

In Definition \ref{def:Admis}, there is subtlety in choosing the filtration with respect to which $X_n$ is measurable. Since $X_n$ is the outcome of trading over time period $[n-1, n)$, it cannot depend on the parameter values and the distortion function of the next time period $[n,n+1)$. 
Thus, $X_n$ must be $\Fc_{n}$-measurable and not just $\Fc_{n}^+$-measurable. 

Note also that the definition of $\Ac_n$ in \eqref{eq:An} implies a form of market completeness over time period $[n-1,n)$. In fact, any non-negative claim $X_n$ at time $n$ is assumed to be replicable starting with an $\Fc_{n-1}^+$-measurable wealth $\xi$ at time $n-1$ if: $(1)$ $X_n$ is $\Fc_n$-measurable (and not just $\Fc_n^+$-measurable), and $(2)$ $\Eb[\rho_n X_n|\Fc^+_{n-1}]= \xi$. By the first fundamental theorem of asset pricing, this implies that the underlying market model for the time period $[n-1,n)$ is complete conditionally on $\Fc_{n-1}^+$, with $\rho_n$ as its unique (conditional) state-price density.

Let us also emphasize another fact: our market model is allowed to be incomplete over multiple periods. For instance, not every claim at time $2$ can be replicated over time period $[0, 2)$. The only claims $X_2$ at time $2$ that can be replicated over $[0,2)$ are those that are $(1)$ $\Fc_2$-measurable (and not just $\Fc_2^+$-measurable), and $(2)$ satisfy $\Eb[\rho_2 X_2|\Fc_1^{+}]=X_1$ for some $\Fc_1$ (and not just $\Fc_1^+$-) measurable claim $X_1$. See \cite{Angoshtari23} for further discussion on conditional completeness.

\begin{remark}\label{rem:EMM}
We have assumed that the conditional state-price densities $\{\rho_n\}_{n=1}^\infty$ are unique. This assumption does not contradict with possible incompleteness of the market model over multiple periods, because the unconditional state-price densities may not be unique. To highlight this, we characterize the unconditional state-price densities in our setting. 
In terms of the unconditional state-price densities, the set of admissible wealth should be of the form
\begin{align}
	\Ac':=\Big\{\{X_n\}_{n=0}^\infty\,:\, &X_n\ge0, X_n\text{ is $\Fc_n$-measurable, }\\
	&\Eb[Z_n X_n|\Fc^+_{n-1}]\le Z_{n-1}X_{n-1}
	\text{ for all } \{Z_n\}_{n=0}^\infty\in\Mc\Big\},
\end{align}
in which $\Mc$ is the set all (unconditional) state-price densities, that is, the set of all $\{\Fc^+_n\}_{n=0}^\infty$-adapted local martingale deflators of the discounted prices of the underlying assets. By comparing $\Ac'$ with \eqref{eq:Admiss}, we conclude that any $\{\Fc^+_n\}_{n=0}^\infty$-adapted positive process $\{Z_n\}_{n=0}^{\infty}$ satisfying
\begin{align}\label{eq:Z_rho}
	\Eb\left[\frac{Z_n}{Z_{n-1}}\middle|\Fc_n\right]=\rho_n, \quad n\ge1,
\end{align}
is an unconditional state-price density.	In particular, such unconditional state-price densities $\{Z_n\}_{n=0}^\infty$ can be non-unique even if the process $\{\rho_n\}_{n=1}^\infty$ is unique.  Thus, uniqueness of the conditional state-price density $\rho_n$ does not imply that the market is complete. Note also that the unconditional price densities $\{Z_n\}_{n=0}^{\infty}$ play no role in our future arguments.\qed
\end{remark}

\subsection{Conditionally complete Black-Scholes market}\label{subsec:GBM}
We end this section by providing an example of our market setting. 
Consider a $d-$dimensional standard Brownian motion $B=(B_t)_{t\ge 0}$, a sequence of $\Rb^d$-valued random vectors $\{\lambda_n\}_{n=1}^\infty$, and a sequence of $\Rb^{d\times d}$-valued non-singular random matrices $\{\sigma_n\}_{n=1}^\infty$ in a probability space $(\Om,\Fc,\Pb)$. Assume further that the increment $(B_t-B_{n-1})$ is independent of $\{(\lambda_m, \sigma_m)\}_{m=1}^{m=n}$ for all $n\in\{1,\dots\}$ and all $t>n-1$. Denote $\Fc_t$ as the augmented $\sigma$-field generated by $\left\{(B_s)_{0\le s\le t}, \{(\lambda_m,\sig_m)\}_{m=1}^{\lceil t \rceil}\right\}$, and let $\Gc_n$ be a $\sig$-field containing the augmented $\sig$-field generated by $\{(\lam_m,\sig_m)\}_{m=1}^n$. 
According to Assumption \ref{asmp:rho}, we define $\Fc_n^+$ as the augmented $\sig$-field generated by $\big\{(B_s)_{0\le s\le n},$ $\{(\lambda_m,\sig_m)\}_{m=1}^{n+1}\big\}$.

Note that $(B_s)_{0\le s\le t}$ is $\Fc_t$-measurable, while $(\lambda_n, \sigma_n)$ is $\Fc_{n-1}^+$-measurable and $\Fc_t$-measurable for $t>n-1$. Note also that $\Fc_{n^+}=\bigcap_{t>n} \Fc_t = \Fc^+_n$.

Now, consider a market with one riskless asset and $d$ stocks. The interest rate is assumed to be zero, that is, we take the riskless asset as our numeraire. The (discounted) price of the risky assets, denoted by $S_t=(S^1_t,\dots,S_t^d)$, $t\ge 0$, is given by $S_0 = s_0\in(0,+\infty)^d$ and
\begin{align}\label{eq:GBM_S}
	S^i_t = S^i_{n-1}\exp\left[\left(\sig^{i\cdot}_n\cdot\lam_n-\frac{1}{2}\|\sig^{i\cdot}_n\|^2\right)(t-n+1) + \sig_n^{i\cdot}\cdot\left(B_t - B_{n-1}\right)\right],
\end{align}
for $i\in\{1,\dots,d\}$, $n\in\{1,\dots\}$, and $t\in(n-1,n]$. Here, $\sig_n^{i\cdot}$ denotes the $i$-th row of $\sig_n$. Since $\sig_n$ is non-singular, we can express $(B_t-B_{n-1})$ in terms of $\log(S_t/S_{n-1})$. Thus, the (augmented) $\sigma$-field generated by $\left\{(S_u)_{0\le u\le t}, \{(\lambda_m,\sig_m)\}_{m=1}^{\lceil t \rceil}\right\}$ coincides with $\Fc_t$.

In the filtered space $(\Om, \Fc, \Pb, \{\Fc_t\}_{n-1<t\le n})$ and with the Brownian motion $(B_t)_{n-1\le t\le n}$, the process $S_t=(S^1_t,\dots,S_t^d)$, $t\in(n-1,n]$, is the unique strong solution of 
\begin{align}
	S_t = S_{n-1} + \int_{n-1}^t \diag(S_u)\sigma_n(\lambda_n \dd u + \dd B_u), \quad n-1< t\le n.
\end{align}
We thus interpret $\sig_n\sig_n^\top$ (respectively, $\lam_n$) as the covariance matrix (respectively, vector of risk-premiums) of the stocks over time period $(n-1,n]$. Note that, unlike the Black-Scholes market model, the covariance matrix and risk-premiums are random. Furthermore, since $(\sig_n,\lam_n)$ are $\Fc_{(n-1)^+}$-measurable, we have assumed that the covariance matrix and risk-premiums are observed (that is, estimated with full confidence) at the beginning of the $n$-th period.

A trading strategy $(\pi_t)_{t\ge0}$ is an $(\Fc_t)_{t\ge0}$-adapted process satisfying $\Pb\left(\int_0^T\pi^2_t\dd t<\infty\right)=1$ for all $T>0$. The wealth process $(X_t)_{t\ge0}$ associated with a trading strategy $(\pi_t)_{t\ge0}$ and starting with initial wealth $x>0$ is the solution of equations
\begin{align}
	X_t = X_{n-1} + \int_{n-1}^t \pi_u \dd S_u,\quad n-1< t\le n, n\in\{1,\dots\},
\end{align}
with $X_0=x$. A trading strategy is admissible if $X_t\ge 0$ for all $t\ge0$. It is important to observe that $X_n$ is $\Fc_n$-measurable and not just $\Fc_{n}^+$-measurable. In particular, $X_n$ does not depend on the covariance matrix $\sig_{n+1}$ and the risk premiums $\lam_{n+1}$ which are also observed at time $n$. This is expected (or, rather, it is a required property for the wealth process), since the mere act of estimating a model should not instantaneously change wealth.

From \eqref{eq:GBM_S}, it follows that an unconditional state price density (i.e. a local martingale deflator for $\{S_t\}_{t\ge0}$) is a process $(Y_t)_t\ge0$ of the form
\begin{align}
	Y_t = Z_{n-1} \exp\left[-\frac{1}{2}\|\lam_n\|^2(t-n+1) - \lam_n\cdot(B_t-B_{n-1})\right],\quad t\in[n-1, n), n\in\{1,\dots\},
\end{align}
in which $\{Z_n\}_{n=0}^{\infty}$ is an $\{\Fc^+_n\}_{n=0}^\infty$-adapted positive process satisfying $Z_0=1$ and
\begin{align}\label{eq:GBM_Z}
	\Eb[Z_n|\Fc_n] = Z_{n-1} \exp\left[-\frac{1}{2}\|\lam_n\|^2 - \lam_n\cdot(B_n-B_{n-1})\right],\quad n\in\{1,\dots\}.
\end{align}
The last equation only specifies the conditional expectation of the $\Fc_n^+$-measurable random variable $Z_n$. Thus, the (unconditional) state-price densities are not unique. By \eqref{eq:Z_rho}, the conditional state-price densities $\{\rho_n\}_{n=1}^\infty$ are given by
\begin{align}\label{eq:GBM_rho}
	\rho_n := \exp\left[-\frac{1}{2}\|\lam_n\|^2 - \lam_n\cdot(B_n-B_{n-1})\right],\quad n\in\{1,\dots\}.
\end{align}
Thus, $\rho_n$ is conditionally log-normal given $\Gc_n$. Its conditional cdf is given by
\begin{align}
	F_n(t|\Gc_n) = \frac{1}{2}+\frac{1}{2}\erf\left(\frac{\ln(t)}{\sqrt{2}\|\lam_n\|} +\frac{1}{2\sqrt{2}}\|\lam_n\|\right),\quad t>0,
\end{align}
and its quantile functions is given by
\begin{align}\label{eq:LogNormQuant}
	F_n^{-1}(q|\Gc_n) = \exp\left(\sqrt{2}\|\lam_n\|\erf^{\,-1}(2q-1)-\frac{1}{2}\|\lam_n\|^2\right),\quad 0<q<1,
\end{align}
in which $\erf(z)=(2/\sqrt{\pi})\int_0^z \ee^{-x^2}\dd x$ is the error function. It is straightforward to check that Assumption \ref{asmp:rho} is satisfied.

%
%
\section{Definition of RDPFPPs}\label{sec:RDPFPP}

In this section, we introduce RDPFPPs as a generalization of predictable forward performance processes introduced by \cite{AZZ20}. 
The new element is the addition of an exogenous probability weighting function $W_n(x)$ for each trading period $[n-1,n)$. These so-called ``distortion functions'' determine how the agent subjectively perceives the possibility of rare events in each period. We assume that the probability weighting functions are random and are revealed at the beginning of each period (i.e.  $W_n(\cdot)$ is $\Gc_n$-measurable). Specifically, we assume that:

\begin{assumption}\label{asmp:ProbDistortion}
	For $n\in\{1,\dots\}$, there exists a random probability weighting function
	$W_n(p,\om):[0,1]\times\Om\to[0,1]$ satisfying the following conditions:\\[1ex]
	\noindent$(i)$ $(p,\om)\mapsto W_n(p,\om)$ is $\Bc([0,1])\times\Gc_n$-measurable.\\
	\noindent$(ii)$ For $\Pb$-almost all $\om$, the mapping $W_n(\cdot,\om):[0,1]\to [0,1]$ is continuously differentiable and strictly increasing with $W_n(0,\om)=0$ and $W_n(1,\om)=1$.\qed
\end{assumption}

We define an RDPFPP next. In the definition, $F_X(x|\Gc)(\om) := \Pb(X\le x|\Gc)(\om)$ denotes the conditional cdf of a random variable $X$ given a $\sig$-algebra $\Gc\subset\Fc$, and the set
\begin{align}\label{eq:Uc}
	\Uc:=\Big\{U\in \Cc^2(\Rb_+): U'>0, U''<0, U'(0)=+\infty, U'(\infty)=0\Big\},
\end{align}
is the set of all classical utility functions.

\begin{definition}\label{def:RDPFPP}
	A sequence of random functions $\big\{U_n(\cdot)=U_n(\cdot,\om)\big\}_{n=0}^\infty$ is an RDPFPP if the following conditions are satisfied.\\[1ex]
	\noindent$(i)$ $U_0\in \Uc$. For $n\ge1$, $U_n(x,\om)$ is $\Bc(\Rb_+)\times\Gc_n$-measurable and $U_n(\cdot,\om)\in\Uc$ for $\Pb$-almost all $\om$.\\
	\noindent$(ii)$ $U_{n-1}(X_{n-1}) \ge \int_0^\infty U_n(\xi)\, \dd \big(1-W_n(1-F_{X_n}(\xi|\Gc_n)\big)$, $n\ge1$, for any $\{X_n\}_{n=0}^\infty\in\Ac$.\\
	\noindent$(iii)$ There exists $\{X^*_n\}_{n=0}^\infty\in\Ac$ such that 
	$U_{n-1}(X^*_{n-1}) = \int_0^\infty U_n(\xi)\, \dd \big(1-W_n(1-F_{X^*_n}(\xi|\Gc_n)\big)$, $n\ge1$. Such an $X^*$ is called an \emph{optimal wealth process}.\qed
\end{definition}

An RDPFPP is a sequence of random utility functions $\{U_n(\cdot)\}$ for the agent's wealth at the beginning and the end of trading periods. By Condition $(i)$ of Definition \ref{def:RDPFPP}, $U_0(x)$ is known at time 0 and before the model and distortion function for the first period are observed. For $n\ge1$, $U_n(\cdot)$ is known at time $n-1$ and after the market model and distortion function for the $n$-th period (i.e. $[n-1,n)$) are revealed, since $U_n$ is $\Gc_n$-measurable but not necessarily $\Fc_{n-1}$-measurable.

The role of Conditions $(ii)$ and $(iii)$ is twofold. Firstly, they enforce a rank-dependent preference within each period. 
Consider an optimal investment problem with a rank dependent preference over a single time horizon, namely,
\begin{align}\label{eq:RDU}
	v(x) := \sup_{X\in\Ac(x)} \int_0^{+\infty} u(\xi)\,\dd\big[1-w\big(1-F_X(\xi)\big)\big],
\end{align} 
in which $\Ac(x)$ is the set of admissible terminal wealth starting with initial wealth $x$, $u(\cdot)$ is the terminal utility function, $w(\cdot)$ is a probability weighting function, and $v(\cdot)$ is the value function. Conditions $(ii)$ and $(iii)$ of Definition \ref{def:RDPFPP} imply that, for all $n\in\{1,\dots\}$, all $X\in\Ac_n\big(X^*_{n-1}\big)$ (see \eqref{eq:An}), and $\Pb$-almost all $\om$, we have
\begin{align}
	U_{n-1}\big(X_{n-1}^*(\om), \om\big)
	&= \int_0^{+\infty} U_n(x,\om) \dd \left[1-W_n\big(1-F_{X^*_n}(x|\Gc_n)(\om),\om\big)\right]\\*
	\label{eq:Optimality}
	&\ge \int_0^{+\infty} U_n(x,\om) \dd \left[1-W_n\big(1-F_{X}(x|\Gc_n)(\om),\om\big)\right].
\end{align}
It is then clear that \eqref{eq:Optimality} is the conditional version of the rank-dependent preference, with $U_{n-1}$, $U_n$, and $W_n$ respectively playing the role of the value function $v$, the utility function $u$, and the probability weighting function $w$, in \eqref{eq:RDU}.

The second role of Conditions $(ii)$ and $(iii)$ is to guarantee time-consistency of the agent over multiple periods. Assume that the agent starts with an initial wealth $x>0$ and adapts an RDPFPP $\{U_n\}_{n=1}^{+\infty}$ in the following sense. At time $n-1$ and once the market model and probability weighting function for $[n-1,n)$ are known (i.e. the information in $\Gc_n$ is revealed), the agent chooses an admissible wealth $X_n\in\Ac_n(X_{n-1}^*)$ to maximize her single period objective
\begin{align}
	\int_0^{+\infty} U_n(x,\om) \dd \big[1-W_n\big(1-F_{X_n}(x|\Gc_n)(\om),\om\big)\big].
\end{align}
By \eqref{eq:Optimality}, it follows that the agent would choose $X^*_n$ as her end-of-period terminal wealth. The important observation here is that \eqref{eq:Optimality} is ``pathwise,'' that is, it holds for ($\Pb$-almost) all $\om$. Therefore, by adapting the RDPFPP $\{U_n\}_{n=1}^{+\infty}$, the agent is essentially enforcing \eqref{eq:Optimality} for all future times and for all future realization of market models and probability distortion functions. This indicates that the agent preference is strongly time-consistent in the sense of \cite{Riedel2004}.

It remains to show that such an RDPFPP exists and, more importantly, can be constructed within the context of the investment scenario specified in Section \ref{sec:setup}. This will be our main goal for the rest of the paper.

%
%
\section{Forward construction of RDPFPP}\label{sec:Construction}
This section includes one of our main results, namely, Theorem \ref{thm:RDPFPP} below. It establishes existence of an RDPFPP $\{U_n\}_{n=0}^\infty$ through an iterative procedure in which the random utility function $U_n(\cdot,\om)$ is obtained from $U_{n-1}(\cdot,\om)$. Each iteration of this procedure is akin to solving an \emph{inverse} investment problem under a rank-dependent preference, in the sense that the value function $v(\cdot)$ in \eqref{eq:RDU} is known and the terminal utility function $u(\cdot)$ therein is to be found.

In the statement of Theorem \ref{thm:RDPFPP}, $\Ic$ is the set of all inverse marginal of utility functions in \eqref{eq:Uc}, that is,
\begin{align}\label{eq:Ic}
	\Ic := \left\{\left(U'\right)^{-1}:U\in \Uc\right\} = \Big\{I\in \Cc^1(\Rb_+): I'<0, I(0)=+\infty, I(+\infty)=0\Big\}.
\end{align}
We also need the random functions $\Phi_n(\cdot,\om)$, $n\ge1$, given by
\begin{align}
	\label{eq:Phi}
	\Phi_n(q,\om) &:= -\int_q^1 F^{-1}_n\big(W_n^{-1}(1-p,\om)\big|\Gc_n\big)(\om)W_n^{(-1)\prime}(1-p,\om)\dd p
	\\&
	=-\int_0^{W_n^{-1}(1-q,\om)} F^{-1}_n\big(p\big|\Gc_n\big)(\om)\dd p,
\end{align}
for $(q,\om)\in[0,1]\times\Om$, in which $F_n^{-1}(\cdot|\Gc_n)$ is the conditional quantile function of $\rho_n$ in Assumption \ref{asmp:rho}, and $W_n(\cdot,\om)$ is the random probability weighting function in Assumption \ref{asmp:ProbDistortion}. Note that, for $\Pb$-almost all $\om$, $\Phi_n(\cdot,\om)$ is absolutely continuous, strictly increasing, $\Phi_n(1,\om)=0$, and $\Phi_n(0,\om)=-\Eb[\rho_n|\Gc_n](\om)=-1$. We denote by $\Pho_n(\cdot,\om)$ the concave envelope of $\Phi_n(\cdot,\om)$, that is
\begin{align}\label{eq:Pho}
	\Pho_n(q,\om):=\sup_{\al,\beta}\left\{\frac{(\beta-q)\Phi_n(\alpha,\om)+(q-\alpha)\Phi_n(\beta,\om)}{\beta-\alpha}\,:\,
	\al\in[0,q], \beta\in[q,1]\right\},\quad q\in[0,1].\qquad
\end{align}
Since $\Phi_n(\cdot,\om)$ is absolutely continuous, so is $\Pho_n(\cdot,\om)$. Let $\Phi'_n(\cdot,\om)$ and $\Pho'_n(\cdot,\om)$ denote the (almost everywhere) derivatives of $\Phi_n(\cdot,\om)$ and $\Pho_n(\cdot,\om)$, respectively. Note that
\begin{align}\label{eq:Phi_prime}
	\Phi_n'(q,\om) = F^{-1}_n\big(W_n^{-1}(1-q,\om)\big|\Gc_n\big)(\om)W_n^{(-1)\prime}(1-q,\om)
	=\frac{F^{-1}_n\big(W_n^{-1}(1-q,\om)\big|\Gc_n\big)(\om)}{W_n'\big(W_n^{-1}(1-q,\om),\om\big)}.
\end{align}

We now state our main result on existence of RDPFPP. This result generalizes Theorem 3.4 of \cite{Angoshtari23}.
\begin{theorem}\label{thm:RDPFPP}
	Consider the market model of Section \ref{sec:setup} with Assumptions \ref{asmp:Filtration}, \ref{asmp:rho}, and \ref{asmp:ProbDistortion} holding, and let $\{\Pho_n(\cdot,\om)\}_{n=1}^\infty$ be as in \eqref{eq:Pho}. Assume further that $U_0\in\Uc$, $I_0:=\left(U_0'\right)^{-1}$, and that the random functions $\big\{I_n(\cdot,\om)\big\}_{n=1}^\infty$ satisfy the following conditions:\\[1ex]
	\noindent$(i)$ For $n\ge1$, $I_n(y,\om)$ is $\Bc(\Rb_+)\times\Gc_n$-measurable and $I_n(\cdot,\om)\in\Ic$ for $\Pb$-almost all $\om$.\\
	\noindent$(ii)$ $\int_0^1 I_n\big(y\Pho_n'(\eta,\om),\om\big)\Pho_n'(\eta,\om)\dd \eta=I_{n-1}(y,\om)$, for all $y>0$ and $\Pb$-almost all $\om$.\\
	\noindent$(iii)$ $\int_0^1\int_{I_n\big(\Pho_n'(p,\om),\om\big)}^x I_n^{-1}(\xi,\om)\dd \xi\dd p<+\infty$, for all $x>0$ and $\Pb$-almost all $\om$.\\[1ex]
	Then, the sequence of random functions $\{U_n(x,\om)\}_{n=0}^{\infty}$ given by the recursive relationship
	\begin{align}\label{eq:RDPFPP}
		U_n(x,\om)
		= U_{n-1}\big(I_{n-1}(1,\om),\om\big)+ \int_0^1\int_{I_n\big(\Pho_n'(p,\om),\om\big)}^x I_n^{-1}(\xi,\om)\dd \xi\dd p,
		\quad x>0, \om\in\Om, n\ge 1,\qquad
	\end{align}
	is an RDPFPP. Furthermore, for any initial portfolio value $x_0>0$, the process $\{X^*_n\}_{n=0}^\infty$ given by $X^*_0=x_0$ and the recursive relationship
	\begin{align}\label{eq:RDPFPP_optWealth}
		X^*_n = I_n\Bigg(U_{n-1}'\left(X^*_{n-1}\right)\Pho'_n\bigg(1-W_n\Big(F_n\big(\rho_n\big|\Gc_n\big)\Big)\bigg)\Bigg),\quad n\ge1,\quad
	\end{align}
	is an optimal wealth process.\qed\vspace{1ex}
\end{theorem}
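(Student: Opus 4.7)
The plan is to verify the three conditions of Definition \ref{def:RDPFPP} for the sequence $\{U_n\}$ defined by \eqref{eq:RDPFPP} and to show that the process $\{X^*_n\}$ in \eqref{eq:RDPFPP_optWealth} attains equality. Condition $(i)$ of Definition \ref{def:RDPFPP} follows essentially by construction: differentiating \eqref{eq:RDPFPP} in $x$ yields $U_n'(x,\om)=I_n^{-1}(x,\om)$, so the hypothesis $I_n(\cdot,\om)\in\Ic$ from (i) of the theorem implies $U_n(\cdot,\om)\in\Uc$ for $\Pb$-almost all $\om$; $\Bc(\Rb_+)\times\Gc_n$-measurability is inherited from that of $I_n$, and finiteness of the right-hand side of \eqref{eq:RDPFPP} is exactly hypothesis (iii) of the theorem.

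The heart of the proof is the single-period inverse rank-dependent problem conditional on $\Gc_n$, from which Conditions $(ii)$ and $(iii)$ of Definition \ref{def:RDPFPP} both follow. Fix $n\ge1$ and a non-negative $\Fc^+_{n-1}$-measurable $\xi$; for any $X_n\in\Ac_n(\xi)$, I would first replace $X_n$ by its anti-comonotonic rearrangement $X_n^{\#}$ with respect to $\rho_n$, conditionally on $\Gc_n$. By Hardy--Littlewood and the conditional atomlessness and independence of $\rho_n$ from Assumption \ref{asmp:rho}, $X_n^{\#}$ has the same conditional law as $X_n$ given $\Gc_n$ (so the RDU objective is preserved) and satisfies $\Eb[\rho_n X_n^{\#}|\Fc^+_{n-1}]\le\xi$, placing it in $\Ac_n(\xi)$. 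Writing $X_n^{\#}=\tilde G\bigl(1-W_n(F_n(\rho_n|\Gc_n))\bigr)$ for a non-decreasing $\tilde G:[0,1]\to[0,+\infty)$, following the change of variable in \cite{Xu2016}, the RDU objective becomes $\int_0^1 U_n(\tilde G(p))\,\dd p$ and the budget becomes $\int_0^1\Phi_n'(p)\tilde G(p)\,\dd p\le\xi$.

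The main obstacle is that $\Phi_n'$ need not be monotone, so pointwise Lagrangian maximization in $\tilde G(p)$ can violate the monotonicity constraint on $\tilde G$. To circumvent this, I would invoke the concave envelope $\Pho_n$: integration by parts combined with $\Pho_n\ge\Phi_n$ and the matching boundary values $\Pho_n(0)=\Phi_n(0)=-1$, $\Pho_n(1)=\Phi_n(1)=0$ yield, for every non-decreasing $\tilde G$, the relaxation $\int_0^1\Phi_n'(p)\tilde G(p)\,\dd p\ge\int_0^1\Pho_n'(p)\tilde G(p)\,\dd p$. Since $\Pho_n'$ is non-increasing, the pointwise maximizer $\tilde G^*(p)=I_n\bigl(y\Pho_n'(p)\bigr)$ of the relaxed Lagrangian is automatically non-decreasing, and hypothesis (ii) of the theorem pins down the multiplier via $I_{n-1}(y)=\xi$, giving $y=U_{n-1}'(\xi)$. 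Because $\Pho_n$ is affine on every interval where it strictly exceeds $\Phi_n$, $\tilde G^*$ is constant on such intervals, forcing $\int_0^1\Phi_n'\tilde G^*\,\dd p=\int_0^1\Pho_n'\tilde G^*\,\dd p=\xi$, so $\tilde G^*$ is feasible and binding for the original (non-concavified) budget.

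It remains to compute the candidate optimal value and show it equals $U_{n-1}(\xi)$. Setting $V_n(y):=U_n(I_n(y))-yI_n(y)$, one has $V_n'(y)=-I_n(y)$, so hypothesis (ii) of the theorem yields $\frac{\dd}{\dd y}\int_0^1 V_n(y\Pho_n'(p))\,\dd p=-I_{n-1}(y)=V_{n-1}'(y)$. Substituting $x=I_n(1)$ into \eqref{eq:RDPFPP} produces the identity $\int_0^1 U_n(I_n(\Pho_n'(p)))\,\dd p=U_{n-1}(I_{n-1}(1))$, which together with hypothesis (ii) at $y=1$ gives $\int_0^1 V_n(\Pho_n'(p))\,\dd p=V_{n-1}(1)$; integrating the derivative identity extends this to $\int_0^1 V_n(y\Pho_n'(p))\,\dd p=V_{n-1}(y)$ for every $y>0$. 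Adding $yI_{n-1}(y)$ and invoking hypothesis (ii) once more yields $\int_0^1 U_n(I_n(y\Pho_n'(p)))\,\dd p=U_{n-1}(I_{n-1}(y))$. Taking $y=U_{n-1}'(\xi)$ therefore delivers $U_{n-1}(\xi)$, establishing Condition $(ii)$ of Definition \ref{def:RDPFPP}. Undoing the change of variable, the optimizer reads $X_n^*=I_n\bigl(U_{n-1}'(X^*_{n-1})\Pho_n'(1-W_n(F_n(\rho_n|\Gc_n)))\bigr)$, which matches \eqref{eq:RDPFPP_optWealth} and attains equality, proving Condition $(iii)$. The most delicate technical points are the concavification inequality, verifying that $\tilde G^*$ binds the original budget, and tracking the $\Gc_n$-measurability of the Lagrange multiplier $U_{n-1}'(\xi)$ across the iterative construction.
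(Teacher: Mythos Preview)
Your proposal is correct and follows essentially the same route as the paper: both reduce Condition~$(ii)$ to the single-period quantile problem of \cite{Xu2016}, both handle the non-monotonicity of $\Phi_n'$ via the concave envelope $\Pho_n$ (your ``relaxation binds'' argument is precisely the content of the paper's auxiliary Lemma~\ref{lem:Aux_integ_ident}), and both identify the candidate value with $U_{n-1}$ by matching derivatives plus a constant. The only stylistic difference is that you package the derivative-matching via the conjugate $V_n(y)=U_n(I_n(y))-yI_n(y)$, whereas the paper differentiates $\int_0^1 U_n\bigl(I_n(U_{n-1}'(x)\Pho_n'(p))\bigr)\dd p$ directly in~$x$; these are dual versions of the same computation.

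One small slip: substituting $x=I_n(1)$ into \eqref{eq:RDPFPP} does \emph{not} give $\int_0^1 U_n\bigl(I_n(\Pho_n'(p))\bigr)\dd p=U_{n-1}\bigl(I_{n-1}(1)\bigr)$. What you need is to substitute $x=I_n\bigl(\Pho_n'(q)\bigr)$, integrate the resulting identity over $q\in(0,1)$, and observe that the triple integral $\int_0^1\int_0^1\int_{I_n(\Pho_n'(p))}^{I_n(\Pho_n'(q))} I_n^{-1}(\xi)\,\dd\xi\,\dd p\,\dd q$ vanishes by antisymmetry in $(p,q)$. This is exactly how the paper pins down the constant, and with this fix your argument goes through unchanged.
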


\begin{proof}
	See Appendix \ref{app:RDPFPP}.
\end{proof}

\begin{remark}\label{rem:Xu2016}
	The proof of Theorem \ref{thm:RDPFPP} is based on the arguments in \cite{Xu2016} (which are referenced as needed). In particular, we use the change-of-variable in Section 3 of \cite{Xu2016}. The difference between our setting and that of \cite{Xu2016} (that is, our mathematical contribution) is that we are solving an \emph{inverse} investment problem. That is, we are assuming that value function $v(\cdot)$ in \eqref{eq:RDU} is known and the terminal utility function $u(\cdot)$ therein is to be found. See \eqref{eq:RDProblem} in Appendix \ref{app:RDPFPP} for details.
	\qed
\end{remark}

Theorem \ref{thm:RDPFPP} implies the following iterative construction for an RDPFPP $\{U_n(\cdot,\om)\}_{n=0}^{+\infty}$ along with a corresponding optimal wealth process $\{X^*_n\}_{n=0}^\infty$. Assume that an initial utility function $U_0(x)$, for all initial wealth $x$, is chosen in advance (that is, at time zero and before the model and probability weighting function for the first period $[0,1)$ are observed). Assume that the initial portfolio value is $x_0>0$. Let $I_0=(U'_0)^{-1}$ be the inverse marginal of $U_0$. At time $n-1$ and assuming that $U_{n-1}(\cdot,\om)$, $I_{n-1}(\cdot,\om)$, $\Pho_n(\cdot,\om)$, and $X_{n-1}^*(\om)$ are known, follow these steps:
\begin{description}
	\item[\textbf{Step 1:}] Obtain $I_n(\cdot,\om)$ by solving the integral equation
	\begin{align}\label{eq:IntEq_n}
		\int_0^1 I_n\big(y\Pho_n'(\eta,\om),\om\big)\Pho_n'(\eta,\om)\dd \eta=I_{n-1}(y,\om),\quad y>0. 
	\end{align}
	\item[\textbf{Step 2:}] Obtain $U_n(\cdot,\om)$ from \eqref{eq:RDPFPP}. Note that $\big(U_n'(\cdot,\om)\big)^{-1}=I_n(\cdot,\om)$.
	\item[\textbf{Step 3:}] Obtain $X_n^*$ from \eqref{eq:RDPFPP_optWealth}. It is shown in the proof of Theorem \ref{thm:RDPFPP} below that $X^*_n\in\Ac_n(X^*_{n-1})$. Thus, there exists a trading strategy that starts with wealth $X^*_{n-1}$ at time $n-1$ and yields wealth $X^*_n$ at time $n$. Follow this trading strategy over time period $[n-1,n)$.
\end{description}
Repeating Steps 1-3 yields an RDPFPP $\{U_n(\cdot,\om)\}_{n=0}^{+\infty}$ along with a corresponding optimal wealth process $\{X^*_n\}_{n=0}^\infty$. This construction is forward in time, in that we only need past information to obtain $U_n$ and $X^*_n$. Note also that the construction is \emph{pathwise}. That is, to obtain a specific ``path'' $\om$, the realization $U_n(\cdot,\om)$ is obtained from past information for the same path $\om$.

Theorem \ref{thm:RDPFPP} thus reduces construction of RDPFPPs (and a corresponding optimal wealth process) to solving the integral equation \eqref{eq:IntEq_n}. The solution of this integral equation will be discussed in Sections \ref{sec:IntEq} and \ref{sec:CMIM}.

The integral equation \eqref{eq:IntEq_n} generalizes the integral equation in \cite{StrubZhou2021} and \cite{Angoshtari23}, which is obtained under the expected utility framework. To see this, assume that the probability weighting functions are identity functions, i.e. $W_n(p,\om)=p$ for $p\in[0,1]$. We then have
\begin{align}
	\int_0^\infty U_n(\xi)\, \dd \big(1-W_n(1-F_{X_n}(\xi|\Gc_n)\big)
	=\int_0^\infty U_n(\xi)\, \dd F_{X_n}(\xi|\Gc_n)
	=\Eb\left[U_n(X_n)|\Gc_n\right],
\end{align}
and an RDPFPP $\{U_n(\cdot,\om)\}_{n=0}^{+\infty}$ in Definition \ref{def:RDPFPP} becomes a PFPP (see Definition 3.1 in \cite{Angoshtari23}). From \eqref{eq:Phi}, we obtain
\begin{align}
	\Phi_n'(p,\om) &:= F^{-1}_n\big(W_n^{-1}(1-p,\om)\big|\Gc_n\big)(\om)W_n^{(-1)\prime}(1-p,\om)
	=F^{-1}_n\big(1-p\big|\Gc_n\big)(\om).
\end{align}
Thus, $\Phi_n(\cdot,\om)$ is concave and $\Pho_n'(p,\om)=\Phi_n'(p,\om)=F^{-1}_n\big(1-p\big|\Gc_n\big)(\om)$, $p\in[0,1]$. It then follows that
\begin{align}
	&\int_0^1 I_n\big(y\Pho_n'(\eta,\om),\om\big)\Pho_n'(\eta,\om)\dd \eta
	=\int_0^1 I_n\big(y F_n^{-1}(1-\eta|\Gc_n)(\om),\om\big)F_n^{-1}(1-\eta|\Gc_n)(\om)\dd \eta\\
	&=\int_0^{+\infty} I_n(y \xi,\om)\xi\, \dd\big(F_n(\xi|\Gc_n)(\om)\big)
	=\Eb\left[\rho_n I_n(y\rho_n)|\Gc_n\right](\om).
\end{align}
Condition $(ii)$ of Theorem \ref{thm:RDPFPP} then reduces to Condition $(ii)$ of Theorem 3.4 of \cite{Angoshtari23}, and the integral equation \eqref{eq:IntEq_n} becomes
\begin{align}
	\int_0^{+\infty} I_n(y \xi,\om)\xi\, \dd\big(F_n(\xi|\Gc_n)(\om)\big) = I_{n-1}(y,\om),\quad y>0.
\end{align}
This is the integral equation in  \cite{StrubZhou2021} and \cite{Angoshtari23}.

%
%
\section{The integral equation: existence and uniqueness}\label{sec:IntEq}
In the previous section, we showed that solving the pathwise integral equation \eqref{eq:IntEq_n} is the main step in the forward-in-time construction of RDPFPPs. By ignoring dependence on $n$ and $\om$, this integral equation becomes
\begin{align}\label{eq:IntEq}
	\int_0^1 I\big(y\Pho'(\eta)\big)\Pho'(\eta)\dd\eta = I_0(y),\quad y>0,
\end{align}
in which $I_0(\cdot)$ and $\Pho(\cdot)$ are known, and $I(\cdot)$ is unknown. In this section, we provide conditions on $I_0(\cdot)$ and $\Pho(\cdot)$ that guarantee existence and uniqueness of a continuous solution $I(\cdot)$ to \eqref{eq:IntEq}. We postpone discussion on the regularity of the solution to the next section. That is, in Section \ref{sec:CMIM}, we will provide additional conditions for the solution $I(\cdot)$ to be an inverse marginal (i.e. $I(\cdot)\in\Ic$).

We will provide the solution of \eqref{eq:IntEq} in Propositions \ref{prop:Volterra_S}, \ref{prop:Volterra_RS}, \ref{prop:Volterra_Conc1}, and \ref{prop:Volterra_Conc2} below and under different conditions. Our main approach (and the unifying theme of this section) is to first transform \eqref{eq:IntEq} into a linear Volterra integral equation of the form
\begin{align}\label{eq:Volterra}
	J(t) = J_0(t) + \int_0^t J(s) k(t,s)\dd s,\quad t>0,
\end{align}
in which $J(\cdot)$ is unknown and $J_0(\cdot)$ and $k(\cdot,\cdot)$ are known. It is well-known that, under certain conditions, the unique solution of \eqref{eq:Volterra} is  given by
\begin{align}\label{eq:Volterra_sol}
	J(t) = J_0(t) + \int_0^t J_0(s) \ks(t,s)\dd s,\quad t>0,
\end{align}
in which $\ks(\cdot,\cdot)$ is the so-called \emph{resolvent kernel} associated with $k(\cdot,\cdot)$. Our solutions for the integral equation \eqref{eq:IntEq} (namely, \eqref{eq:IntEq_sol_S}, \eqref{eq:IntEq_sol_RS}, \eqref{eq:IntEq_sol_Conc1}, and \eqref{eq:IntEq_sol_Conc2}) are all in the form of \eqref{eq:Volterra_sol}. In particular, the resolvent kernel $\ks(\cdot,\cdot)$ only depends on the function $\Pho(\cdot)$, while the ``initial data'' $J_0(\cdot)$ depends both on $\Pho(\cdot)$ and $I_0(\cdot)$. Therefore, one can calculate the resolvent kernel $\ks(\cdot,\cdot)$ once, and then use the formula \eqref{eq:Volterra} to solve \eqref{eq:IntEq} for different $I_0(\cdot)$.

Assumption \ref{asmp:Pho} below is our main existence-uniqueness condition for the solution of the integral equation \eqref{eq:IntEq}. Note that $\Pho(\cdot)$ in \eqref{eq:IntEq} represents the concave envelope of the function $\Phi_n(\cdot,\om)$ in \eqref{eq:Phi}. Thus, Assumption \ref{asmp:Pho} can be seen as imposing further restrictions on conditional distributions of $\{\rho_n\}_{n=1}^\infty$ in Assumption \ref{asmp:rho}, as well as the probability weighting functions $\{W_n\}_{n=1}^\infty$ of Assumption \ref{asmp:ProbDistortion}. 

\begin{assumption}\label{asmp:Pho}
	$\Pho(\cdot)$
	satisfies one of the assumptions (a), (b), or (c) below:\\[1ex]
	\noindent(a) $\Pho(\cdot)$ is affine on $[0,q_0]$ and strictly concave on $(q_0,1]$ for a constant $q_0\in(0,1)$. Specifically, 
	\begin{align}\label{eq:Pho_S}
		\Pho(q)=
		\begin{cases}
			\Phi_+'(q_0)q-1, &\quad 0\le q \le q_0,\\
			\Phi_+(q), &\quad q_0< q \le 1,
		\end{cases}
	\end{align}
	in which $\Phi_+\in \Cc^2\big([q_0,1]\big)$ is a strictly increasing and concave function satisfying $\Phi_+'(q_0)>0$, $\Phi_+''(q_0)<0$, $\Phi_+(q_0)=\Phi_+'(q_0)q_0-1$, $\Phi_+(1)=\Phi_+'(1)=0$, and Assumption \ref{asmp:GrowthCond_S}.
	\\[1ex] 
	\noindent(b) $\Pho(\cdot)$ is strictly concave on $[0,q_0)$ and affine on $[q_0,1]$ for a constant $q_0\in(0,1)$. Specifically,
	\begin{align}\label{eq:Pho_RS}
		\Pho(q)=
		\begin{cases}
			\Phi_-(q), &\quad 0\le q < q_0 ,\\
			\Phi_-'(q_0)(q-1), &\quad q_0\leq  q \le1,
		\end{cases}
	\end{align}
	in which $\Phi_-\in \Cc^2\big((0,q_0]\big)$ is a strictly increasing and concave function satisfying $\Phi_-'(q_0)>0$, $\Phi_-''(q_0)<0$, $\Phi_-(q_0)=\Phi'_-(q_0)(q_0-1)$, $\Phi_-(0)=-1$, $\Phi_-'(0^+)=+\infty$, and Assumption \ref{asmp:GrowthCond_RS}.
	\\[1ex]
	\noindent(c) $\Pho(\cdot)$ is strictly increasing and concave, thrice continuously differentiable on $(0,1)$, $\Pho(0)=-1$, and $\Pho(1)=0$. Furthermore, either Assumption \ref{asmp:GrowthCond_Conc1} holds, or Assumption \ref{asmp:GrowthCond_Conc2} holds.
	\qed
\end{assumption}

Before proceeding further, let us discuss Assumption \ref{asmp:Pho} and, in particular, its relevance to the rank-dependent utility theory (RDUT). Note that $\Pho(\cdot)$ in \eqref{eq:IntEq} represents the concave envelope of $\Phi_n(\cdot,\om)$ in \eqref{eq:Phi} which, in turn, is the counterpart of $\varphi(\cdot)$ in equation (3.1) of \cite{Xu2016}. In other words, $\Pho(\cdot)$ corresponds to $\del(\cdot)$ in equation (4.8) of \cite{Xu2016}. Therein, $\del(\cdot)$ is assumed to be absolutely continuous, strictly increasing, concave (but not necessarily strictly concave), $\del(0)=-1$, and $\del(1)=0$. $\Pho(\cdot)$ satisfies these conditions in all three cases of Assumption \ref{asmp:Pho}. \cite{Xu2016} also lists additional conditions on $\varphi(\cdot)$ which are common in RDUT literature, see Examples 3.1, 3.2, and 3.3 therein.\footnote{Note, however, that the results of \cite{Xu2016} do not rely on these additional conditions.} Specifically, it is commonly assumed that $\varphi(\cdot)$ is either S-shaped (i.e. convex-concave), reverse S-shaped (i.e. concave-convex), or strictly concave. These cases correspond respectively to parts (a), (b), and (c) of Assumption \ref{asmp:Pho}. In particular, \eqref{eq:Pho_S} (respectively, \eqref{eq:Pho_RS}) holds if $\Pho(\cdot)$ is the concave envelope of an S-shaped (respectively, a reverse S-shaped) function, while the strictly concavity of $\Pho(\cdot)$ in Assumption \ref{asmp:Pho}.(c) implies that $\Pho(\cdot)$ is the concave envelope of a strictly concave function.

\begin{remark}\label{rem:Monotonicity}
	The special monotonicity condition of \cite{JinZhou2008} is equivalent to $\Pho(\cdot)$ (and, thus, $\Phi_n(\cdot,\om)$ in \eqref{eq:Phi}) being strictly concave. Therefore, this monotonicity condition is satisfied under Assumption \ref{asmp:Pho}.(c), but not under Assumption \ref{asmp:Pho}.(a) and Assumption \ref{asmp:Pho}.(b). That is, RDPFPPs accommodate for probability distortion functions that do not satisfy the monotonicity condition of \cite{JinZhou2008}. As we discussed in the introduction, this is one characteristic that separates RDPFPs from \emph{forward rank-dependent performance criteria} developed by \cite{HeStrubZariphopoulou2021}, which are only applicable to distortion functions satisfying the monotonicity condition.\qed
\end{remark}

In addition to these common conditions, Assumption \ref{asmp:Pho} imposes further technical conditions on $\Pho(\cdot)$, namely, twice or thrice continuous differentiability, that either $\Pho'(1^-)=0$ or $\Pho'(0^+)=+\infty$, and that the derivatives of $\Pho(q)$ satisfy certain growth conditions  as $q\to0^+$ or $q\to0^-$. These conditions play two important roles. Firstly, they enable us to transform \eqref{eq:IntEq} into a linear Volterra integral equation, which forms the basis of our analysis. Secondly, they guarantee that the resulting Volterra equation has a unique solution and is well-posed.


In the following three subsections, we analyze the integral equation \eqref{eq:IntEq} and provide conditions for existence and uniqueness of its solution. Since the arguments differ depending on which of part of Assumption \ref{asmp:Pho} is holding, we present each case in a separate subsection.

%
%
\subsection{
	$\Pho(\cdot)$ is the concave envelope of an S-shaped function
}\label{subsec:S}
Consider the integral equation \eqref{eq:IntEq} and assume that $\Pho(\cdot)$ satisfies Assumption \ref{asmp:Pho}.(a). As we discussed, this is the case in which we are solving \eqref{eq:IntEq_n} and $\Phi_n(\cdot,\om)$ given by \eqref{eq:Phi} is an S-shaped function.

We start by transforming \eqref{eq:IntEq} into a Volterra integral equation, namely, \eqref{eq:Volterra_S} below. Since $\Pho(\cdot)$ satisfies \eqref{eq:Pho_S}, we have
\begin{align}
	\int_0^1 I\big(y\Pho'(\eta)\big)\Pho'(\eta)\dd \eta
	&= q_0 I\big(y\Phi_+'(q_0)\big)\Phi_+'(q_0)
	+\int_{q_0}^1 I\big(y\Phi_+'(\eta)\big)\Phi_+'(\eta)\dd \eta.
\end{align}
Therefore, the integral equation \eqref{eq:IntEq} becomes
\begin{align}\label{eq:IntEq_S0}
	I\big(y\Phi_+'(q_0)\big)
	+\int_{q_0}^1 I\big(y\Phi_+'(\eta)\big)\frac{\Phi_+'(\eta)}{q_0 \Phi_+'(q_0)}\dd \eta = \frac{I_0(y)}{q_0 \Phi_+'(q_0)},\quad y>0.
\end{align}
The following lemma reduces this integral equation into a linear Volterra integral equation. In its statement, we have defined
\begin{align}\label{eq:J0_S}
	\textstyle
	J_0(t) := \frac{1}{q_0 \Phi_+'(q_0)}I_0\left(\frac{t}{\Phi_+'(q_0)}\right),\quad t>0,
\end{align}
and
\begin{align}\label{eq:k_S}
	\textstyle
	k(t,s) :=
	\frac{s\Phi_+'(q_0)}{q_0 t^2\Phi_+''\left((\Phi_+')^{-1}\left(\frac{\Phi_+'(q_0) s}{t}\right)\right)},\quad 0<s\le t.\qquad
\end{align}

\begin{lemma}\label{lem:Volterra_S}
	Under Assumption \ref{asmp:Pho}.(a), $I(\cdot)$ satisfies \eqref{eq:IntEq_S0} if and only if it satisfies
	\begin{align}\label{eq:Volterra_S}
		I(t) = J_0(t) + \int_0^t I(s)k(t,s)\dd s,\quad t>0,
	\end{align}
	with $J_0(\cdot)$ and $k(\cdot,\cdot)$ given by \eqref{eq:J0_S} and \eqref{eq:k_S}, respectively.\qed
\end{lemma}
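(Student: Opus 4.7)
The plan is a direct change-of-variables argument, and the ``only if'' and ``if'' directions will both follow from the fact that each substitution I make is a smooth bijection (hence reversible).

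First, I would absorb the free variable into the left-hand side of \eqref{eq:IntEq_S0} by setting $t:=y\Phi_+'(q_0)$, equivalently $y=t/\Phi_+'(q_0)$. The leading term becomes simply $I(t)$, and by the definition \eqref{eq:J0_S} the right-hand side becomes exactly $J_0(t)$. Moving the remaining integral to the right, \eqref{eq:IntEq_S0} reads
\begin{align*}
I(t) \;=\; J_0(t) \;-\; \int_{q_0}^{1} I\!\left(\frac{t\,\Phi_+'(\eta)}{\Phi_+'(q_0)}\right)\frac{\Phi_+'(\eta)}{q_0\,\Phi_+'(q_0)}\,\dd\eta.
\end{align*}

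Next I would change variables in the integral via $s:=t\,\Phi_+'(\eta)/\Phi_+'(q_0)$. Under Assumption \ref{asmp:Pho}.(a), $\Phi_+\in\Cc^{2}([q_0,1])$ with $\Phi_+''<0$, so $\Phi_+'$ is continuous and strictly decreasing on $[q_0,1]$, from $\Phi_+'(q_0)>0$ down to $\Phi_+'(1)=0$; hence the map $\eta\mapsto s$ is a smooth bijection from $[q_0,1]$ onto $[0,t]$. The endpoints transform as $\eta=q_0\mapsto s=t$ and $\eta=1\mapsto s=0$, the inverse is $\eta=(\Phi_+')^{-1}(s\Phi_+'(q_0)/t)$, and $\dd\eta=\frac{\Phi_+'(q_0)}{t\,\Phi_+''(\eta)}\,\dd s$. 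Using the identity $\Phi_+'(\eta)/\Phi_+'(q_0)=s/t$, the integrand's prefactor simplifies to $s/(q_0 t)$, and substituting yields
\begin{align*}
I(t) \;=\; J_0(t) \;+\; \int_{0}^{t} I(s)\,\frac{s\,\Phi_+'(q_0)}{q_0\,t^{2}\,\Phi_+''\!\left((\Phi_+')^{-1}(s\Phi_+'(q_0)/t)\right)}\,\dd s,
\end{align*}
which is precisely \eqref{eq:Volterra_S} with the kernel \eqref{eq:k_S}. Since every operation is invertible (the linear substitution in $y$ and the diffeomorphic change of variables in $\eta\leftrightarrow s$), running the computation backwards recovers \eqref{eq:IntEq_S0} from \eqref{eq:Volterra_S}, which closes the equivalence.

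The only thing that requires care — and the place one could easily slip — is the sign accounting: $\Phi_+''<0$ makes the kernel $k(t,s)$ itself negative, while reversing the orientation $\int_{t}^{0}=-\int_{0}^{t}$ produces a second sign flip, and together these two cancel the minus sign in front of the integral on the right-hand side above. No integrability issue has to be addressed at this stage; any singular behavior of $k(t,s)$ as $s\to 0$ (i.e.\ $\eta\to 1$) is irrelevant for the algebraic equivalence claimed in the lemma and will instead be controlled later by Assumption \ref{asmp:GrowthCond_S} when establishing existence and uniqueness for the Volterra equation \eqref{eq:Volterra_S}.
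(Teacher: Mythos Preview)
Your proof is correct and follows essentially the same route as the paper: substitute $t=y\Phi_+'(q_0)$, then change variables $s=t\Phi_+'(\eta)/\Phi_+'(q_0)$ using the strict concavity and the boundary value $\Phi_+'(1)=0$, and note reversibility. Your sign discussion is slightly muddled (only the orientation reversal $\int_t^0=-\int_0^t$ is needed to cancel the leading minus; the negativity of $k$ is not itself a ``sign flip'' in the change-of-variables identity), but the computation and conclusion are right.
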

\begin{proof}
	Assume that $I(\cdot)$ satisfies \eqref{eq:IntEq_S0}. Since $\Phi_+'(q_0)>0$ by Assumption \ref{asmp:Pho}.(a), we may set $t=y\Phi_+'(q_0)$ in \eqref{eq:IntEq_S0} to obtain
	\begin{align}\label{eq:IntEq_S}
		I(t)
		+\int_{q_0}^1 I\left(\frac{t\Phi_+'(\eta)}{\Phi_+'(q_0)}\right)\frac{\Phi_+'(\eta)}{q_0 \Phi_+'(q_0)}\dd \eta 
		= \frac{1}{q_0 \Phi_+'(q_0)}I_0\left(\frac{t}{\Phi_+'(q_0)}\right),\quad t>0.
	\end{align}
	Since $\Phi_+(\cdot)$ is strictly concave (thus, $\Phi_+'(\cdot)$ is strictly decreasing) by Assumption \ref{asmp:Pho}.(a), we may change variable from $\eta$ to $s=t\Phi_+'(\eta)/\Phi_+'(q_0)$ to obtain
	\begin{align}
		\int_{q_0}^1 I\left(\frac{t\Phi_+'(\eta)}{\Phi_+'(q_0)}\right)\frac{\Phi_+'(\eta)}{q_0 \Phi_+'(q_0)}\dd \eta 
		&=-\int_0^t I(s)\,\frac{s}{q_0 t}\,\frac{\partial}{\partial s}\left[(\Phi_+')^{-1}\left(\frac{\Phi_+'(q_0) s}{t}\right)\right]\dd \eta,
	\end{align}
	in which the lower integral bound on the right side is zero since $\Phi_+'(1)=0$ by Assumption \ref{asmp:Pho}.(a). By substituting the last integral in \eqref{eq:IntEq_S} and noting that $\big((\Phi_+')^{-1})'(x)=1/\Phi_+''\big((\Phi_+')^{-1}(x)\big)$, we conclude that $I(\cdot)$ satisfies \eqref{eq:Volterra_S}. We have thus shown the ``only if'' part of the statement. The ``if'' part of the statement is shown by reversing the argument.
\end{proof}

Thus, to solve the integral equation \eqref{eq:IntEq} under Assumption \ref{asmp:Pho}.(a), we need to solve the Volterra equation \eqref{eq:Volterra_S}. The standard approach for solving the latter equation is to use successive approximation.\footnote{See, for instance, Section 1.3 of \cite{Corduneanu1991}.} There are however some technical difficulties related to the fact that both the solution $I(s)$ and the kernel $k(t,s)$ can be unbounded as $s\to 0^+$. To resolve these issues, we introduce the following growth condition on the derivatives of $\Phi_+(\cdot)$.
\begin{assumption}\label{asmp:GrowthCond_S}
	$\Phi_+(\cdot)$ satisfies the growth condition
	\begin{align}\label{eq:G_S}
		G(t):=\int_0^t g(t,s)\dd s<+\infty, \quad t>0,
	\end{align}
	in which we have defined
	\begin{align}\label{eq:g_S}
		g(t,s) :=
		\frac{1}{s}\sup_{1<\xi\le\frac{t}{s}}\left(
		\textstyle-\frac{\Phi_+'(q_0)}{q_0 \xi^2\Phi_+''\left((\Phi_+')^{-1}\left(\frac{\Phi_+'(q_0) }{\xi}\right)\right)}
		\right),
	\end{align}
	for $0<s\le t$.\qed
\end{assumption}

\begin{remark}
	Note that Assumption \ref{asmp:GrowthCond_S} allows for the limit $g(t,0^+)$ to be unbounded, as long as the integral \eqref{eq:G_S} is bounded. In light of \eqref{eq:g_S}, Assumption \ref{asmp:GrowthCond_S} is thus a condition on the behavior of $\Phi_+''\left((\Phi_+')^{-1}(\xi)\right)$ as $\xi\to+\infty$. Since $\Phi_+'(1^-)=0$ by Assumption \ref{asmp:Pho}.(a), we can also interpret Assumption \ref{asmp:GrowthCond_S} as a condition on the behavior of $\Phi_+'(q)$ and $\Phi_+''(q)$ as $q\to 1^-$. \qed
\end{remark}

The solution of \eqref{eq:Volterra_S} is expressed in terms of the so-called \emph{resolvent kernel}
\begin{align}\label{eq:resolvent_S}
	\ks(t,s) := \sum_{i=1}^\infty k_i(t,s),\quad 0<s\le t,
\end{align}
in which $\{k_i(\cdot,\cdot)\}_{i=1}^\infty$ are the \emph{iterated kernels} recursively given  by
\begin{align}\label{eq:ItrKernels_S}
	\begin{cases}
		k_1(\cdot,\cdot) := k(\cdot,\cdot),\\[1ex]
		k_i(t,s) := \int_s^t k_{i-1}(t,u)k(u,s)\dd u
		,\quad 0< s\le t,\, i\ge2,\qquad
	\end{cases}
\end{align}
with $k(\cdot,\cdot)$ in \eqref{eq:k_S}.
To show that $\ks(\cdot,\cdot)$ is well-defined and continuous, we argue as follows. Take an arbitrary $T>0$.
By \eqref{eq:g_S} and \eqref{eq:k_S}, we have
\begin{align}\label{eq:k_bound_S}
	|k(t,s)|=\frac{|k(t/s,1)|}{s}\le g(T,s),\quad 0<s\le t\le T.
\end{align}
From \eqref{eq:ItrKernels_S} and by induction on $i$, we then obtain the following bound
\begin{align}\label{eq:ItrKernels_approx_S}
	|k_i(t,s)|\le g(T,s)\,\frac{\left(\int_s^t g(T,u)\dd u\right)^{i-1}}{(i-1)!},\quad 0< s\le t\le T, \, i\ge1.
\end{align}
By Weierstrass M-test, it then follows that the series $\sum_{i=1}^\infty k_i(t,s)$ is uniformly convergent on all compact subsets of the set $\{(t,s):0<s\le t\le T\}$. Since $T$ is arbitrarily chosen, $\ks(t,s)$ is well-defined and uniformly continuous on compact subsets of $\{(t,s):0<s\le t\}$. For future reference, we also point out that
\begin{align}\label{eq:resolvent_bound_S}
	|\ks(t,s)|\le g(t,s)\ee^{G(t)},\quad 0<s\le t,
\end{align}
in which $G(\cdot)$ is given by \eqref{eq:G_S}. This inequality follows from
\begin{align}
	\left|\sum_{i=1}^m k_i(t,s)\right|&\le
	\sum_{i=1}^m|k_i(t,s)|\le
	\sum_{i=1}^m g(t,s)\frac{\left(\int_s^t g(t,u)\dd u\right)^{i-1}}{(i-1)!}\\
	\label{eq:resolvent_bound2_S}
	&\le
	g(t,s)\sum_{i=1}^\infty \frac{G(t)^{i-1}}{(i-1)!}\le g(t,s)\ee^{G(t)},
\end{align}
and then letting $m\to\infty$.

We end this subsection by providing the unique solution of the integral equation \eqref{eq:IntEq} under Assumption \ref{asmp:Pho}.(a) and additional conditions on $I_0(\cdot)$, namely, that it is continuous and satisfies the growth condition \eqref{eq:I0_S}. Note that the unique solution, given by \eqref{eq:IntEq_sol_S}, is explicitly expressed in terms of $I_0(\cdot)$ and the resolvent kernel $\ks(\cdot,\cdot)$ of \eqref{eq:resolvent_S}.

\begin{proposition}\label{prop:Volterra_S}
	Assume that $\Pho(\cdot)$ satisfies Assumption \ref{asmp:Pho}.(a), with $\Pho_+(\cdot)$ and $q_0$ in \eqref{eq:Pho_S} and $g(\cdot,\cdot)$ in \eqref{eq:g_S}. Assume further that $I_0:\Rb_+\to\Rb$ is continuous and satisfies
	\begin{align}\label{eq:I0_S}
		\int_0^t|I_0(s)|g(t,s)\dd s < +\infty,\quad t> 0.
	\end{align}
	Then, the unique continuous solution of the integral equation \eqref{eq:IntEq} is given by
	\begin{align}\label{eq:IntEq_sol_S}
		I(y) &= \frac{1}{q_0 \Phi_+'(q_0)}\left[I_0\left(\frac{y}{\Phi_+'(q_0)}\right)
		+ \int_0^y I_0\left(\frac{s}{\Phi_+'(q_0)}\right)\ks(y,s)\dd s\right]\\
		&= \frac{1}{q_0 \Phi_+'(q_0)}\left[I_0\left(\frac{y}{\Phi_+'(q_0)}\right)
		+ \int_0^1 I_0\left(\frac{y\xi}{\Phi_+'(q_0)}\right)\ks(1,\xi)\dd \xi\right],\quad y>0,
	\end{align}
	in which $\ks(\cdot,\cdot)$ is the resolvent kernel in \eqref{eq:resolvent_S}.\qed
\end{proposition}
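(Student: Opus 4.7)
The plan is to use Lemma \ref{lem:Volterra_S} to reduce \eqref{eq:IntEq} to the Volterra equation \eqref{eq:Volterra_S} and then apply the standard resolvent-kernel construction. First I would verify that the candidate \eqref{eq:IntEq_sol_S} is well-defined and continuous on $\Rb_+$: the bound \eqref{eq:resolvent_bound_S} combined with the growth condition \eqref{eq:I0_S} (after the linear change of variable $u = s/\Phi_+'(q_0)$) guarantees that $s \mapsto J_0(s)\ks(y, s)$ is absolutely integrable on $(0, y]$ for every $y > 0$, and continuity in $y$ then follows by dominated convergence. For existence, I would substitute the candidate back into the right-hand side of \eqref{eq:Volterra_S}, swap the resulting double integral by Fubini (justified again by \eqref{eq:ItrKernels_approx_S} and \eqref{eq:I0_S}), and exploit the identity $\int_u^t k(t,s)\ks(s,u)\dd s = \ks(t,u) - k(t,u)$, which causes the stray $k(t,u)$ terms to cancel and collapses the right-hand side back to $I(t)$. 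This identity itself follows from the right-convolution form $k_{i+1}(t,u) = \int_u^t k(t,s) k_i(s,u)\dd s$ of the iterated-kernel recursion, which is equivalent to the left-convolution definition \eqref{eq:ItrKernels_S} by an easy induction on $i$ using Fubini.

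For uniqueness, if $I_1, I_2$ are two continuous solutions and $D := I_1 - I_2$, then $D$ satisfies the homogeneous equation $D(t) = \int_0^t D(s) k(t,s) \dd s$. Iterating $n$ times yields $D(t) = \int_0^t D(s) k_n(t,s) \dd s$ for every $n \ge 1$, and the bound \eqref{eq:ItrKernels_approx_S} then gives $|D(t)| \le \big(\int_0^T |D(s)| g(T,s) \dd s\big) G(T)^{n-1}/(n-1)!$ on $(0,T]$. The integral on the right is finite by a routine Gronwall-type argument applied individually to $I_1$ and $I_2$ (combining \eqref{eq:I0_S} with $|k(t,s)| \le g(T, s)$ from \eqref{eq:k_bound_S}), so letting $n \to \infty$ forces $D \equiv 0$ on $(0, T]$, and since $T > 0$ is arbitrary, on all of $\Rb_+$. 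The second equality in \eqref{eq:IntEq_sol_S} is then immediate from the scaling identity $k(t,s) = (1/t) k(1, s/t)$ visible in \eqref{eq:k_S}, which propagates through the recursion \eqref{eq:ItrKernels_S} by induction to $k_i(t,s) = (1/t) k_i(1, s/t)$ and hence $\ks(t,s) = (1/t) \ks(1, s/t)$; the substitution $\xi = s/y$ in the first line of \eqref{eq:IntEq_sol_S} then delivers the second.

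The principal obstacle is the potentially singular behavior at $s = 0$, where $I_0$, $k$, and $\ks$ may all blow up. This is precisely why the growth condition \eqref{eq:I0_S} and Assumption \ref{asmp:GrowthCond_S} are imposed: they furnish integrable majorants $g(T, \cdot)$ and $|I_0|\,g(T, \cdot)$ that replace pointwise boundedness throughout the Fubini interchanges in the existence step and the iteration in the uniqueness step. A secondary, more technical, point is the equivalence of the left- and right-convolution forms of the iterated-kernel recursion, which must be verified to justify the telescoping of the series $\sum_{i\ge 1} k_i$ against $k$ in both arguments.
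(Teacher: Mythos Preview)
Your proposal is correct and takes a genuinely different route from the paper's proof. The paper proceeds by \emph{successive approximation}: it defines the Picard iterates $J_m(t)=J_0(t)+\int_0^t J_{m-1}(s)k(t,s)\dd s$, rewrites them as $J_m(t)=J_0(t)+\int_0^t J_0(s)\sum_{i=1}^m k_i(t,s)\dd s$, uses the bound \eqref{eq:ItrKernels_approx_S} and Weierstrass' M-test to prove uniform convergence on compact subsets of $(0,T]$, and then identifies the limit with \eqref{eq:IntEq_sol_S}. For uniqueness, the paper bounds $|\tilde I(t)-J_m(t)|$ by $d_m+\int_0^t|\tilde I(s)-J_m(s)|g(T,s)\dd s$ with $d_m\to 0$ and closes with the integral form of Gronwall's inequality.

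Your approach replaces the convergence argument by a direct substitution of the candidate into \eqref{eq:Volterra_S} and the resolvent identity $\int_u^t k(t,s)\ks(s,u)\dd s=\ks(t,u)-k(t,u)$; this is shorter and avoids the limiting procedure, at the price of needing the equivalence of the left- and right-convolution definitions of $k_i$ (which you correctly flag). For uniqueness you iterate the homogeneous equation and use the factorial decay of $k_n$ directly, rather than comparing to the Picard iterates and invoking Gronwall; the two arguments are essentially dual. The paper's constructive route has the mild advantage that well-definedness of the integral $\int_0^t J_0(s)\ks(t,s)\dd s$ falls out of the convergence proof, whereas you must check it separately; conversely, your verification is cleaner once the bounds \eqref{eq:resolvent_bound_S} and \eqref{eq:I0_S} are in hand. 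Both treatments rely on the same integrable majorants to deal with the singularity at $s=0$, and your remark that the ``routine Gronwall-type'' bound on $\int_0^T|I_j(s)|g(T,s)\dd s$ is the only delicate point in the uniqueness step matches the level of detail in the paper's own argument.
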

\begin{proof}
	See Appendix \ref{app:Volterra_S}.
\end{proof}

%
%
\subsection{
	$\Pho(\cdot)$ is the concave envelope of a reverse S-shaped function
}\label{subsec:RS}
Next, we solve \eqref{eq:IntEq} under Assumption \ref{asmp:Pho}.(b), which corresponds to the case in which $\Phi_n(\cdot,\om)$ given by \eqref{eq:Phi} is a reverse S-shaped function.

As in the previous subsection, we first transform the integral equation into a Volterra integral equation. Indeed, if $\Pho(\cdot)$ is of the form \eqref{eq:Pho_RS}, then the integral equation \eqref{eq:IntEq} becomes
\begin{align}\label{eq:IntEq_RS0}
	I\big(y\Phi_-'(q_0)\big)
	+\int_0^{q_0} I\big(y\Phi_-'(\eta)\big)\frac{\Phi_-'(\eta)}{(1-q_0)\Phi_-'(q_0)}\dd \eta
	= \frac{I_0(y)}{(1-q_0)\Phi_-'(q_0)},\quad y>0.
\end{align}
By changing the variable $y$ to $t:=\frac{1}{y\Phi_-'(q_0)}$ and the variable $\eta$ to  $s:=\frac{t\Phi_-'(q_0)}{\Phi_-'(\eta)}$, we obtain a Volterra integral equation for $J(\cdot)=I(1/\cdot)$. We state this result as the following lemma. In its statement, we have defined
\begin{align}\label{eq:J0_RS}
	\textstyle
	J_0(t) :=  \frac{1}{(1-q_0)\Phi_-'(q_0)}I_0\left(\frac{1}{t\Phi_-'(q_0)}\right),\quad t>0,
\end{align}
and
\begin{align}\label{eq:k_RS}
	\textstyle
	k(t,s) &:=
	\frac{t^2\Phi_-'(q_0)}{(1-q_0)s^3\Phi_-''\left((\Phi_-')^{-1}\left(\frac{\Phi_-'(q_0)t }{s}\right)\right)},\quad 0<s\le t.\qquad
\end{align}

\begin{lemma}\label{lem:Volterra_RS}
	Under Assumption \ref{asmp:Pho}.(b), $I(\cdot)$ satisfies \eqref{eq:IntEq_RS0} if and only if $J(t)=I\left(\frac{1}{t}\right)$, $t>0$, satisfies
	\begin{align}\label{eq:Volterra_RS}
		J(t) = J_0(t) + \int_0^t J(s)k(t,s)\dd s,\quad t>0,
	\end{align}
	with $J_0(\cdot)$ and $k(\cdot,\cdot)$ given by \eqref{eq:J0_RS} and \eqref{eq:k_RS}, respectively.\qed
\end{lemma}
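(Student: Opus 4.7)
The plan is to mirror the strategy used in the proof of Lemma~\ref{lem:Volterra_S}, adapting it to the reverse S-shaped setting. The new feature is that we must invert the variable (i.e.\ pass to $J(t)=I(1/t)$) in order to produce a Volterra equation with integration domain $(0,t]$, because under Assumption~\ref{asmp:Pho}.(b) the strictly concave piece of $\Pho(\cdot)$ lives on $[0,q_0)$ (and $\Phi_-'(0^+)=+\infty$), which corresponds to \emph{large} values of the argument of $I(\cdot)$ rather than small ones.

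First, I would assume that $I(\cdot)$ satisfies \eqref{eq:IntEq_RS0} and perform the change of the outer variable $y\mapsto t:=\frac{1}{y\Phi_-'(q_0)}$. Under this substitution, the left-hand term $I\bigl(y\Phi_-'(q_0)\bigr)$ becomes $I(1/t)=J(t)$, and the right-hand side becomes exactly $J_0(t)$ from \eqref{eq:J0_RS}. Next, inside the remaining integral, I would change the inner variable from $\eta\in(0,q_0]$ to
\[
s \;:=\; \frac{t\,\Phi_-'(q_0)}{\Phi_-'(\eta)}.
\]
Since Assumption~\ref{asmp:Pho}.(b) ensures $\Phi_-'(\cdot)$ is strictly decreasing on $(0,q_0]$ with $\Phi_-'(0^+)=+\infty$ and $\Phi_-'(q_0)>0$, the map $\eta\mapsto s$ is a strictly increasing bijection from $(0,q_0]$ onto $(0,t]$, which gives the correct Volterra integration range. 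Moreover $\eta=(\Phi_-')^{-1}\!\bigl(t\Phi_-'(q_0)/s\bigr)$, and differentiating yields
\[
\dd\eta \;=\; -\,\frac{t\,\Phi_-'(q_0)}{s^{2}\,\Phi_-''\!\bigl((\Phi_-')^{-1}(t\Phi_-'(q_0)/s)\bigr)}\,\dd s,
\]
which is positive because $\Phi_-''<0$.

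The key algebraic observation is that $y\,\Phi_-'(\eta) = \tfrac{1}{t\Phi_-'(q_0)}\cdot\tfrac{t\Phi_-'(q_0)}{s} = \tfrac{1}{s}$, so $I\bigl(y\Phi_-'(\eta)\bigr)=J(s)$. Combining this with the Jacobian above and the factor $\tfrac{\Phi_-'(\eta)}{(1-q_0)\Phi_-'(q_0)} = \tfrac{t}{s(1-q_0)}$ coming from the original integrand, and moving the resulting integral to the right-hand side (with the sign flip from solving for $J(t)$), I obtain
\[
J(t) \;=\; J_0(t) \;+\; \int_0^t J(s)\,\frac{t^{2}\,\Phi_-'(q_0)}{(1-q_0)\,s^{3}\,\Phi_-''\!\bigl((\Phi_-')^{-1}(\Phi_-'(q_0)t/s)\bigr)}\,\dd s,
\]
which is exactly \eqref{eq:Volterra_RS} with $k(t,s)$ as in \eqref{eq:k_RS}. (Note that $k(t,s)<0$ since $\Phi_-''<0$, matching the sign produced by the substitution.) The ``if'' direction is then obtained by reversing every step, which works since all the substitutions are bijective on the relevant domains.

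I do not expect any real obstacle here: the argument is essentially bookkeeping via the change of variables. The only point that needs care is the justification that $s=t\Phi_-'(q_0)/\Phi_-'(\eta)$ sweeps out all of $(0,t]$ as $\eta$ ranges over $(0,q_0]$, which hinges on the endpoint conditions $\Phi_-'(0^+)=+\infty$ and $\Phi_-'(q_0)>0$ built into Assumption~\ref{asmp:Pho}.(b); absent these, the integration would not start at $s=0$ and we would not recover a genuine Volterra equation.
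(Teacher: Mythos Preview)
Your proposal is correct and follows exactly the approach the paper intends: the paper omits the proof entirely, stating only that it is similar to the proof of Lemma~\ref{lem:Volterra_S}, and your two substitutions $t=\frac{1}{y\Phi_-'(q_0)}$ and $s=\frac{t\Phi_-'(q_0)}{\Phi_-'(\eta)}$ are precisely the ones indicated in the text preceding the lemma. Your bookkeeping of the Jacobian and the identification $y\Phi_-'(\eta)=1/s$ are accurate, and your remark about the endpoint conditions $\Phi_-'(0^+)=+\infty$, $\Phi_-'(q_0)>0$ being what forces the integration range to be $(0,t]$ is the right observation.
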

\begin{proof}
	The proof is similar to the proof of Lemma \ref{lem:Volterra_S}. We omit the details.
\end{proof}

As in the previous subsection, we have reduced the original integral equation \eqref{eq:IntEq} to a linear Volterra equation, namely, \eqref{eq:Volterra_RS}. To solve this equation, we need the following growth condition on $\Pho_-(\cdot)$.
\begin{assumption}\label{asmp:GrowthCond_RS}
	$\Phi_-(\cdot)$ satisfies the growth condition
	\begin{align}\label{eq:G_RS}
		G(t):=\int_0^t g(t,s)\dd s<+\infty, \quad t>0,
	\end{align}
	in which we have defined
	\begin{align}\label{eq:g_RS}
		g(t,s) :=
		\frac{1}{s}\sup_{1\le\xi\le\frac{t}{s}}\left(
		\textstyle-\frac{\Phi_-'(q_0)\xi^2}{(1-q_0)\Phi_-''\left((\Phi_-')^{-1}\big(\Phi_+'(q_0) \xi\big)\right)}
		\right),
	\end{align}
	for $0<s\le t$.\qed
\end{assumption}
\begin{remark}
	Assumption \ref{asmp:GrowthCond_RS} allows for the limit $g(t,0^+)$ to be unbounded, as long as the integral \eqref{eq:G_RS} is bounded. From \eqref{eq:g_RS}, we can thus interpret Assumption \ref{asmp:GrowthCond_RS} as a condition on the behavior of $\Phi_-''\left((\Phi_-')^{-1}(\xi)\right)$ as $\xi\to+\infty$. Since $\Phi_-'(0^+)=+\infty$ by Assumption \ref{asmp:Pho}.(b), we can also interpret Assumption \ref{asmp:GrowthCond_S} as a growth condition on $\Phi_-'(q)$ and $\Phi_-''(q)$ as $q\to 0^+$. \qed
\end{remark}

The following proposition is the main result of this subsection. It solves the integral equation \eqref{eq:IntEq} under Assumption \ref{asmp:Pho}.(b). The solution is expressed explicitly using the resolvent kernel
\begin{align}\label{eq:resolvent_RS}
	\ks(t,s) := \sum_{i=1}^\infty k_i(t,s),\quad 0<s\le t,
\end{align}
in which $\{k_i(\cdot,\cdot)\}_{i=1}^\infty$ are the iterated kernels
\begin{align}\label{eq:ItrKernels_RS}
	\begin{cases}
		k_1(\cdot,\cdot) := k(\cdot,\cdot),\\[1ex]
		k_i(t,s) := \int_s^t k_{i-1}(t,u)k(u,s)\dd u
		,\quad 0< s\le t,\, i\ge2.\qquad
	\end{cases}
\end{align}
and $k(\cdot,\cdot)$ is given by \eqref{eq:k_RS}. By the argument after \eqref{eq:ItrKernels_S} in the previous subsection, it is shown that $\ks(\cdot,\cdot)$ is well-defined and continuous.
\begin{proposition}\label{prop:Volterra_RS}
	Assume that $\Pho(\cdot)$ satisfies Assumption \ref{asmp:Pho}.(b), with $\Pho_-(\cdot)$ and $g(\cdot,\cdot)$ in \eqref{eq:Pho_RS} and \eqref{eq:g_RS}, respectively. Assume further that $I_0:\Rb_+\to\Rb$ is continuous and satisfies
	\begin{align}\label{eq:I0_RS}
		\int_0^t\left|I_0\left(\frac{1}{s}\right)\right|g(t,s)\dd s < +\infty,\quad t> 0.
	\end{align}
	Then, the unique continuous solution of the integral equation \eqref{eq:IntEq} is given by
	\begin{align}\label{eq:IntEq_sol_RS}
		I(y) = \frac{1}{(1-q_0)\Phi_-'(q_0)}\left[
		I_0\left(\frac{y}{\Phi_-'(q_0)}\right)
		+ \int_0^{\frac{1}{y}} I_0\left(\frac{1}{s\Phi_-'(q_0)}\right) \ks\left(\frac{1}{y},s\right)\dd s
		\right],\quad t>0,
	\end{align}
	in which $\ks(\cdot,\cdot)$ is the resolvent kernel in \eqref{eq:resolvent_RS}.\qed
\end{proposition}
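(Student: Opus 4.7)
The plan is to mimic the strategy used for Proposition \ref{prop:Volterra_S}, exploiting the reduction to a linear Volterra equation of the second kind provided by Lemma \ref{lem:Volterra_RS}. Specifically, I would first argue that a continuous function $I:\Rb_+\to\Rb$ solves \eqref{eq:IntEq} under Assumption \ref{asmp:Pho}.(b) if and only if $J(t):=I(1/t)$ is a continuous solution on $\Rb_+$ of the Volterra equation \eqref{eq:Volterra_RS} with data $J_0$ in \eqref{eq:J0_RS} and kernel $k$ in \eqref{eq:k_RS}. Existence and uniqueness of a continuous $J$ would then be established via the standard Neumann-series (resolvent kernel) representation
\begin{align}
J(t)=J_0(t)+\int_0^t J_0(s)\,\ks(t,s)\dd s,\quad t>0,
\end{align}
with $\ks$ defined in \eqref{eq:resolvent_RS}. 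Finally, substituting $y=1/t$ and using the explicit form of $J_0$ in \eqref{eq:J0_RS} gives \eqref{eq:IntEq_sol_RS}.

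The technical core is to verify that $\ks$ is well-defined and that the above representation actually solves \eqref{eq:Volterra_RS}. I would carry out exactly the same chain of estimates as \eqref{eq:k_bound_S}--\eqref{eq:resolvent_bound2_S}: from \eqref{eq:k_RS} and the definition of $g$ in \eqref{eq:g_RS}, one gets $|k(t,s)|\le g(T,s)$ for $0<s\le t\le T$ (note that $k(t,s)$ is homogeneous of degree $-1$ under the scaling $(t,s)\mapsto (\lambda t,\lambda s)$ only after one divides out the $t^2/s^3$ factor, so one must rewrite $k(t,s)=\tfrac{1}{s}\,\tilde k(t/s)$ for some bounded $\tilde k$; this is the reason $g$ is defined as a supremum). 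Then induction gives the bound analogous to \eqref{eq:ItrKernels_approx_S}, and Weierstrass's M-test combined with Assumption \ref{asmp:GrowthCond_RS} yields uniform convergence of \eqref{eq:resolvent_RS} on compact subsets of $\{0<s\le t\}$, together with the pointwise bound $|\ks(t,s)|\le g(t,s)\ee^{G(t)}$.

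Next, I would establish that $J$ defined by the resolvent formula is continuous and satisfies \eqref{eq:Volterra_RS}. Continuity of $J$ and the legitimacy of substituting it back into \eqref{eq:Volterra_RS} (interchanging summation and integration, and Fubini for $\int_0^t\int_0^u$ integrals) both follow from the dominating bound $|J_0(s)\,\ks(t,s)|\le |J_0(s)|\,g(t,s)\ee^{G(t)}$, which is integrable in $s$ over $(0,t]$ thanks to \eqref{eq:I0_RS}: indeed, \eqref{eq:J0_RS} identifies $|J_0(s)|$ with a constant multiple of $|I_0(1/(s\Phi_-'(q_0)))|$, which by a trivial change of variable reduces to the hypothesis \eqref{eq:I0_RS}. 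Uniqueness follows by a Gronwall-type iteration: if $J$ is a continuous solution of the homogeneous equation $J(t)=\int_0^t J(s)k(t,s)\dd s$ that is bounded on each $[0,T]$, then iterating gives $|J(t)|\le M\,G(T)^n/n!$ for every $n$ and every $t\le T$, hence $J\equiv 0$; when $J$ is only locally bounded away from $0$, one truncates on $[\varepsilon,T]$ and lets $\varepsilon\downarrow 0$.

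The anticipated main obstacle, as in the S-shaped case, is the singularity of both the kernel $k(t,s)$ and the data $J_0(s)$ as $s\downarrow 0$ (which, after inverting, corresponds to $y\to+\infty$ for $I$ and to the behavior of $\Phi_-'$ near $q=0$, where $\Phi_-'(0^+)=+\infty$). The growth condition in Assumption \ref{asmp:GrowthCond_RS} is precisely what tames this singularity, and condition \eqref{eq:I0_RS} is what guarantees $J_0$ is integrable against $g(t,\cdot)$; with these in place, all the formal manipulations of the resolvent kernel method go through. Returning to $I(y)=J(1/y)$ and unwinding the change of variable inside the integral via $s\mapsto s$ (no further change is necessary, since the representation is already written in terms of $\ks(1/y,s)$) yields \eqref{eq:IntEq_sol_RS}.
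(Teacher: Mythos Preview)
Your proposal is correct and follows essentially the same route as the paper, which in fact omits the proof entirely with the remark that it is similar to that of Proposition~\ref{prop:Volterra_S}. The only minor inaccuracy is your parenthetical about homogeneity: the kernel $k$ in \eqref{eq:k_RS} \emph{is} homogeneous of degree $-1$ (since $t^2/s^3=\tfrac{1}{s}(t/s)^2$ and the remaining factor depends only on $t/s$), so $k(t,s)=\tfrac{1}{s}\,k(t/s,1)$ directly, and $k(\cdot,1)$ need not be globally bounded---the supremum in \eqref{eq:g_RS} is over the compact range $[1,T/s]$.
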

\begin{proof}
	We omit the proof, which is similar to the proof of Proposition \ref{prop:Volterra_S} in the previous subsection.
\end{proof}

%
%
\subsection{
	$\Pho(\cdot)$ is strictly concave
}\label{subsec:Conc}
Finally, we solve the integral equation \eqref{eq:IntEq} when $\Pho(\cdot)$ satisfies Assumption \ref{asmp:Pho}.(c), which corresponds to the case when we are solving the pathwise integral equation \eqref{eq:IntEq_n} with a strictly concave $\Phi_n(\cdot,\om)$. As in the previous two subsections, we solve \eqref{eq:IntEq} by first transforming it into a linear Volterra integral equation. To this end, we need additional conditions on $\Pho(\cdot)$. Specifically, we solve \eqref{eq:IntEq} under either Assumption \ref{asmp:GrowthCond_Conc1} or Assumption \ref{asmp:GrowthCond_Conc2} below. The arguments and the form of the solution differ under each assumption, therefore, we characterize the solution separately in Propositions \ref{prop:Volterra_Conc1} and \ref{prop:Volterra_Conc2}.

We first present the solution of \eqref{eq:IntEq} under Assumption \ref{asmp:Pho}.(c) and with the conditions in Assumption \ref{asmp:GrowthCond_Conc1} below holding. In short, these additional conditions state that $\Pho'(0)$ and $\Pho''(0)$ are bounded and that $\Pho'(1)=0$, while imposing a certain asymptotic behavior on $\Pho'(q)$, $\Pho''(q)$, and $\Pho'''(q)$ as $q\to1^-$. See Remark \ref{rem:GrowthCond_Conc1} below for further discussion on the nature of the asymptotic condition.

\begin{assumption}\label{asmp:GrowthCond_Conc1}
	$\Pho'(1)=0$ and $\max\{\Pho'(0),-\Pho''(0)\}<+\infty$. Furthermore,
	\begin{align}\label{eq:GrowthCond_Conc1}
		G(t):=\int_0^t g(t,s)\dd s<+\infty, \quad t>0,
	\end{align}
	in which we have defined
	\begin{align}\label{eq:kg_Conc1}
		\begin{cases}
			g(t,s) &:=
			\frac{1}{s}\sup_{1<\xi\le\frac{t}{s}} |k(\xi,1)|,\\[1ex]
			k(t,s)&:=
			\textstyle -\frac{\Pho''(0)}{\Pho'(0)^2}\,t\,\frac{\partial^2}{\partial t\partial s}\Pho\left((\Pho')^{-1}\left(\frac{\Pho'(0)\,s}{t}\right)\right),
		\end{cases}		
	\end{align}
	for $0<s\le t$.\qed
\end{assumption}
\begin{remark}\label{rem:GrowthCond_Conc1}
	In Assumption \ref{asmp:GrowthCond_Conc1}, the kernel $k(t,s)$ can be expressed explicitly in terms of $\Pho'(\cdot)$, $\Pho''(\cdot)$, and $\Pho'''(\cdot)$ as follows
	\begin{align}\label{eq:ks_Conc1}
		k(t,s)&:=
		\textstyle
		\frac{\Pho''(0)\Pho'(0) s^2}{t^3\Pho''\left((\Pho')^{-1}\left(\frac{\Pho'(0)s}{t}\right)\right)^3}
		\left[\textstyle
		\frac{2t}{\Pho'(0)s} \Pho''\left((\Pho')^{-1}\left(\frac{\Pho'(0)s}{t}\right)\right)^2
		-\Pho'''\left((\Pho')^{-1}\left(\frac{\Pho'(0)s}{t}\right)\right)
		\right],\qquad
	\end{align}
	for $0<s\le t$. Since \eqref{eq:GrowthCond_Conc1} imposes a growth condition on $g(t,s)$ as $s\to 0^+$ and that $\Pho'(1)=0$ in Assumption \ref{asmp:GrowthCond_Conc1}, we can interpret \eqref{eq:GrowthCond_Conc1} as a condition on asymptotic behavior of $\Pho'(q)$, $\Pho''(q)$, and $\Pho'''(q)$ as $q\to1^-$.\qed
\end{remark}

The following proposition provides the solution of \eqref{eq:IntEq} under Assumptions \ref{asmp:Pho}.(c) and \ref{asmp:GrowthCond_Conc1}. Its statement uses the resolvent kernel $\ks(\cdot,\cdot)$ associated with $k(\cdot,\cdot)$ in \eqref{eq:kg_Conc1}, namely,
\begin{align}\label{eq:Resolvent_Conc1}
	\begin{cases}
		\ks(t,s) := \sum_{i=1}^\infty k_i(t,s),\\[1ex]
		k_1(\cdot,\cdot) := k(\cdot,\cdot),\\[1ex]
		k_i(t,s) := \int_s^t k_{i-1}(t,u)k(u,s)\dd u
		,\qquad
	\end{cases}
\end{align}
for $0<s\le t$. Note that
\begin{align}\label{eq:k_scaling_Conc1}
	k(t,s)=\frac{1}{s}k\left(\frac{t}{s},1\right),\quad 0<s\le t,
\end{align}
by \eqref{eq:ks_Conc1}. By following an argument similar to the one after \eqref{eq:ItrKernels_S} in Subsection \ref{subsec:S}, we can show that $\ks(\cdot,\cdot)$ given by \eqref{eq:Resolvent_Conc1} exists and is continuous.

\begin{proposition}\label{prop:Volterra_Conc1}
	Assume that $\Pho(\cdot)$ satisfies Assumptions \ref{asmp:Pho}.(c) and \ref{asmp:GrowthCond_Conc1}, with $g(\cdot,\cdot)$ given by \eqref{eq:kg_Conc1}. Assume further that $I_0\in \Cc^1(\Rb_+)$ and satisfies
	\begin{align}
		\int_0^t s |I_0'(s)| g(t,s)\dd s < +\infty,\quad t> 0.
	\end{align}
	Then, the unique continuous solution of the integral equation \eqref{eq:IntEq} is given by
	\begin{align}\label{eq:IntEq_sol_Conc1}
		I(y) &= -\frac{\Pho''(0)}{\Pho'(0)^3}\left[
		y\,I_0'\left(\frac{y}{\Pho'(0)}\right)
		+\int_0^y s\,I_0'\left(\frac{s}{\Pho'(0)}\right)\ks(y,s)\dd s
		\right]\\*
		&=-\frac{\Pho''(0)}{\Pho'(0)^3}\left[
		y\,I_0'\left(\frac{y}{\Pho'(0)}\right)
		+\int_0^1 \xi y \,I_0'\left(\frac{\xi y}{\Pho'(0)}\right)\ks(1,\xi )\dd \xi
		\right],\quad y>0,
	\end{align}
	in which $\ks(\cdot,\cdot)$ is the resolvent kernel given by \eqref{eq:Resolvent_Conc1}.\qed
\end{proposition}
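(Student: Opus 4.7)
The plan is to adapt the scheme of Lemma~\ref{lem:Volterra_S} and Proposition~\ref{prop:Volterra_S}: first reduce \eqref{eq:IntEq} to a second-kind linear Volterra equation of the form \eqref{eq:Volterra}, then apply the Neumann-series/resolvent-kernel representation \eqref{eq:Volterra_sol} to extract the explicit formula \eqref{eq:IntEq_sol_Conc1}. The novel feature relative to Lemma~\ref{lem:Volterra_S} is that $\Pho$ has no affine piece under Assumption~\ref{asmp:Pho}.(c), so the unknown $I$ cannot be isolated at a single point merely by splitting the integral; an auxiliary differentiation step is required, and this is what produces the derivative $I_0'$ appearing in \eqref{eq:IntEq_sol_Conc1}.

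Concretely, I would begin with the substitution $s=y\Pho'(\eta)$ in \eqref{eq:IntEq}. Under Assumption~\ref{asmp:Pho}.(c), $\Pho'$ is a strictly decreasing $\Cc^2$ bijection from $[0,1]$ onto $[0,\Pho'(0)]$ (with $\Pho'(0)<\infty$ by Assumption~\ref{asmp:GrowthCond_Conc1}), so this is a valid change of variable. Setting $t=y\Pho'(0)$ converts \eqref{eq:IntEq} into the first-kind Volterra equation
\begin{equation*}
-\frac{\Pho'(0)^2}{t^2}\int_0^t \frac{s\, I(s)}{\Pho''\bigl((\Pho')^{-1}(s\Pho'(0)/t)\bigr)}\,\dd s \;=\; I_0\!\left(\frac{t}{\Pho'(0)}\right).
\end{equation*}
Differentiating in $t$ via Leibniz's rule and collecting the boundary contribution at $s=t$, where $(\Pho')^{-1}(\Pho'(0))=0$ and $\Pho''(0)$ is finite and nonzero by Assumption~\ref{asmp:GrowthCond_Conc1}, yields after rearrangement
\begin{equation*}
I(t) \;=\; -\frac{t\Pho''(0)}{\Pho'(0)^3}\, I_0'\!\left(\frac{t}{\Pho'(0)}\right) \,+\, \int_0^t I(s)\, k(t,s)\,\dd s,
\end{equation*}
with $k(\cdot,\cdot)$ matching \eqref{eq:ks_Conc1} after a short direct computation. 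This is the target second-kind Volterra equation, with forcing $J_0(t):=-\frac{t\Pho''(0)}{\Pho'(0)^3}\,I_0'(t/\Pho'(0))$.

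From here the argument mirrors the proof of Proposition~\ref{prop:Volterra_S} essentially verbatim. The scaling identity \eqref{eq:k_scaling_Conc1} together with \eqref{eq:GrowthCond_Conc1} yields $|k(t,s)|\le g(t,s)$; an induction on $i$ as in \eqref{eq:ItrKernels_approx_S} then produces $|k_i(t,s)|\le g(t,s)\,G(t)^{i-1}/(i-1)!$ on compact subsets of $\{0<s\le t\}$, so the Weierstrass $M$-test gives convergence of the resolvent series \eqref{eq:Resolvent_Conc1} to a continuous kernel $\ks$. The assumed integrability $\int_0^t s\,|I_0'(s)|\,g(t,s)\dd s<\infty$ makes $J_0$ integrable against $g$, and the standard Neumann-series theory for second-kind linear Volterra equations (compare \cite{Corduneanu1991}) supplies existence, uniqueness in the class of continuous functions on $\Rb_+$, and the representation $I=J_0+\int_0^{\,\cdot}J_0(s)\ks(\cdot,s)\dd s$, which rearranges to the first line of \eqref{eq:IntEq_sol_Conc1}. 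The second line follows by the substitution $s=y\xi$ together with the scaling identity $\ks(y,s)=\ks(y/s,1)/s$ inherited from \eqref{eq:k_scaling_Conc1}.

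The step I expect to be the main technical obstacle is the ``if'' direction of the reduction. Differentiation discards a constant of integration, so a priori a continuous $I$ solving the derived second-kind equation only solves the $t$-derivative of the first-kind equation. I would close this gap by substituting the candidate solution back into the first-kind integral, applying Fubini and an integration by parts, and checking that the boundary term at $t\to 0^+$ vanishes; the finiteness of $\Pho'(0)$ and $\Pho''(0)$ granted by Assumption~\ref{asmp:GrowthCond_Conc1}, combined with the integrability hypothesis on $s I_0'(s)$ against $g$, supply the uniform bounds needed to interchange limits with integrals and to identify the constant of integration as zero.
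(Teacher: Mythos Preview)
Your proposal is correct and follows essentially the same route as the paper: the paper also substitutes $t=y\Pho'(0)$ and $s=t\Pho'(\eta)/\Pho'(0)$ to obtain the first-kind equation you wrote, differentiates in $t$ to produce the second-kind Volterra equation with $J_0(t)=-\frac{\Pho''(0)}{\Pho'(0)^3}\,t\,I_0'(t/\Pho'(0))$ and kernel \eqref{eq:kg_Conc1}, and then invokes the resolvent argument of Proposition~\ref{prop:Volterra_S} verbatim. The one place you are more careful than the paper is the reversibility of the differentiation step: the paper disposes of this with a single sentence (``the above procedure is reversible''), whereas you correctly anticipate that a constant of integration must be identified and sketch how the boundary behaviour at $t\to0^+$ handles it.
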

\begin{proof}
	See Appendix \ref{app:Volterra_Conc1}.
\end{proof}

Next, we solve \eqref{eq:IntEq} under Assumption \ref{asmp:Pho}.(c) and with the conditions in Assumption \ref{asmp:GrowthCond_Conc2} below holding. In short, these additional conditions state that $\Pho'(1)$ and $\Pho''(1)$ are bounded away from zero, that $\Pho'(0)=+\infty$, and that $\Pho'(q)$, $\Pho''(q)$, and $\Pho'''(q)$ satisfy certain asymptotic behavior as $q\to0^+$. See Remark \ref{rem:GrowthCond_Conc2} below for further discussion on the nature of this asymptotic condition.


\begin{assumption}\label{asmp:GrowthCond_Conc2}
	$\Pho'(0^+)=+\infty$ and $\min\{\Pho'(1),-\Pho''(1)\}>0$. Furthermore, 
	\begin{align}\label{eq:GrowthCond_Conc2}
		G(s):=\int_s^{+\infty} g(t,s)\dd t<+\infty, \quad s>0,
	\end{align}
	in which we have defined
	\begin{align}\label{eq:kg_Conc2}
		\begin{cases}
			g(t,s) &:=
			\frac{1}{t}\sup_{\frac{s}{t}\le \xi<1} |k(1,\xi)|\\[1ex]
			k(t,s)&:= \frac{\Pho''(1)\,s}{\Pho'(1)^2} \frac{\partial^2}{\partial s\partial t}\Pho\left((\Pho')^{-1}\left(\frac{\Pho'(1)\,t}{s}\right)\right),
		\end{cases}		
	\end{align}
	for $0<s\le t$.\qed
\end{assumption}
\begin{remark}\label{rem:GrowthCond_Conc2}
	In terms of $\Pho'(\cdot)$, $\Pho''(\cdot)$, and $\Pho'''(\cdot)$, the kernel $k(\cdot, \cdot)$ in \eqref{eq:kg_Conc2} is expressed as follows
	\begin{align}
		k(t,s)&:=
		\textstyle
		-\frac{\Pho''(1)\Pho'(1) t^2}{s^3\Pho''\left((\Pho')^{-1}\left(\frac{\Pho'(1)t}{s}\right)\right)^3}
		\left[\textstyle
		\frac{2s}{\Pho'(1)t} \Pho''\left((\Pho')^{-1}\left(\frac{\Pho'(1)t}{s}\right)\right)^2
		- \Pho'''\left((\Pho')^{-1}\left(\frac{\Pho'(1)t}{s}\right)\right)
		\right],
	\end{align}
	for $0<s\le t$. Condition \eqref{eq:GrowthCond_Conc2} is essentially a decay condition on $g(t,s)$ as $t\to +\infty$ and $s$ is fixed. Since $\Pho'(0^+)=+\infty$ in Assumption \ref{asmp:GrowthCond_Conc2} and in light of the explicit form of $k(t,s)$ given above, we can interpret \eqref{eq:GrowthCond_Conc2} as a condition on asymptotic behavior of $\Pho'(q)$, $\Pho''(q)$, and $\Pho'''(q)$ as $q\to0^+$.\qed
\end{remark}

We end this subsection (and section) by providing the solution of \eqref{eq:IntEq} under Assumptions \ref{asmp:Pho}.(c) and \ref{asmp:GrowthCond_Conc2}.

\begin{proposition}\label{prop:Volterra_Conc2}
	Assume that $\Pho(\cdot)$ satisfies Assumptions \ref{asmp:Pho}.(c) and \ref{asmp:GrowthCond_Conc2}, with $g(\cdot,\cdot)$ given by \eqref{eq:kg_Conc2}. Assume further that $I_0(\cdot)\in \Cc^1(\Rb_+)$ and satisfies
	\begin{align}
		\int_s^{+\infty} t |I_0'(t)| g(t,s)\dd t < +\infty,\quad s> 0.
	\end{align}
	Then, the unique continuous solution of the integral equation \eqref{eq:IntEq} is given by
	\begin{align}\label{eq:IntEq_sol_Conc2}
		I(y) = \frac{\Pho''(1)}{\Pho'(1)^3}\left[
		y\,I_0'\left(\frac{y}{\Pho'(1)}\right)
		+\int_y^{+\infty} t\,I_0'\left(\frac{t}{\Pho'(1)}\right)\ks(t,y)\dd t
		\right],\quad y>0.
	\end{align}
	in which $\ks(\cdot,\cdot)$ is the resolvent kernel associated with the kernel $k(\cdot,\cdot)$ given by \eqref{eq:kg_Conc2}. That is, $\ks(\cdot,\cdot)$ is given by \eqref{eq:Resolvent_Conc1} and with $k(\cdot,\cdot)$ given by \eqref{eq:kg_Conc2}.\qed
\end{proposition}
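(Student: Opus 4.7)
The plan is to adapt the strategy of Proposition~\ref{prop:Volterra_Conc1} to the complementary setting of Assumption~\ref{asmp:GrowthCond_Conc2}, where the singular endpoint of $\Pho'$ is now at $\eta = 0$ (since $\Pho'(0^+) = +\infty$) while $\eta = 1$ provides the finite anchor values $\Pho'(1), -\Pho''(1) > 0$. First I would differentiate \eqref{eq:IntEq} in $y$ to obtain
\begin{align}
\int_0^1 I'\!\big(y\Pho'(\eta)\big)[\Pho'(\eta)]^2 \dd\eta = I_0'(y), \quad y>0.
\end{align}
Because $\Pho'$ is a decreasing bijection from $(0,1)$ onto $(\Pho'(1),+\infty)$, the substitution $t = y\Pho'(\eta)/\Pho'(1)$ maps $\eta\in(0,1)$ to $t\in(y,+\infty)$. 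This mirrors the substitution in Proposition~\ref{prop:Volterra_Conc1}, with $\Pho'(1)$ now playing the role that $\Pho'(0)$ played there.

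After the change of variables, I would integrate by parts to convert $I'$ into $I$; the boundary term at $t = y$ produces the leading algebraic piece $\frac{\Pho'(1)^2}{y\Pho''(1)} I\!\left(y\Pho'(1)\right)$, while the boundary term at $t = +\infty$ vanishes by the combination of $\Pho'(0^+) = +\infty$ and a tail decay of $I$ ultimately enforced by the hypothesis on $I_0'$. Rescaling $y\Pho'(1) \mapsto y$, the equation is recast as the backward Volterra equation
\begin{align}
I(y) = J_0(y) + \int_y^{+\infty} I(t)\,k(t,y)\,\dd t, \quad y>0,
\end{align}
with $J_0(y) = \frac{\Pho''(1)}{\Pho'(1)^3}\, y\, I_0'\!\left(y/\Pho'(1)\right)$ and $k$ as in \eqref{eq:kg_Conc2}; the expression produced by integration by parts agrees with the one in Remark~\ref{rem:GrowthCond_Conc2} after a routine differentiation analogous to the one relating \eqref{eq:kg_Conc1} to \eqref{eq:ks_Conc1}.

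I would then solve this backward Volterra equation by a Neumann series, in direct parallel to the argument sketched after \eqref{eq:ItrKernels_S}. The scaling identity $k(t,s) = \frac{1}{t}\,k(1,s/t)$ implicit in \eqref{eq:kg_Conc2} gives $|k(t,s)| \le g(t,s)$, and by induction on $i$ the iterated kernels satisfy
\begin{align}
|k_i(t,s)| \le g(t,s)\,\frac{\left(\int_s^t g(u,s)\,\dd u\right)^{i-1}}{(i-1)!},\quad 0<s\le t,\, i\ge 1.
\end{align}
The tail integrability $G(s) = \int_s^{+\infty} g(t,s)\,\dd t < +\infty$ from Assumption~\ref{asmp:GrowthCond_Conc2} then implies, via the Weierstrass M-test, that $\ks = \sum_i k_i$ converges uniformly on compact subsets of $\{0 < s \le t\}$ to a continuous function with $|\ks(t,s)| \le g(t,s)\,\ee^{G(s)}$. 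The unique continuous solution of the backward Volterra equation is therefore $I(y) = J_0(y) + \int_y^{+\infty} J_0(t)\ks(t,y)\,\dd t$, and substituting the explicit form of $J_0$ yields \eqref{eq:IntEq_sol_Conc2}; the hypothesis $\int_s^{+\infty} t|I_0'(t)| g(t,s)\,\dd t < +\infty$ guarantees the right-hand side is well-defined. The main technical hurdle is rigorously justifying the vanishing of the boundary term at $+\infty$ in the integration by parts, which I would handle by first establishing an a priori tail estimate on any continuous solution of the rewritten equation using the kernel bound $|k(t,s)| \le g(t,s)$ together with the integrability hypothesis on $I_0'$.
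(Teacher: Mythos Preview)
Your plan leads to the right Volterra equation and the same resolvent formula, but it is more circuitous than the paper's route and introduces two avoidable complications.

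The paper's argument (which it says mirrors Proposition~\ref{prop:Volterra_Conc1}) proceeds in the opposite order: \emph{substitute first, then differentiate}. Starting from \eqref{eq:IntEq}, set $y=s/\Pho'(1)$ and change variables $\eta\mapsto t=s\Pho'(\eta)/\Pho'(1)$ directly in the undifferentiated equation; since $\Pho'(0^+)=+\infty$ and $\Pho'(1)>0$, this gives
\[
-\int_s^{+\infty} I(t)\,\frac{\partial}{\partial t}\Pho\!\left((\Pho')^{-1}\!\Big(\tfrac{\Pho'(1)t}{s}\Big)\right)\dd t \;=\; I_0\!\Big(\tfrac{s}{\Pho'(1)}\Big),\qquad s>0.
\]
Differentiating both sides in $s$ via Leibniz' rule produces the backward Volterra equation $I(s)=J_0(s)+\int_s^\infty I(t)k(t,s)\,\dd t$ with the same $J_0$ and $k$ you obtain. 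The two benefits are: (i) only the lower limit $t=s$ is moved, so no boundary term at $+\infty$ ever appears; and (ii) differentiation under the integral requires only continuity of $I$ and $\Cc^1$ of $I_0$, both of which are given.

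By differentiating \eqref{eq:IntEq} in $y$ first, your displayed identity $\int_0^1 I'(y\Pho'(\eta))[\Pho'(\eta)]^2\,\dd\eta = I_0'(y)$ presupposes $I\in\Cc^1$, which is not part of the hypothesis (we seek the unique \emph{continuous} solution). This means your derivation only shows that $\Cc^1$ solutions satisfy the Volterra equation, leaving uniqueness among merely continuous solutions unproved. The subsequent integration by parts then creates the boundary term at $t=+\infty$ that you flag as the main hurdle. Both issues disappear if you swap the order of operations as above; the ``main technical hurdle'' is then an artifact of the ordering rather than an intrinsic difficulty.
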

\begin{proof}
	The proof is similar to that of the proof of Proposition \ref{prop:Volterra_Conc1}. We omit the details.
\end{proof}

%
%
\section{Completely monotonic RDPFPPs}\label{sec:CMIM}
In Section \ref{sec:Construction}, we proved that constructing RDPFPPs reduces to iteratively solving a pathwise integral equation of the form \eqref{eq:IntEq}, namely,
\begin{align}\label{eq:IntEq_repeat}
	\int_0^1 I\big(y\Pho'(\eta)\big)\Pho'(\eta)\dd\eta = I_0(y),\quad y>0.
\end{align}
In Section \ref{sec:IntEq}, we provided conditions for existence and uniqueness of a continuous solution for this equation. For the iterative construction of RDPFPPs in Section \ref{sec:Construction} to be valid, we additionally need the solution of \eqref{eq:IntEq_repeat} to be an inverse marginal function $I(\cdot)\in\Ic$, with $\Ic$ given by \eqref{eq:Ic}. Finding conditions under which that latter requirement is satisfied is the goal of the current section.

Our main regularity condition is that $I_0(\cdot)$ in \eqref{eq:IntEq_repeat} is a \emph{completely monotonic inverse marginal (CMIM)} function.

\begin{definition}\label{def:CMIM}
	A function $I:\Rb_+\to\Rb_+$ is a completely monotonic inverse marginal (CMIM) function if it is continuously differentiable of all orders, $I(0)=+\infty$, $I(+\infty)=0$, and $(-1)^n\frac{\dd^n}{\dd y^n}I(y)>0$ for all $y>0$ and $n\in\{0,1,\dots\}$. We denote by $\CMIM$ the set of all CMIM functions.\qed
\end{definition}

Clearly, $\CMIM\subset \Ic$, that is, all CMIM functions are inverse marginals. Furthermore, by the celebrated Bernstein's theorem (see \cite{Bernstein1929} or \cite[Theorem 1.4]{SchillingSongVondracek2012}), $I(\cdot)\in\CMIM$ if and only if
\begin{align}\label{eq:CMIM}
	I(y) = \int_0^{+\infty} \ee^{-yz}\mu(\dd z),\quad y>0,
\end{align}
for some non-negative sigma-finite measure $\mu$ on $[0,+\infty)$ satisfying $\mu(\{0\})=0$ and $\mu(\Rb_+)=+\infty$. The last two requirements on $\mu$ follow from Inada's conditions $I'(0)=+\infty$ and $I'(+\infty)=0$, since $I(0^+)=\mu([0,+\infty))$ and $I(+\infty)=\mu(\{0\})$ by \eqref{eq:CMIM}.

As the following examples indicate, $\CMIM$ represents a rich class of (the inverse marginals of) utility functions.

\begin{example}\label{ex:CRRA}
	Inverse marginals of power and logarithmic utilities are CMIM. In fact, if
	\begin{align}
		U(x)=
		\begin{cases}
			\frac{x^{1-\gam}-1}{1-\gam},\quad \gam\in\Rb_+\backslash\{0\},\\[1ex]
			\log(x),\quad \gam=1,
		\end{cases}
	\end{align}
	then the corresponding inverse marginal is given by
	\begin{align}\label{eq:CMIM_CRRA}
		I(y)=\left(U'\right)^{-1}(y)=y^{-\frac{1}{\gam}}=\frac{1}{\Gam\left(\frac{1}{\gam}\right)}\int_0^{\infty} \ee^{-yz} z^{\frac{1}{\gam}-1}\dd z,
	\end{align}
	in which $\Gam(s)=\int_0^{+\infty} t^{s-1}\ee^{-t}\dd t$, $s>0$, is the Gamma function. In particular, for $\gam=1$ (i.e. the case of logarithmic utility), we obtain that $y^{-1}=\int_0^{\infty} \ee^{-yz} \dd z$ for $y>0$.\qed
\end{example}


\begin{example}
	\label{ex:CMIM_Special}
	\cite[Definition 4.6]{Angoshtari23} defines a CMIM function as follows
	\begin{align}\label{eq:CMIM_special}
		I(y) = \int_{\gam_1}^{\gam_2} y^{-\frac{1}{\gam}} m(\dd\gam), \quad y>0,
	\end{align}
	for constants $0<\gam_1<\gam_2$ and a Borel measure $m:\Bc(\Rb_+)\to\Rb_+$ satisfying $m\big([\gam_1,\gam_2]\big)<+\infty$. Note that if $m$ is the Dirac measure concentrated at $1/\gam$, we obtain the CRRA utilities in the previous example. 
	
	As pointed out by Remark 4.7 of \cite{Angoshtari23}, such functions are a special class of (the general) CMIM functions in  Definition \ref{def:CMIM}. This is verified by either directly checking that the requirements of Definition \ref{def:CMIM} are satisfied, or by verifying that $I(\cdot)$ in \eqref{eq:CMIM_special} satisfies the following Bernstein's representation:
	\begin{align}
		I(y)&=\int_{\gam_1}^{\gam_2} y^{-\frac{1}{\gam}} m(\dd\gam)
		=\int_{\gam_1}^{\gam_2} \frac{1}{\Gam(\frac{1}{\gam})}\int_0^\infty\ee^{-yz}z^{\frac{1}{\gam}-1}\dd z\, m(\dd\gam)
		=\int_0^\infty \ee^{-yz}\left(
		\int_{\gam_1}^{\gam_2} \frac{z^{\frac{1}{\gam}-1}}{\Gam(\frac{1}{\gam})} m(\dd \gam)
		\right)\dd z,
	\end{align}
	for $y>0$, in which we used \eqref{eq:CMIM_CRRA} in the second step. In particular, the CMIM functions with representation \eqref{eq:CMIM_special} are precisely those with a Bernstein's representation 
	$I(y)=\int_0^\infty \ee^{-yz} \mu(\dd z)$, $y>0$, in which the measure $\mu$ is absolutely continuous with respect to the Lebesgue measure and has the Radon-Nikodym derivative $\dd \mu/\dd z = \int_{\gam_1}^{\gam_2} \frac{z^{\frac{1}{\gam}-1}}{\Gam(\frac{1}{\gam})} m(\dd \gam)$ for $\gam_1$, $\gam_2$, and $m$ in \eqref{eq:CMIM_special}.\qed
\end{example}

\begin{remark}\label{rem:NotSpecial}
	It is easy to see that not all CMIM functions have the representation \eqref{eq:CMIM_special}. For instance, if the measure $\mu$ is not absolutely continuous with respect to the Lebesgue measure (i.e. that it has an atom or has a singular component), then the corresponding CMIM function given by in \eqref{eq:CMIM} does not have the representation \eqref{eq:CMIM_special}. In other words, our results on CMIM preferences generalize those of \cite{Angoshtari23} in two ways. Firstly, we consider rank-dependent formulation. Secondly, we consider the full class of CMIM preferences (as in Definition \ref{def:CMIM}), while \cite{Angoshtari23} considers only the special subclass of CMIM function in Example \ref{ex:CMIM_Special}. \qed
\end{remark}

The integral equation \eqref{eq:IntEq_repeat} can be easily solved if $I_0(\cdot)$ belongs to the special class of CMIM functions satisfying \eqref{eq:CMIM_special}. Suppose that
\begin{align}\label{eq:I0_CMIM_special}
	I_0(y) = \int_{\gam_1}^{\gam_2} y^{-\frac{1}{\gam}} m_0(\dd \gam),\quad y>0,
\end{align}
for constants $0<\gam_1<\gam_2$ and a Borel measure $m_0$ satisfying $m_0\big([\gam_1,\gam_2]\big)<+\infty$. It is then easy to verify that a solution of \eqref{eq:IntEq_repeat} is
\begin{align}\label{eq:IntEq_sol_CMIM_special}
	I(y) = \int_{\gam_1}^{\gam_2} y^{-\frac{1}{\gam}} \frac{1}{\int_0^1 \Pho'(\eta)^{1-\frac{1}{\gam}}\dd \eta} m_0(\dd \gam),\quad y>0.
\end{align}
Furthermore, if $\Pho(\cdot)$ satisfies Assumption \ref{asmp:Pho}, then the solution of the integral equation must be unique. In particular, the solution given by \eqref{eq:IntEq_sol_CMIM_special} also belongs to the special class of CMIM functions satisfying \eqref{eq:CMIM_special}.

The next theorem is our main result of this section. It states that if $I_0(\cdot)$ is a (general) CMIM function and the existence and uniqueness conditions stated in Section \ref{sec:IntEq} are satisfied, then the unique solution of the integral equation is \eqref{eq:IntEq_repeat} also a CMIM function. In particular, the solution is an inverse marginal.

\begin{theorem}\label{thm:CMIM}
	Assume that $I_0(\cdot)$ and $\Pho(\cdot)$ satisfy all the conditions in at least one of the Propositions \ref{prop:Volterra_S}, \ref{prop:Volterra_RS}, \ref{prop:Volterra_Conc1}, or \ref{prop:Volterra_Conc2}. Assume further that $I_0(\cdot)\in\CMIM$ and that
	\begin{align}\label{eq:CMIM_cond}
		(-1)^n\sum_{j=0}^\infty \frac{(-y)^j}{j!}\left(\int_0^1 \Pho'(\xi)^{j+n+1}\dd \xi\right)^{-1} >0, \quad y>0.
	\end{align}  Then, the integral equation \eqref{eq:IntEq_repeat} has a unique continuous solution $I(\cdot)$ and $I(\cdot)\in\CMIM$.\qed
\end{theorem}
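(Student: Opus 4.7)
The plan is to use the Bernstein representation \eqref{eq:CMIM} to construct $I$ as a non-negative superposition of completely monotonic functions, so that $I\in\CMIM$ can be read off directly. Since $I_0\in\CMIM$, there is a non-negative sigma-finite Borel measure $\mu_0$ on $[0,+\infty)$ with $\mu_0(\{0\})=0$ and $\mu_0((0,+\infty))=+\infty$ such that $I_0(y)=\int_0^{+\infty} e^{-yz}\mu_0(dz)$. I would seek the candidate solution in the form $I(y)=\int_0^{+\infty} H(yz)\,\mu_0(dz)$ for a suitable ``transfer'' function $H$, whose determination is the first key step.

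To identify $H$, I substitute the ansatz $I(y)=\int e^{-yz}\mu(dz)$ into \eqref{eq:IntEq_repeat}, expand the exponential in Taylor series, and formally match coefficients of $(-y)^k$ on the two sides. This yields the moment relation $A_k\,M_k(\mu)=M_k(\mu_0)$, where $A_k:=\int_0^1 \Pho'(\eta)^{k+1}d\eta$ and $M_k(\nu):=\int_0^{+\infty} z^k\,\nu(dz)$. Taking this literally leads to the ansatz $H(u):=\sum_{k=0}^{\infty}(-u)^k/(k!\,A_k)$, which by termwise computation satisfies $\int_0^1 H(u\Pho'(\eta))\Pho'(\eta)d\eta=\sum_k(-u)^k/k!=e^{-u}$. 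A Fubini interchange then gives $\int_0^1 I(y\Pho'(\eta))\Pho'(\eta)d\eta=\int_0^{+\infty} e^{-yz}\mu_0(dz)=I_0(y)$, so the candidate $I$ solves \eqref{eq:IntEq_repeat}. By the uniqueness part of whichever of Propositions \ref{prop:Volterra_S}, \ref{prop:Volterra_RS}, \ref{prop:Volterra_Conc1}, or \ref{prop:Volterra_Conc2} applies, this $I$ coincides with the continuous solution already produced in Section \ref{sec:IntEq}.

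It remains to show $I\in\CMIM$. Differentiating the series for $H$ termwise gives $(-1)^n H^{(n)}(u)=\sum_{k=0}^{\infty}(-u)^k/(k!\,A_{k+n})$, so hypothesis \eqref{eq:CMIM_cond} is exactly the statement that $H\in\CMIM$. Once $H\in\CMIM$, for each fixed $z>0$ the map $y\mapsto H(yz)$ lies in $\CMIM$, and since $\mu_0\ge 0$, the formula $(-1)^n I^{(n)}(y)=\int_0^{+\infty} z^n (-1)^n H^{(n)}(yz)\,\mu_0(dz)$ shows $(-1)^n I^{(n)}(y)>0$ for every $n\ge 0$ and $y>0$. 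The Inada boundary conditions $I(0^+)=+\infty$ and $I(+\infty)=0$ follow, respectively, from $H(0)=1/A_0>0$ combined with $\mu_0((0,+\infty))=+\infty$, and from the decay of $H(yz)$ at $y\to+\infty$ via dominated convergence.

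The main obstacles I anticipate are analytic. First, convergence of the series defining $H$ on the relevant range of $u=yz$ requires growth bounds on $A_k$, which must be extracted from Assumption \ref{asmp:Pho} and the case-specific growth conditions in Section \ref{sec:IntEq}. Second, justifying the Fubini swap between the power-series sum and the measure $\mu_0$ (which has infinite total mass and may concentrate near $0$ or $+\infty$) requires absolute-integrability estimates again rooted in those growth conditions. Finally, the Inada conditions must be recovered from the interplay of $H$ with $\mu_0$ rather than from $H$ alone, since $H$ is merely an entire function with $H(0)$ finite and is not itself an inverse marginal.
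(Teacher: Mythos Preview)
Your approach is correct and shares the same core idea as the paper: both introduce the power-series function
\[
\Psi(y)=\sum_{j=0}^{\infty}\frac{(-y)^{j}}{j!}\Big(\int_0^1\Pho'(\xi)^{j+1}\,d\xi\Big)^{-1},
\]
which is exactly your $H$, verify that it solves the integral equation with right-hand side $e^{-y}$, and then read off complete monotonicity from hypothesis \eqref{eq:CMIM_cond}. The paper also proves $a_j:=(\int_0^1\Pho'(\xi)^{j+1}d\xi)^{-1}\in(0,1)$ via Jensen/H\"older and the normalization $\Pho(1)-\Pho(0)=1$, which disposes of your first ``convergence'' obstacle immediately.

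The only organizational difference is the order of operations. The paper first writes $I$ via the resolvent formula \eqref{eq:IntEq_sol_S}, substitutes the Bernstein representation of $I_0$, and is then left with showing that the bracketed resolvent expression is positive; to do that it applies Proposition \ref{prop:Volterra_S} a second time, with $I_0$ replaced by $e^{-y}$, to identify that bracket with $\Psi$. You bypass the resolvent kernel entirely: you build the candidate $I(y)=\int_0^{+\infty}H(yz)\,\mu_0(dz)$, check by Fubini that it solves \eqref{eq:IntEq_repeat}, and invoke uniqueness once to conclude it is the solution. Your route is a little more direct and does not need the explicit resolvent representation; the paper's route makes clearer that the resolvent kernel carries all the structure. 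Either way the substance is the same.
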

\begin{proof}
	See Appendix \ref{app:CMIM}.
\end{proof}

%
%
\section{RDPFPPs in conditionally complete Black-Scholes market }\label{sec:Example_BS}
In this section, we illustrate our results in the context of the conditionally complete Black-Scholes market of Subsection \ref{subsec:GBM}. Therein, we have verified that Assumptions \ref{asmp:Filtration} and \ref{asmp:rho} are satisfied. By Theorem \ref{thm:RDPFPP} and assuming that $\{W_n(\cdot,\om)\}_{n=1}^\infty$ are (exogenous) probability weighting functions satisfying  Assumption \ref{asmp:ProbDistortion}, we may then construct RDPFPPs by the iterative procedure discussed in Section \ref{sec:Construction}. The main step of this construction is solving the integral equation \eqref{eq:IntEq_n}, that is, to find $I_n(\cdot,\om)\in\Ic$ satisfying
\begin{align}\label{eq:IntEq_n_repeat}
	\int_0^1 I_n\big(y\Pho_n'(\eta,\om),\om\big)\Pho_n'(\eta,\om)\dd \eta=I_{n-1}(y,\om),\quad y>0, 
\end{align}
in which $I_{n-1}(\cdot,\om)\in\Ic$ is known (from the previous iteration) and $\Pho_n(\cdot,\om)$ is the concave envelope of $\Phi_n(\cdot,\om)$ given by \eqref{eq:Phi}, namely,
\begin{align}\label{eq:Phi_repeat}
	\Phi_n(q,\om) 
	=-\int_0^{W_n^{-1}(1-q,\om)} \exp\left(\sqrt{2}\|\lam_n(\om)\|\erf^{\,-1}(2p-1)-\frac{1}{2}\|\lam_n(\om)\|^2\right)
	\dd p,
	\quad 0<q<1,
\end{align}
in which we used \eqref{eq:LogNormQuant}. For the rest of this section, we focus on the integral equation \eqref{eq:IntEq_n_repeat} and properties of its solution.

%
%

\begin{figure}[t]
	\centerline{
			\adjustbox{trim={0.0\width} {0.0\height} {0.0\width} {0.0\height},clip}
			{\includegraphics[scale=0.35, page=1]{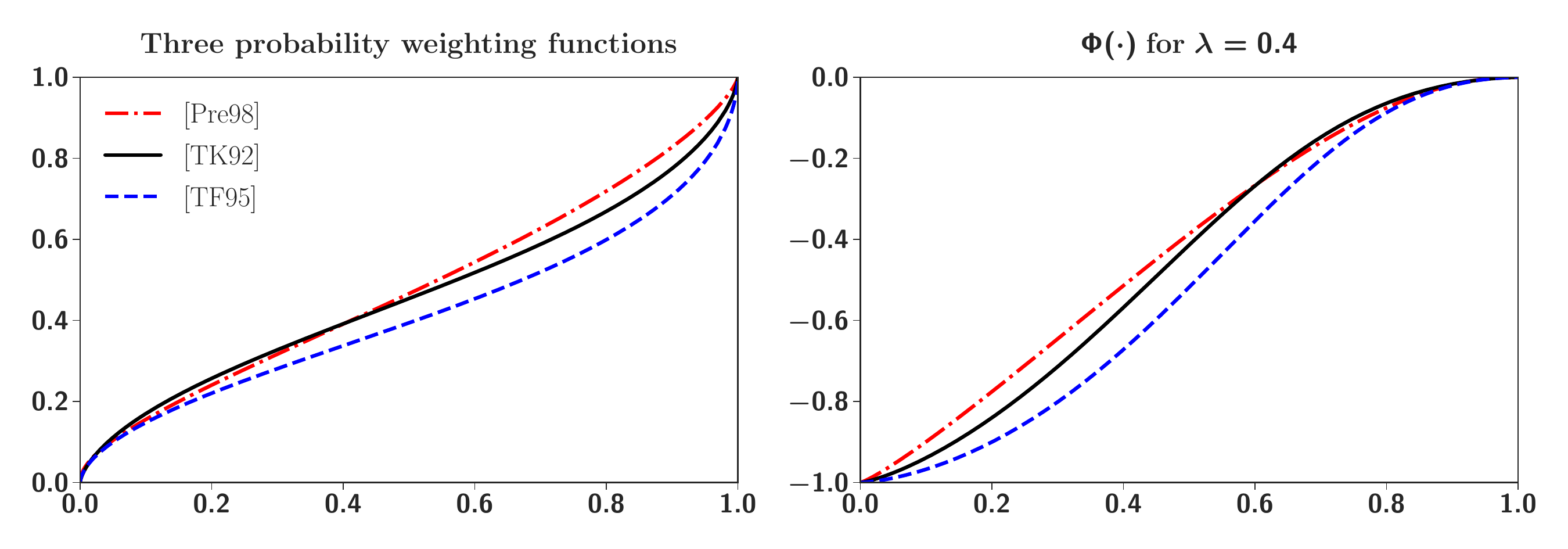}}
	}
	\caption{\textbf{On left:} Plots of three probability weighting functions: $W(p):= p^{0.69} \Big(p^{0.69}+(1-p)^{0.69}\Big)^{-\frac{1}{0.69}}$ from \cite{tversky1992advances} (the solid black curve); $W(p)=0.65 p^{0.6} \Big(0.65 p^{0.6}+(1-p)^{0.6}\Big)^{-1}$ from \cite{tversky1995weighing} (the dashed blue curve); $W(p)=\exp\left(-0.65\big(-\ln(p)\big)^{0.74}\right)$ from \cite{prelec1998probability} (red dashed--dotted curve). Parameter values are taken from various empirical studies, see page 9 of \cite{HeZhou2016}.
		\textbf{On right:} The corresponding plots of the functions $\Phi(\cdot)$ given by \eqref{eq:Phi_BS} for each probability weighting function and assuming the market Sharpe-ratio $\lam=0.4$.
		\label{fig:ProbDist}}
\end{figure}

As we did in Section \ref{sec:IntEq}, let us simplify the notation by dropping dependence on $n$ and $\om$. That is, we are looking for an inverse marginal $I(\cdot)\in\Ic$ satisfying
\begin{align}\label{eq:IntEq_repeat2}
	\int_0^1 I\big(y\Pho'(\eta)\big)\Pho'(\eta)\dd\eta = I_0(y),\quad y>0,
\end{align}
in which $I_0(\cdot)\in\Ic$ is given and $\Pho(\cdot)$ is the concave envelope of 
\begin{align}\label{eq:Phi_BS}
	\Phi(q) = -\int_0^{W^{-1}(1-q)} \exp\left(\sqrt{2}\|\lam\|\erf^{\,-1}(2p-1)-\frac{1}{2}\|\lam\|^2\right)
	\dd p,\quad 0<q<1,
\end{align}
for a probability weighting function $W(\cdot)$. Three common choices for $W(\cdot)$ are depicted in Figure \ref{fig:ProbDist} (see the caption of the Figure details). Figure \ref{fig:ProbDist} also illustrates the corresponding functions $\Phi(\cdot)$ given by \eqref{eq:Phi_BS} and assuming that the market Sharpe-ratio is $\lam=0.4$. $\Phi(\cdot)$ is computed by numerically integrating \eqref{eq:Phi_BS}. For all three probability weighting functions, the functions $\Phi(\cdot)$ are S-shaped and the results of Subsection \ref{subsec:S} are relevant. In particular, the concave envelope of $\Phi(\cdot)$ is of the form \eqref{eq:Pho_S}, namely,
\begin{align}\label{eq:Pho_S_repeat}
	\Pho(q)=
	\begin{cases}
		\Phi'(q_0)q-1, &\quad 0\le q \le q_0,\\
		\Phi(q), &\quad q_0< q \le 1,
	\end{cases}
\end{align}
in which $q_0$ is the unique solution of $\Phi_+'(q_0)q_0-\Phi(q_0)=1$. The left plots in Figure \ref{fig:Pho_ks} illustrate the concave envelope $\Pho(\cdot)$ corresponding to each $\Phi(\cdot)$ in Figure \ref{fig:ProbDist}.

Next, we compute the kernel $k(\cdot,\cdot)$ given by \eqref{eq:k_S}, namely,
\begin{align}\label{eq:k_S_repeat}
	\textstyle
	k(t,s) :=
	\frac{s\Phi'(q_0)}{q_0 t^2\Phi''\left((\Phi')^{-1}\left(\frac{\Phi'(q_0) s}{t}\right)\right)},\quad 0<s\le t.\qquad
\end{align}
Since $k(t,s) = \frac{1}{t} k\left(1,\frac{s}{t}\right)$, we only need to compute $k(1,\xi)$ for $0<\xi\le 1$. The latter function is shown by the dashed curves in the middle plots of Figure \ref{fig:Pho_ks} (as mentioned before, each row corresponds to one probability weighting function in Figure \ref{fig:ProbDist}). We then compute the resolvent kernel $\ks(1,\xi) := \sum_{i=1}^\infty k_i(1,\xi)$, $0<\xi\le 1$, with the iterated kernels $k_i(\cdot,\cdot)$ given by \eqref{eq:ItrKernels_S}, namely,
\begin{align}\label{eq:ItrKernels_S_repeat}
	k_1(1,\cdot) := k(1,\cdot),\quad
	k_i(1,\xi) := \int_\xi^1 k_{i-1}(1,u)k(u,\xi)\dd u
	,\quad 0< \xi\le 1,\, i\ge2.\qquad
\end{align}
As the rightmost plots in Figure \ref{fig:Pho_ks} confirm, $\sup_{0<\xi\le1} |k_i(1,\xi)|$ vanish for sufficiently large $i$. Thus, we obtain a good approximation of the resolvent kernel $\ks(1,\xi) := \sum_{i=1}^\infty k_i(1,\xi)$ by summing only a few iterated kernels. The resolvent kernel $\ks(1,\xi)$, $0<\xi\le1$, is shown by the red curve in the middle plots of Figure \ref{fig:Pho_ks}.

%
%

\begin{figure}[p!]
	\centerline{
			\adjustbox{trim={0.0\width} {0.0\height} {0.0\width} {0.0\height},clip}
			{\includegraphics[scale=0.3, page=1]{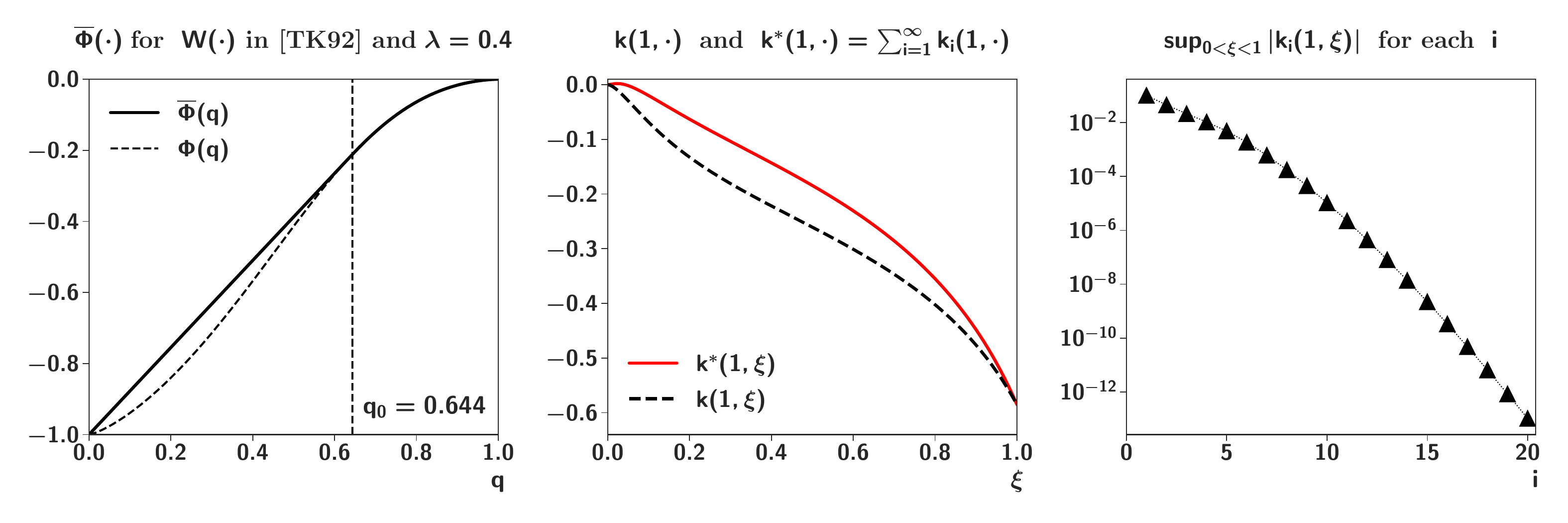}}
	}\vspace{1em}
	\centerline{
			\adjustbox{trim={0.0\width} {0.0\height} {0.0\width} {0.0\height},clip}
			{\includegraphics[scale=0.3, page=1]{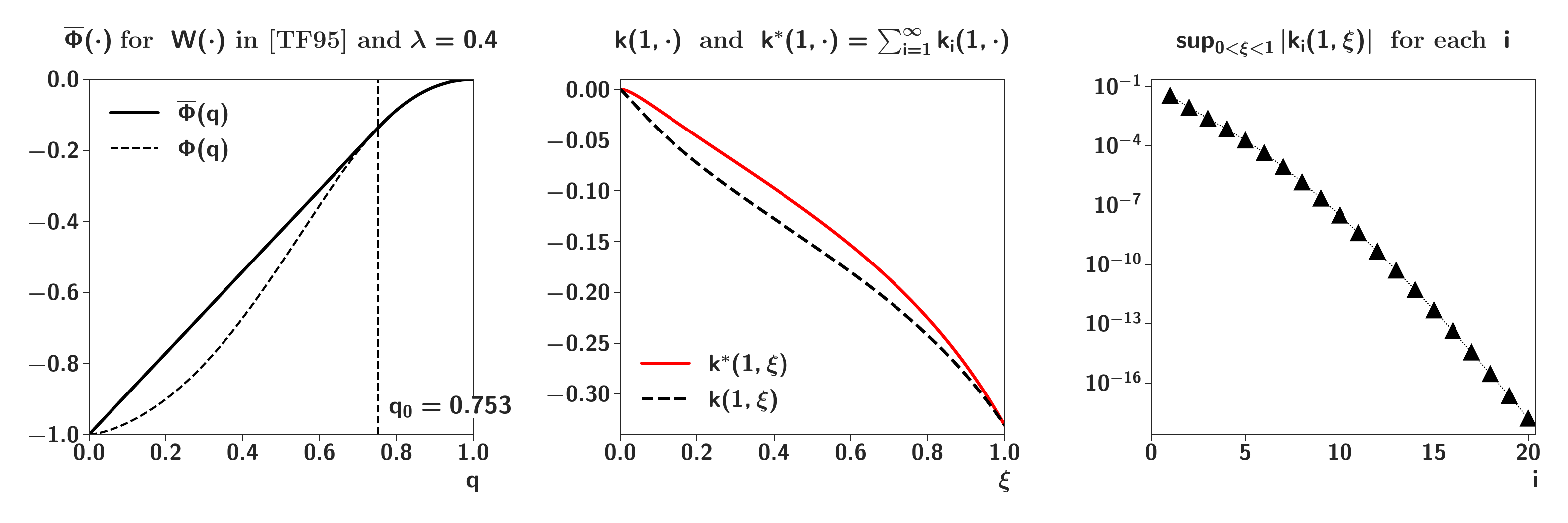}}
	}\vspace{1em}
	\centerline{
			\adjustbox{trim={0.0\width} {0.0\height} {0.0\width} {0.0\height},clip}
			{\includegraphics[scale=0.3, page=1]{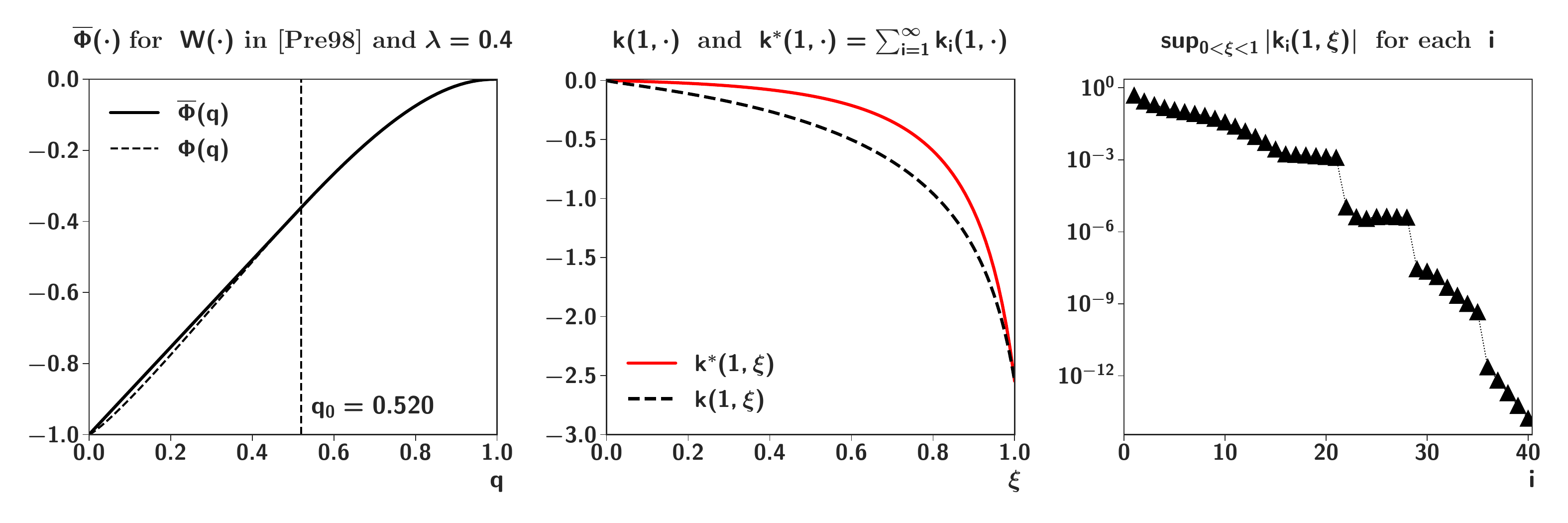}}
	}
	\caption{
		\textbf{Each row} corresponds to one probability weighting function in Figure \ref{fig:ProbDist}. \textbf{Left plots} illustrate the $\Pho(\cdot)$, the concave envelope of each $\Phi(\cdot)$ in Figure \ref{fig:ProbDist}. \textbf{Middle plots} show the corresponding kernel $k(1,\cdot)$ given by \eqref{eq:k_S_repeat} and the corresponding resolvent kernel $\ks(1,\cdot) := \sum_{i=1}^\infty k_i(1,\cdot)$, with the iterated kernels $k_i(1,\cdot)$ given by \eqref{eq:ItrKernels_S_repeat}.
		\textbf{Right plots} show the size of $\sup_{0<\xi\le1} |k_i(1,\xi)|$ for different $i$. Note that the vertical axis has a logarithmic scale. Since $k_i(1,\cdot)$ vanish for sufficiently large $i$, we can approximate $\ks(1,\cdot)$ by summing a finite number of iterated kernels.
		\vspace{2em}
		\label{fig:Pho_ks}}
\end{figure}

%
%

\begin{figure}[p!]
	\centerline{
			\adjustbox{trim={0.0\width} {0.0\height} {0.0\width} {0.0\height},clip}
			{\includegraphics[scale=0.33, page=1]{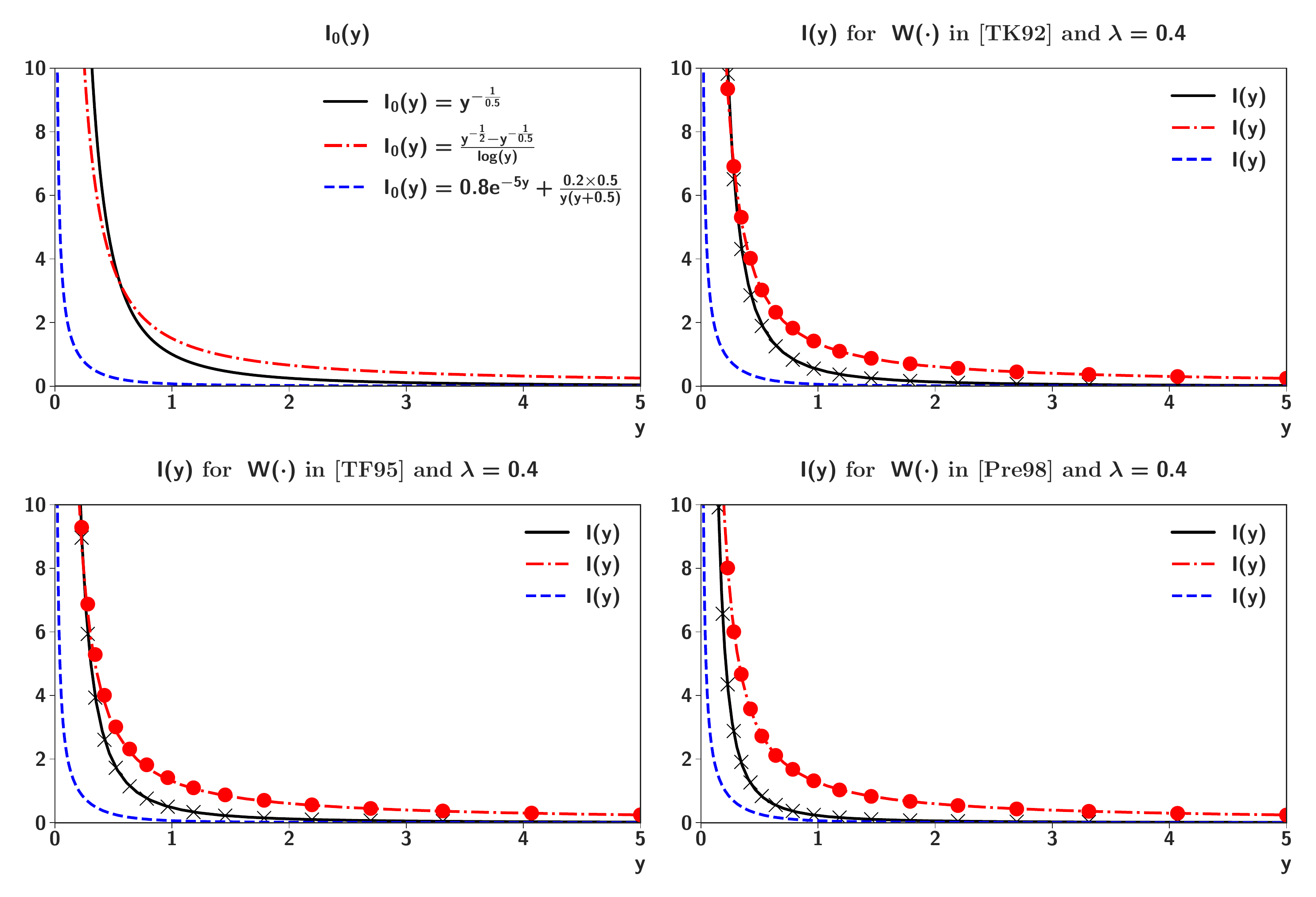}}
	}
	\caption{
		\textbf{Top left:} Plots of three initial inverse marginal functions, namely, $I_0(\cdot)$ given by \eqref{eq:I0_CRRA} (the solid black curve),  $I_0(\cdot)$ given by \eqref{eq:I0_Special} (red dashed-dotted curve), and $I_0(\cdot)$ in \eqref{eq:I0_CMIM} (the blue dashed curve). The first two belong to the special class of CMIM function with representation \eqref{eq:CMIM_special}, while the third one is a CMIM function that does not belong to this class.
		\textbf{Top right:} The solutions $I(y)$ of \eqref{eq:IntEq_repeat2} for each of the three initial inverse marginals, the \cite{tversky1992advances} probability weighting function (see Figure \ref{fig:ProbDist}), and a market Sharpe-ratio $\lam=0.4$. Each curve is computed numerically via \eqref{eq:IntEq_sol_S_repeat2}, with $q_0$, $\Phi'(\cdot)$, and $\ks(1,\cdot)$ in the top row of Figure \ref{fig:Pho_ks}. For first two inverse marginals (namely, the black solid curve and the red dashed-dotted curve), the solution $I(y)$ can be found directly via the closed-form formula \eqref{eq:IntEq_sol_CMIM_special}. These values are shown as data points (namely, black crosses and the red circles), and 		
		match the numerical values calculated via the resolvent formula \eqref{eq:IntEq_sol_S_repeat2}.
		There is no closed-form solution for the third inverse marginal (i.e. the blue dashed curve), as it does not belong the the special class of CMIM functions of the form \eqref{eq:CMIM_special}. Note also that the third curve is less sensitive to the change of the probability weighting function. 
		\textbf{Bottom:} The solutions $I(y)$ of \eqref{eq:IntEq_repeat2} for probability weighting functions in \cite{tversky1995weighing} and \cite{prelec1998probability}.
		\vspace{2em}
		\label{fig:IntEq_sol}}
\end{figure}

With the resolvent kernel $\ks(1,\cdot)$ at hand, we may obtain the solution $I(\cdot)$ to the integral equation \eqref{eq:IntEq_repeat2} by the resolvent formula \eqref{eq:IntEq_sol_S}, namely,
\begin{align}\label{eq:IntEq_sol_S_repeat2}
	I(y) &= \frac{1}{q_0 \Phi'(q_0)}
	\left[I_0\left(\frac{y}{\Phi'(q_0)}\right)
	+ \int_0^1 I_0\left(\frac{y\xi}{\Phi'(q_0)}\right)\ks(1,\xi)\dd \xi\right],\quad y>0.
\end{align}
We use this formula to numerically find $I(\cdot)$ corresponding to three different initial inverse marginals, namely,
\begin{align}\label{eq:I0_CRRA}
	I_0(y) &= y^{-\frac{1}{\gam}}, \quad y>0, \gam=0.5,\\
	\label{eq:I0_Special}
	I_0(y) &=\int_{\gam_1}^{\gam_2} y^{-\frac{1}{\gam}}\frac{1}{\gam^2}\dd \gam = \frac{y^{-\frac{1}{\gam_2}}-y^{-\frac{1}{\gam_1}}}{\log(y)},\quad y>0, \gam_1=0.5, \gam_2=2,
	\intertext{and}
	\label{eq:I0_CMIM}
	I_0(y) &=\beta \ee^{-z_0 y} + (1-\beta)\int_0^{+\infty} \ee^{-yz} (1-\ee^{-\al z})\dd z\\
	&= \beta \ee^{-z_0 y} + (1-\beta) \frac{\al}{y(y + \al)}, \quad y>0, z_0=5, \al=0.5, \beta=0.8.
\end{align}
The graphs of these functions are shown in the top-left plot of Figure \ref{fig:IntEq_sol}.

Note that \eqref{eq:I0_CRRA} is the inverse marginal of the power utility function $U_0(x)=2\sqrt{x}$, and that \eqref{eq:I0_Special} belongs to the special class of CMIM functions with representation \eqref{eq:CMIM_special}. For these two initial inverse marginals, we may use the closed-form formula \eqref{eq:IntEq_sol_CMIM_special} to obtain the solution of the integral equation \eqref{eq:IntEq_sol_S} without using the resolvent formula \eqref{eq:IntEq_sol_S_repeat2}. Thus, we may use these to cases to confirm that the numerical solution obtained by \eqref{eq:IntEq_sol_S_repeat2} is correct.

There is no closed-form solution for the third initial inverse marginal \eqref{eq:I0_CMIM}, since it does not belong to the special class of CMIM function with representation \eqref{eq:CMIM_special}. In particular, this $I_0(y)$ has the Bernstein's representation \eqref{eq:CMIM} with $\mu(\dd z) = 0.8\del_5(\dd z) + (1-\ee^{-0.5 z})\dd z$, in which $\del_5$ is Dirac's measure concentrated at $5$. As we discussed in Remark \ref{rem:NotSpecial}, $I_0(y)$ cannot have the representation \eqref{eq:CMIM_special} since the measure $\mu(\cdot)$ in its Bernstein's representation is atomic.

The top-right plot of Figure \ref{fig:IntEq_sol} show the solutions $I(y)$ of the integral equation \eqref{eq:IntEq_repeat2} corresponding to each of the above three initial inverse marginals, the probability weighting function from \cite{tversky1992advances} (see Figure \ref{fig:ProbDist}), and a market Sharpe-ratio $\lam=0.4$. Each $I(y)$ is computed by numerical integration and formula \eqref{eq:IntEq_sol_S_repeat2} with $q_0$, $\Phi'(\cdot)$ and $\ks(1,\cdot)$ in the top row of Figure \ref{fig:Pho_ks}. For the first two inverse marginals (namely, the black solid curve and the red dashed-dotted curve), we have also shown the value calculated via the closed-form solution \eqref{eq:IntEq_sol_CMIM_special} as data points (namely, black crosses and the red circles). The exact values match with the numerical values calculated via the resolvent formula \eqref{eq:IntEq_sol_S_repeat2}. Note also that the third $I(y)$ (i.e. the blue dashed curve) is almost the same as its initial inverse marginal \eqref{eq:I0_CMIM}. As we already mentioned, there is no closed-form solution for the third $I_0(\cdot)$ in \eqref{eq:I0_CMIM}.

The bottom plots of Figure \ref{fig:IntEq_sol} repeat the experiment using the probability weighting functions in \cite{tversky1995weighing} and \cite{prelec1998probability}. The solutions are slightly different from that of the first probability weighting function, but the plots are described similarly as the top-right plot.

\section{Conclusion and future work}\label{sec:conclude}

We have introduced the notion of rank-dependent predictable forward performance processes that combines rank-dependent utility framework with predictable forward performance processes. We proved that existence of RDPFPPs reduces to solving an integral equation that generalizes the integral equation derived in earlier studies for predictable forward performance processes. We also provided a new approach for solving the integral equations using the theory of Volterra integral equations. For the class of completely monotonic inverse marginal functions, we showed that the solutions obtain by successively solving the integral equation remain within this class.

One future work is to consider a more substantial application of RDPFPP by building on the numerical examples in Section \ref{sec:Example_BS}. In particular, one can consider a specific trading scenario, and study the behavior of optimal wealth generated by different RDPFPPs. As another research direction, one can consider RDPFPPs in a conditionally complete market model with an atomic pricing kernel, building on the analysis in \cite{Xu2014}. As other more challenging future research, one may consider markets that are incomplete (within each trading period), or try to incorporate robust controls.

%
%
\printbibliography 

\appendix

%
%
\section{Proof of Theorem \ref{thm:RDPFPP}}\label{app:RDPFPP}
We need to check that $\{U_n\}_{n=0}^\infty$ satisfies the three conditions in Definition \ref{def:RDPFPP}. Accordingly, we divide the proof into three parts.\vspace{1ex}

\noindent\textbf{Condition $(i)$ of Definition \ref{def:RDPFPP}:} The $\Bc(\Rb_+)\times\Gc_n$-measurability of $U_n(x,\om)$ follows from \eqref{eq:RDPFPP} and the $\Bc((0,1))\times\Gc_n$-measurability of $\Pho_n(p,\om)$ given by \eqref{eq:Pho}. Furthermore, by differentiating \eqref{eq:RDPFPP} with respect to $x$, we obtain that $U_n'(\cdot,\om)= I_n^{-1}(\cdot,\om)$
and, thus, $\left(U_n'(\cdot,\om)\right)^{-1}\in\Ic$ for $\Pb$-almost all $\om$. It then follows that $U_n(\cdot,\om)\in\Uc$ for $\Pb$-almost all $\om$.\vspace{1ex}

\noindent\textbf{Condition $(ii)$ of Definition \ref{def:RDPFPP}:} It suffices to show that
\begin{align}\label{eq:Cond_ii}
	U_{n-1}(x,\om) \ge \int_0^\infty U_{n}(\xi,\om) \dd \Big(1-W_n\big(1-F_{X}(\xi|\Gc_n)(\om),\om\big)\Big),
\end{align}
for all $n\ge1$, $x>0$, $X\in\Ac_n\big(x\big)$, and $\Pb$-almost all $\om$. For a fixed $\om$, \eqref{eq:Cond_ii} holds if
\begin{align}\label{eq:RDProblem}
	\begin{cases}
		U_{n-1}(x,\om) = 
		\displaystyle
		\sup_{X} \int_0^\infty U_{n}(\xi,\om) \dd \Big(1-W_n\big(1-F_{X}(\xi|\Gc_n)(\om),\om\big)\Big),\\[1ex]
		\displaystyle
		\text{subject to: } X \text{ is $\Fc_n$-measurable},\, X\ge0,\,\text{ and } \Eb[\rho_n X|\Fc^+_{n-1}]= x.
	\end{cases}
\end{align}
Assuming that $n$ and $\om$ are fixed, \eqref{eq:RDProblem} can be viewed as a portfolio choice problem under a rank-dependent preference, in which $U_{n-1}(\cdot,\om)$ is the value function and $U_n(\cdot,\om)$ is the terminal utility function. By following the arguments in Sections 2 and 3 of \cite{Xu2016},\footnote{Similar results are also provided by \cite{XiaZhou2016}, among others. Note, however, that our argument uses a change-of-variable introduced by \cite{Xu2016}, see Section 3 therein.} it follows that \eqref{eq:Cond_ii} holds if
\begin{align}
	U_{n-1}(x,\om) &= \int_0^\infty U_{n}(\xi,\om) \dd \Big(1-W_n\left(1-F_{\Xt_{n,x}}(\xi|\Gc_n)(\om),\om\right)\Big)\\*
	\label{eq:Cond_ii_alt1}
	&= \int_0^1 U_{n}\big(Q^*_n(p,x,\om),\om\big)\dd p,
\end{align}
for all $n\ge1$, all $x>0$, and $\Pb$-almost all $\om$, in which
\begin{align}\label{eq:Cond_ii_alt2}
	\Xt_{n,x}(\om) := 
	Q^*_n\Big(1-W_n\big(F_n(\rho_n(\om)|\Gc_n)(\om),\om\big), x, \om\Big),
\end{align}
and $Q^*_n(p,x,\om)$ is the solution of the \emph{quantile optimization problem}\footnote{\eqref{eq:QuantForm} is not the usual quantile formulation of the rank-dependent optimal portfolio problem \eqref{eq:RDProblem}. It is obtained \emph{after} applying a change of variable to the quantile formulation, as discussed in Section 3 of \cite{Xu2016}.}
\begin{align}\label{eq:QuantForm}
	\begin{cases}
		\displaystyle
		\sup_{Q} \int_0^1 U_{n}\big(Q(p),\om\big)\dd p,\\[1ex]
		\displaystyle
		\text{Subject to: }	Q\in\Qc \text{ and } \int_0^1 Q(p)\Phi_n'(p,\om)\dd p = x.
	\end{cases}
\end{align}
In \eqref{eq:QuantForm}, $\Qc$ is the set of all quantile functions of non-negative random variables, namely,
\begin{align}\label{eq:Qc}
	\Qc := \Big\{
	Q:(0,1)\to[0,+\infty), 
	\text{non-decreasing, right-continuous with left limit}
	\Big\}.
\end{align}
To show that Condition $(ii)$ of Definition \ref{def:RDPFPP} is satisfied, we thus need to show that \eqref{eq:Cond_ii_alt1} holds, with $\Xt_{n,x}$ given by \eqref{eq:Cond_ii_alt2} and $Q_n^*$ being the optimizer of \eqref{eq:QuantForm}.

For the remainder of the proof (that is, the second part of the proof), fix an $n\ge1$ and an $\om\in\Om$ such that the $\Pb$-almost surely properties of $I_n$, $I_{n-1}$, $\Phi_n$, and $\Pho_n$ hold.
As we argued in the proof of Condition $(i)$ above, we have $U_{n-1}'(\cdot,\om)=I_{n-1}^{-1}(\cdot,\om)$. Thus, by Condition $(ii)$ of Theorem \ref{thm:RDPFPP}, we have
\begin{align}\label{eq:IntEq_aux1}
	\int_0^1 I_n\big(U_{n-1}'(x,\om)\Pho'_n(p,\om),\om\big)\Pho'_n(p,\om)\dd p = x.	
\end{align}
By using Lemma \ref{lem:Aux_integ_ident} in the Appendix \ref{app:RDPFPP_aux}, we obtain that
\begin{align}\label{eq:IntEq_aux2}
	\int_0^1 I_n\big(U_{n-1}'(x,\om)\Pho'_n(p,\om),\om\big)\Phi'_n(p,\om)\dd p = x.	
\end{align}
Theorem 4.1 of \cite{Xu2016} then yields that the optimizer of \eqref{eq:QuantForm} is
\begin{align}\label{eq:QStar}
	Q_n^*(p,x,\om) = I_n\left(U_{n-1}'(x,\om)\Pho_n'(p,\om),\om\right),\quad p\in(0,1), x>0.
\end{align}
Let $\Xt_n$ be given by \eqref{eq:Cond_ii_alt2}, and define
\begin{align}
	\Ut_{n-1}(x,\om) &:= \int_0^1 U_{n}\big(Q^*_n(p,x,\om),\om\big)\dd p\\*
	\label{eq:Cond_ii_alt3}
	&= \int_0^\infty U_{n}(\xi,\om) \dd \Big(1-W_n\left(1-F_{\Xt_{n,x}}(\xi|\Gc_n)(\om),\om\right)\Big),\quad x>0,
\end{align}
in which the second equality follows from the arguments in Sections 2 and 3 of \cite{Xu2016}. To show \eqref{eq:Cond_ii_alt1} (and thus complete this part of the proof), it only remains to show that $\Ut_{n-1}(\cdot,\om)=U_{n-1}(\cdot,\om)$.

By \eqref{eq:QStar} and \eqref{eq:Cond_ii_alt3}, we have
\begin{align}\label{eq:U_n_Recursion}
	\Ut_{n-1}(x,\om) = \int_0^1 U_{n}\Big(I_n\big(U_{n-1}'(x,\om)\Pho_n'(p,\om),\om\big),\om\Big)\dd p,\quad x>0.
\end{align}
Differentiate \eqref{eq:U_n_Recursion} with respect to $x$ to obtain
\begin{align}
	\Ut_{n-1}'(x,\om)&=\int_0^1 U_n'\Big(I_n\big(U_{n-1}'(x)\Pho_n'(p)\big)\Big)I_n'\big(U_{n-1}'(x)\Pho'_n(p)\big)U_{n-1}''(x)\Pho_n'(p)\dd p\\
	&= U_{n-1}(x)\int_0^1 I_n'\big(U_{n-1}'(x)\Pho_n'(p)\big) \left(\Pho_n'(p)\right)^2U_{n-1}''(x)\dd p\\
	\label{eq:Ut_U_p}
	&=U_{n-1}'(x,\om),
\end{align}
in which we have ignored dependence on $\om$ in the intermediate steps, 
and the last step follows from \eqref{eq:IntEq_aux1}. In particular, differentiating \eqref{eq:IntEq_aux1} with respect to $x$ yields that
\begin{align}
	\int_0^1 I_n'\big(U_{n-1}'(x,\om)\Pho_n'(p,\om),\om\big) \left(\Pho_n'(p,\om)\right)^2U_{n-1}''(x,\om)=1.
\end{align}
Furthermore, \eqref{eq:RDPFPP} and \eqref{eq:Cond_ii_alt3} yield
\begin{align}
	\Ut_{n-1}\big(I_{n-1}(1,\om),\om\big) &= \int_0^1 U_{n}\Big(I_n\big(\Pho_n'(q,\om),\om\big),\om\Big)\dd q\\
	&= \int_0^1
	\left[U_{n-1}\big(I_{n-1}(1,\om),\om\big)+ \int_0^1\int_{I_n\big(\Pho_n'(p,\om),\om\big)}^{I_n\big(\Pho_n'(q,\om),\om\big)} I_n^{-1}(\xi,\om)\dd \xi\dd p\right]\dd q\\
	&= U_{n-1}\big(I_{n-1}(1,\om),\om\big) + \int_0^1
	\int_0^1\int_{I_n\big(\Pho_n'(q,\om),\om\big)}^{I_n\big(\Pho_n'(p,\om),\om\big)} I_n^{-1}(\xi,\om)\dd \xi\dd q\dd p\\
	&= U_{n-1}\big(I_{n-1}(1,\om),\om\big)+\int_0^1\int_0^1\int_1^{I_n\big(\Pho_n'(p,\om),\om\big)} I_n^{-1}(\xi,\om)\dd \xi\dd q\dd p\\
	&\qquad{}-\int_0^1\int_0^1\int_1^{I_n\big(\Pho_n'(q,\om),\om\big)} I_n^{-1}(\xi,\om)\dd \xi\dd q\dd p\\
	\label{eq:Ut_U_1}
	&=U_{n-1}\big(I_{n-1}(1,\om),\om\big).
\end{align}
From \eqref{eq:Ut_U_p} and \eqref{eq:Ut_U_1}, it follows that $\Ut_{n-1}(x,\om)=U_{n-1}(x,\om)$ for all $x>0$. Since the choice of $n$ and $\om$ was arbitrary, we have established validity of \eqref{eq:Cond_ii_alt1}, which concludes the second part of the proof.\vspace{1ex}

\noindent\textbf{Condition $(iii)$ of Definition \ref{def:RDPFPP}:} Take an arbitrary $x_0>0$. We will show that $\{X^*_n\}_{n=0}^\infty$, defined in the statement of Theorem \ref{thm:RDPFPP}, is an optimal portfolio.

First, we check that $\{X^*_n\}_{n=0}^\infty$ is an admissible wealth process. \eqref{eq:Cond_ii_alt2} and \eqref{eq:QStar} yield
\begin{align}\label{eq:Xt_n}
	\Xt_{n,x}(\om) = I_n\Bigg(U_{n-1}'(x,\om)\Pho'_n\bigg(1-W_n\Big(F_n\big(\rho_n(\om)\big|\Gc_n\big)(\om),\om\Big),\om\bigg),\om\Bigg),
\end{align}
for all $n\ge1$, $x>0$, and $\Pb$-almost all $\om$. By \eqref{eq:RDPFPP_optWealth}, we then have $X^*_n(\om) = \Xt_{n,X^*_{n-1}(\om)}(\om)$ for all $n\ge1$ and $\Pb$-almost all $\om$. It follows from Definition \ref{def:Admis} that   $\{X^*_n\}_{n=0}^\infty\in\Ac$ if $\Xt_{n,x}\in\Ac_n(x)$ for all $n\ge1$ and $x>0$. The latter statement is checked as follows. $\Xt_{n,x}$ in \eqref{eq:Xt_n} is $\Fc_n$-measurable since $\rho_n$ is $\Fc_n$-measurable and $I_n$, $U'_{n-1}$, $\Pho_n'$, and $W_n$ are $\Gc_n$-measurable. Furthermore, $\Xt_{n,x}>0$ since $I_n(\cdot)>0$. Finally, that $\Eb[\rho_n \Xt_{n,x}|\Fc^+_{n-1}]=x$ is shown as follows. For $\Pb$-almost all $\om$, by \eqref{eq:Xt_n} and the conditional independence property of $\rho_n$ in Assumption \ref{asmp:rho}, we have
\begin{align}
	&\Eb\left[\rho_n \Xt_{n,x}\middle|\Fc^+_{n-1}\right](\om) = \Eb\left[\rho_n I_n\Bigg(U_{n-1}'(x)\Pho'_n\bigg(1-W_n\Big(F_n\big(\rho_n\big|\Gc_n\big)\Big)\bigg)\Bigg)\middle|\Fc^+_{n-1}\right](\om)\\
	&= \Eb\left[\rho_n I_n\Bigg(U_{n-1}'(x)\Pho'_n\bigg(1-W_n\Big(F_n\big(\rho_n\big|\Gc_n\big)\Big)\bigg)\Bigg)\middle|\Gc_n\right](\om)\\
	&= \int_0^{+\infty} \rho\,I_n\Bigg(U_{n-1}'(x,\om)\Pho'_n\bigg(1-W_n\Big(F_n\big(\rho\big|\Gc_n\big)(\om),\om\Big),\om\bigg),\om\Bigg)\dd \Big(F_n\big(\rho\big|\Gc_n\big)(\om)\Big)\\
	&= \int_0^1
	\,I_n\Bigg(U_{n-1}'(x,\om)\Pho'_n\bigg(1-W_n\Big(q,\om\Big),\om\bigg),\om\Bigg) F^{-1}_{n}(q|\Gc_n)(\om)\dd q\\
	&= \int_0^1
	\,I_n\Bigg(U_{n-1}'(x,\om)\Pho'_n(p,\om),\om\Bigg) F^{-1}_n\big(W_n^{-1}(1-p,\om)\big|\Gc_n\big)(\om)W_n^{(-1)\prime}(1-p,\om)\dd p\\
	&=\int_0^1 I_n\big(U_{n-1}'(x,\om)\Pho'_n(p,\om),\om\big)\Phi'_n(p,\om)\dd p = x,
\end{align}
in which we used \eqref{eq:Phi} for the second to last step, and \eqref{eq:IntEq_aux2} for the last one.

It only remains to show that $\{X^*_n\}_{n=0}^\infty$ is optimal. 
By \eqref{eq:Cond_ii_alt1}, we have
\begin{align}
	U_{n-1}(X^*_{n-1}(\om),\om) &= \int_0^\infty U_{n}(\xi,\om) \dd \left(1-W_n\left(1-F_{\Xt_{n,X^*_{n-1}(\om)}}(\xi|\Gc_n)(\om),\om\right)\right)\\
	&= \int_0^\infty U_n(\xi,\om)\, \dd \big(1-W_n(1-F_{X^*_n}(\xi|\Gc_n)(\om),\om\big),
\end{align}
in which the second equality holds since, as we have already shown, $X^*_n = \Xt_{n,X^*_{n-1}}$.

%
%
\section{An Auxiliary result for the proof of Theorem \ref{thm:RDPFPP}}\label{app:RDPFPP_aux}

We have used the following lemma to obtain \eqref{eq:IntEq_aux2} in the previous Appendix.

\begin{lemma}\label{lem:Aux_integ_ident}
	$\int_0^1 I_n\big(\lam\Pho'_n(p,\om),\om\big)\Phi'(p,\om)\dd p=\int_0^1 I_n\big(\lam\Pho'_n(p,\om),\om\big)\Pho'(p,\om)\dd p$ for all $\lam>0$.
\end{lemma}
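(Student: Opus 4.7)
The plan is to exploit the fact that $\Pho_n(\cdot,\om)$ is the concave envelope of $\Phi_n(\cdot,\om)$, so the two functions coincide off an open ``flat'' set on which $\Pho_n'$ is constant and $I_n\bigl(\lam\Pho_n'(p,\om)\bigr)$ may therefore be pulled outside the integral.

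Fix $\om$ and write $\Phi = \Phi_n(\cdot,\om)$, $\Pho=\Pho_n(\cdot,\om)$, $I=I_n(\cdot,\om)$. First I would define the ``lift'' set $D:=\{p\in[0,1]:\Pho(p)>\Phi(p)\}$. Since $\Phi$ and $\Pho$ are continuous, $D$ is open in $[0,1]$, so it decomposes as an at-most-countable disjoint union of open intervals, $D=\bigsqcup_k (a_k,b_k)$. By a standard property of the concave envelope, on each component $(a_k,b_k)$ the function $\Pho$ is affine; and by continuity $\Pho(a_k)=\Phi(a_k)$ and $\Pho(b_k)=\Phi(b_k)$. Hence $\Pho'(p)=c_k$ for all $p\in(a_k,b_k)$, where
\begin{align}
c_k=\frac{\Pho(b_k)-\Pho(a_k)}{b_k-a_k}=\frac{\Phi(b_k)-\Phi(a_k)}{b_k-a_k}.
\end{align}
On the complementary closed set $[0,1]\setminus D$ one has $\Pho=\Phi$, and since both functions are absolutely continuous (as noted just after \eqref{eq:Pho}), Lebesgue's theorem gives $\Pho'=\Phi'$ almost everywhere on $[0,1]\setminus D$.

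Combining these observations,
\begin{align}
\int_0^1 I\bigl(\lam\Pho'(p)\bigr)\bigl[\Pho'(p)-\Phi'(p)\bigr]\,\dd p
=\sum_k I(\lam c_k)\int_{a_k}^{b_k}\bigl[c_k-\Phi'(p)\bigr]\,\dd p,
\end{align}
because the integrand vanishes a.e.\ on $[0,1]\setminus D$ and, on each $(a_k,b_k)$, $I(\lam\Pho'(p))=I(\lam c_k)$ is constant. Evaluating the inner integral by the fundamental theorem of calculus for the absolutely continuous function $\Phi$ gives
\begin{align}
\int_{a_k}^{b_k}\bigl[c_k-\Phi'(p)\bigr]\,\dd p
=c_k(b_k-a_k)-\bigl[\Phi(b_k)-\Phi(a_k)\bigr]=0
\end{align}
by the very definition of $c_k$. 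Summing over $k$ yields the claimed identity.

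I do not expect any real obstacle here. The only steps that require a bit of care are (i) the identification of $\Pho$ as affine on each connected component of $D$ with slope matching the chord of $\Phi$, which is a classical fact about concave envelopes, and (ii) the a.e.\ equality $\Pho'=\Phi'$ on the coincidence set, which follows from absolute continuity. Interchanging integration with the countable sum is justified by non-negativity/absolute integrability since $I>0$ and the terms are eventually all zero.
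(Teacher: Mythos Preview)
Your proposal is correct and follows essentially the same approach as the paper: decompose $[0,1]$ into the coincidence set where $\Pho=\Phi$ (hence $\Pho'=\Phi'$ a.e.) and the open ``lift'' set written as a countable union of intervals on which $\Pho$ is affine with slope matching the chord of $\Phi$, then observe that each interval contributes zero to the difference. Your write-up is, if anything, slightly more explicit than the paper's about why $\Pho'=\Phi'$ a.e.\ on the coincidence set and about the role of absolute continuity.
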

\begin{proof}
	Fix an $n\ge1$ and let $\om\in\Om$ be such that the $\Pb$-almost surely properties of $I_n$, $I_{n-1}$, $\Phi_n$, and $\Pho_n$ hold. Let $E=\{p\in(0,1): \Phi_n(p,\om)<\Pho_n(p,\om)\}$. Clearly, $\Phi_n(p,\om)=\Pho_n(p,\om)$ for $p\in(0,1)\backslash E$. Note that if $\Phi_n(p_0,\om)<\Pho_n(p_0,\om)$ for a point $p_0\in(0,1)$, then $\Pho_n$ is linear in a neighborhood of the point $p_0$. Since $\Phi_n(\cdot,\om)$ is absolutely continuous, non-decreasing, and $\Pho_n(\cdot,\om)$ is its concave envelope, there exist countably many constants $a_i,b_i\in(0,1)$ and $c_i\ge0$ such that $0\le a_1<b_1<a_2<b_2,\dots\le 1$, that 
	$E=\cup_i (a_i,b_i)$, and that $\Pho_n'(p,\om)=c_i$ for $p\in (a_i,b_i)$. Note also that $\Phi_n(a_i,\om)=\Pho(a_i,\om)$ and $\Phi_n(b_i,\om)=\Pho_n(b_i,\om)$. Since we have assumed that $b_i<a_{i+1}$, we must have $\Phi_n(p,\om)=\Pho_n(p,\om)$ for $p\in[b_i,a_{i+1}]$. By momentarily ignoring dependence on $n$ and $\om$ (that is, by setting $I_n(y,\om)=I(y)$, $\Phi_n(p,\om)=\Phi(p)$, etc.), we obtain the result as follows
	\begin{align}
		&\int_0^1 I\big(\lam\Pho'(p)\big)\Phi'(p)\dd p
		= \int_{(0,1)\backslash E} I\big(\lam\Pho'(p)\big)\Phi'(p)\dd p 
		+\sum_{i} \int_{a_i}^{b_i} I\big(\lam \Pho'(p)\big)\Phi'(p)\dd p\\
		&= \int_{(0,1)\backslash E} I\big(\lam\Pho'(p)\big)\Pho'(p)\dd p 
		+\sum_{i} I\big(\lam c_i\big) \big(\Phi(b_i)-\Phi(a_i)\big)\\
		&= \int_{(0,1)\backslash E} I\big(\lam\Pho'(p)\big)\Pho'(p)\dd p 
		+\sum_{i} I\big(\lam c_i\big) \big(\Pho(b_i)-\Pho(a_i)\big)
		= \int_0^1 I\big(\lam\Pho'(p)\big)\Pho'(p)\dd p.
	\end{align}
\end{proof}

%
%
\section{Proof of Proposition \ref{prop:Volterra_S}}\label{app:Volterra_S}
Take an arbitrary constant $T>0$ and consider the linear Volterra integral equation
\begin{align}\label{eq:Volterra_S_T}
	J(t) = J_0(t) + \int_0^t J(s)k(t,s)\dd s,\quad 0<t\le T,
\end{align}
with $J_0(\cdot)$ and $k(\cdot,\cdot)$ as in Lemma \ref{lem:Volterra_S}. We will show that the sequence $\{J_m(\cdot)\}_{m=1}^\infty$ given by the successive approximation
\begin{align}\label{eq:SuccAprox_S_T}
	J_m(t) = J_0(t) + \int_{0}^t J_{m-1}(s)k(t,s)\dd s,\quad 0<t\le T, m\ge 1,
\end{align}
is convergent (uniformly on compact subsets of $(0,T]$) to a function $J(\cdot)$ that solves \eqref{eq:Volterra_S_T}.

By induction on $m$, it follows that $\{J_m(\cdot)\}_{m=1}^\infty$ satisfies \eqref{eq:SuccAprox_S_T} if and only if
\begin{align}\label{eq:SuccAprox_itr_S}
	J_m(t) = J_0(t) + \int_0^t J_0(s)\left(\sum_{i=1}^m k_i(t,s)\right)\dd s,\quad 0< t\le T, m\ge1,
\end{align}
in which $\{k_i(\cdot,\cdot)\}_{i=1}^\infty$ are the iterated kernels in \eqref{eq:ItrKernels_S}. Let us first verify that $J_m(\cdot)$ is well-defined and continuous. By \eqref{eq:resolvent_bound2_S}, we have
\begin{align}
	\int_{0}^t \left|J_0(s) \sum_{i=1}^m k_i(t,s)\right| \dd s
	\le \int_0^t |J_0(s)| g(t,s)\ee^{G(t)}\dd s
	\le \ee^{G(T)} \int_0^t |J_0(s)| g(T,s)\dd s,
\end{align}
for any $t\in(0,T]$ and $m\ge1$. Furthermore, $J_0(\cdot)$ (given by \eqref{eq:J0_S}) is continuous since $I_0(\cdot)$ is continuous. Thus, $J_m(\cdot)$ given by \eqref{eq:SuccAprox_itr_S} is well-defined and continuous.

Next, we consider the limit of $J_m(\cdot)$ as $m\to\infty$.  By \eqref{eq:SuccAprox_itr_S}, \eqref{eq:ItrKernels_approx_S}, and \eqref{eq:G_S}, we have
\begin{align}
	|J_m(t)-J_{m-1}(t)| &\le
	\int_0^t |J_0(s) k_m(t,s)|\dd s\\&
	\le\int_0^t |J_0(s)| g(T,s) \,\frac{\left(\int_s^t g(T,u)\dd u\right)^{m-1}}{(m-1)!}\,\dd s\\&
	\label{eq:Cauchy_J_S}
	\le\left(\int_0^T |J_0(s)| g(T,s)\,\dd s\right) \,\frac{G(T)^{m-1}}{(m-1)!},
\end{align}
for all $0<t\le T$ and $m\ge1$. Weierstrass M-test then yields that the series $\sum_{m=1}^\infty \big|J_m(t)-J_{m-1}(t)\big|$ is uniformly convergent for $t$ in compact subsets of $(0,T]$. Therefore, the sequence of functions $\{J_m(\cdot)\}_{m=1}^\infty$ converges (uniformly on compact subsets of $(0,T]$) to a continuous function $J(\cdot)$.
By letting $m\to\infty$ in \eqref{eq:SuccAprox_S_T} and the dominated convergence theorem, it follows that $J(\cdot)$ satisfies \eqref{eq:Volterra_S_T}. Furthermore, letting $m\to\infty$ in \eqref{eq:SuccAprox_S_T} and noting that the resolvent $\ks(\cdot,\cdot)$ is given by \eqref{eq:resolvent_S}, we obtain that
\begin{align}
	J(t) = J_0(t) + \int_0^t J_0(s)\ks(t,s)\dd s,\quad 0< t\le T.
\end{align}
Since the choice of $T$ is arbitrary and does not affect the value of the resolvent kernel $\ks(\cdot,\cdot)$, it follows that $J(t)$ is defined for all $t>0$, that is, $J(\cdot)$ equals to $I(\cdot)$ given by \eqref{eq:IntEq_sol_S}. It then follows that $I(\cdot)$ satisfies \eqref{eq:Volterra_S} and, thus, \eqref{eq:IntEq} by virtue of Lemma \ref{lem:Volterra_S}.

It only remains to show that $I(\cdot)$ is the unique solution of \eqref{eq:IntEq}. Let $\It(\cdot)$ be another continuous solution of \eqref{eq:IntEq}. Take an arbitrary $T>0$. We must have
\begin{align}\label{eq:Jt_S}
	\It(t) = J_0(t) + \int_0^t \It(s)k(t,s)\dd s,\quad 0<s\le t\le T.
\end{align}
Let $\{J_m(\cdot)\}_{m=1}^\infty$ be given by \eqref{eq:SuccAprox_itr_S}. For all $t\in(0,T]$, \eqref{eq:SuccAprox_S_T} yields
\begin{align}
	|\It(t) - J_m(t)|
	&\le \int_0^t |\It(s)-J_{m-1}(t)| |k(t,s)| \dd s\\&\le
	\int_0^t |J_m(s)-J_{m-1}(s)| |k(t,s)| \dd s
	+\int_0^t |\It(s)-J_m(s)| |k(t,s)| \dd s\\&\le
	\left(\int_0^T |J_0(s)| g(T,s)\,\dd s\right) \,\frac{G(T)^m}{(m-1)!}
	+ \int_0^t |\It(s)-J_m(s)| g(T,s) \dd s,
\end{align}
in which we have used \eqref{eq:k_bound_S} and \eqref{eq:Cauchy_J_S} in the last step.
It then follows that
\begin{align}
	|\It(t) - J_m(t)| \le d_m + \int_0^t |\It(s)-J_m(s)| g(T,s) \dd s,\quad 0<t\le T,
\end{align}
in which $d_m:= \left(\int_0^T |J_0(s)| g(T,s)\,\dd s\right) \,\frac{G(T)^m}{(m-1)!}$.
Gronwall's inequality and \eqref{eq:G_S} then yield
\begin{align}
	|\It(t) - J_m(t)| \le d_m + d_m \int_0^t  g(T,s) \ee^{\int_s^t g(T,u)\dd u}\dd s
	\le d_m\left(1+\ee^{G(T)}G(T)\right),\quad 0<t\le T,
\end{align}
It then follows that $\lim_{m\to\infty}J_m = \It(t)$ for all $t\in(0,T]$. However, since $\It(\cdot)$ is continuous and $\{J_m(\cdot)\}_{m=1}^\infty$ converges to $I(\cdot)$ uniformly on compact subsets of $(0,T]$, we must have $\It(t)=I(t)$ for $t\in(0,T]$. Since the choice of $T$ is arbitrary, we obtain that $\It(\cdot)=I(\cdot)$, which shows that $I(\cdot)$ is the unique continuous solution of \eqref{eq:IntEq}.

%
%
\section{Proof of Proposition \ref{prop:Volterra_Conc1}}\label{app:Volterra_Conc1}
Replacing $y$ with $t/\Pho'(0)$ in \eqref{eq:IntEq} yields
\begin{align}
	\int_0^1 I\left(\frac{t\Pho'(\eta)}{\Pho'(0)}\right)\Pho'(\eta)\dd\eta = I_0\left(\frac{t}{\Pho'(0)}\right),\quad t>0.
\end{align}
Since $\Pho(\cdot)$ is strictly concave (thus, $\Pho'(\cdot)$ is strictly decreasing), we may change variable from $\eta$ to $s=t\Pho'(\eta)/\Pho'(0)$ to obtain
\begin{align}\label{eq:IntEq_Volterra1}
	-\int_0^t I(s)\frac{\partial}{\partial s}
	\Pho\left((\Pho')^{-1}\left(\frac{\Pho'(0)\,s}{t}\right)\right)
	\dd s = I_0\left(\frac{t}{\Pho'(0)}\right),\quad t>0,
\end{align}
in which we have used $\Pho'(1)=0$ and
\begin{align}
	\frac{\Pho'(0)\,s}{t}\frac{\partial}{\partial s}\left((\Pho')^{-1}\left(\frac{\Pho'(0)\,s}{t}\right)\right)
	=\frac{\partial}{\partial s}\Pho\left((\Pho')^{-1}\left(\frac{\Pho'(0)\,s}{t}\right)\right).
\end{align}
By differentiating both sides of \eqref{eq:IntEq_Volterra1} with respect to $t$ and using
\begin{align}
	\frac{\partial}{\partial s}\Pho\left((\Pho')^{-1}\left(\frac{\Pho'(0)\,s}{t}\right)\right)
	=\frac{\Pho'(0)^2 s}{t^2}\frac{1}{\Pho''\left((\Pho')^{-1}\left(\frac{s \Pho'(0)}{t}\right)\right)},
\end{align}
we obtain a linear Volterra equation for $I(\cdot)$, namely,
\begin{align}\label{eq:Volterra_Conc1}
	I(t) = J_0(t) + \int_0^t I(s)k(t,s)\dd s,\quad t>0,
\end{align}
in which
\begin{align}
	J_0(t):=-\frac{\Pho''(0)}{\Pho'(0)^3}\,t\,I_0'\left(\frac{t}{\Pho'(0)}\right),\quad t>0,
\end{align}
and $k(\cdot,\cdot)$ is given by \eqref{eq:kg_Conc1}.
The above procedure is reversible. That is, if $I(\cdot)$ satisfies \eqref{eq:Volterra_Conc1}, it must also satisfy the original integral equation \eqref{eq:IntEq}.

Note that
\begin{align}
	|k(t,s)|=\frac{|k(t/s,1)|}{s}\le g(T,s),\quad 0<s\le t\le T,
\end{align}
by \eqref{eq:k_scaling_Conc1} and the definition of $g(\cdot,\cdot)$ in \eqref{eq:kg_Conc1}. We may then follow the argument provided in Appendix \ref{app:Volterra_S} to show that $I(\cdot)$ in \eqref{eq:IntEq_sol_Conc1} is the unique continuous solution of \eqref{eq:Volterra_Conc1} and, thus, that of \eqref{eq:IntEq}. We omit the details of this part of the proof to avoid duplicating arguments.

\section{Proof of Theorem \ref{thm:CMIM}}\label{app:CMIM}
The integral equation has a unique continuous solution since at least one of the stated propositions is applicable. For the rest of the proof, we assume that $I_0(\cdot)$ and $\Pho(\cdot)$ satisfy the conditions in Proposition \ref{prop:Volterra_S}, and prove that the unique solution is a CMIM function. The proofs of the other three cases (that is, conditions in Proposition \ref{prop:Volterra_RS}, Proposition \ref{prop:Volterra_Conc1}, or Proposition \ref{prop:Volterra_Conc2} are satisfied) are similar and, thus, omitted. 

By Proposition \ref{prop:Volterra_S}, the unique solution of \eqref{eq:IntEq_repeat} is given by \eqref{eq:IntEq_sol_S}, namely,
\begin{align}\label{eq:IntEq_sol_S_repeat}
	I(y) = \frac{1}{q_0 \Phi_+'(q_0)}\left[I_0\left(\frac{y}{\Phi_+'(q_0)}\right)
	+ \int_0^1 I_0\left(\frac{y\xi}{\Phi_+'(q_0)}\right)\ks(1,\xi)\dd \xi\right],\quad y>0,
\end{align}
with $q_0$, $\Phi_+(\cdot)$, and $\ks(\cdot,\cdot)$ as in the statement of Proposition \ref{prop:Volterra_S}. Since $I_0(\cdot)\in\CMIM$, Bernstein's theorem yields
\begin{align}\label{eq:CMIM_I0}
	I_0(y) = \int_0^{+\infty} \ee^{-yz}\mu_0(\dd z),\quad y>0,
\end{align}
for some non-negative sigma-finite measure $\mu_0$ on $[0,+\infty)$ satisfying $\mu_0(\{0\})=0$ and $\mu_0(\Rb_+)=+\infty$. Since $I_0(\cdot)$ is continuously differentiable of all orders, we may differentiate with respect to $y$ to obtain
\begin{align}\label{eq:CMIM_I0_dn}
	\frac{\dd^n}{\dd y^n} I_0(y) = \int_0^{+\infty} \ee^{-yz}(-z)^n\mu_0(\dd z),\quad y>0.
\end{align}
In particular, we must have
\begin{align}\label{eq:mu0_integ}
	\int_0^{+\infty} \ee^{-yz}z^n\mu_0(\dd z)<+\infty,\quad y>0, n\ge 0.
\end{align}
From \eqref{eq:IntEq_sol_S_repeat} and \eqref{eq:CMIM_I0}, it follows that
\begin{align}
	I(y)
	&= \frac{1}{q_0 \Phi_+'(q_0)}
	\int_0^{+\infty}\left[
	\exp\left(-\frac{yz}{\Phi_+'(q_0)}\right) + \int_0^1 \exp\left(-\frac{yz\xi}{\Phi_+'(q_0)}\right)\ks(1,\xi)\dd \xi	\right]\mu_0(\dd z),\quad y>0.\qquad.
\end{align}
By differentiating with respect to $y$ and Tonelli's theorem, we obtain that
\begin{align}\label{eq:I_CMIM_prime}
	\frac{\dd^n}{\dd y^n} I(y)
	&=
	\frac{(-1)^n}{q_0 \Phi_+'(q_0)^{n+1}}
	\int_0^{+\infty}\left[
	\exp\left(-\frac{yz}{\Phi_+'(q_0)}\right) + \int_0^1  \exp\left(-\frac{yz\xi}{\Phi_+'(q_0)}\right) \xi^n \ks(1,\xi)\dd \xi	\right] z^n \mu_0(\dd z),\qquad
\end{align}
for $y>0$. By setting $n=1$ and since $\mu(\{0\})=0$ and $\mu(\Rb_+)=+\infty$, we obtain Inada's conditions $I(0)=+\infty$ and $I(+\infty)=0$.

It only remains to show that $(-1)^n\frac{\dd^n}{\dd y^n}I(y)>0$ for all $y>0$ and $n\in\{0,1,\dots\}$. In light of \eqref{eq:I_CMIM_prime}, this condition is satisfied if
\begin{align}\label{eq:LastStep_CMIM}
	\exp\left(-\frac{yz}{\Phi_+'(q_0)}\right) + \int_0^1  \exp\left(-\frac{yz\xi}{\Phi_+'(q_0)}\right) \xi^n \ks(1,\xi)\dd \xi> 0,
	\quad y,n>0. 
\end{align}
Consider a function $\Psi(\cdot)$ given by
\begin{align}\label{eq:Psi}
	\begin{cases}
		\Psi(y):= 1+\sum_{j=1}^\infty \frac{(-y)^j a_j}{j!},\quad y>0,\\[1ex]
		a_j:= \left(\int_0^1 \Pho'(\xi)^{j+1}\dd \xi\right)^{-1}\in(0,1).
	\end{cases}
\end{align}
That $a_j\in(0,1)$ (so that the series in \eqref{eq:Psi} is absolutely convergent) follows from H\"{o}lder's inequality
\begin{align}\label{eq:Pho_bound1}
	\int_0^1\Pho'(\xi)^{j+1}\dd \xi
	> \left(\int_0^1\Pho'(\xi)\dd\xi\right)^{j+1}
	=\big(\Pho(1)-\Pho(0)\big)^{j+1}
	=1,\quad j\ge 1,
\end{align}
and since $\Pho(0)=-1$ and $\Pho(1)=0$ by Assumption \ref{asmp:Pho}. Differentiating \eqref{eq:Psi} yields
\begin{align}\label{eq:Psi_prime}
	(-1)^n\frac{\dd^n}{\dd y^n}\Psi(y)
	&= (-1)^n\sum_{j=n}^\infty \frac{(-y)^{j-n}}{(j-n)!}a_j
	= (-1)^n\sum_{j=0}^\infty \frac{(-y)^j}{j!}a_{j+n}>0,\quad y>0,n\ge0,
\end{align} 
in which the last step follows from \eqref{eq:CMIM_cond}. It is easy to verify that  $\Psi(\cdot)$ solves the integral equation
\begin{align}\label{eq:Psi_IntEq}
	\int_0^1 \Psi\left(y\Pho'(\eta)\right)\Pho'(\eta)\dd\eta = \ee^{-y},\quad y>0.
\end{align}
Indeed,
\begin{align}
	\int_0^1 \Psi\left(y\Pho'(\eta)\right)\Pho'(\eta)\dd\eta
	&=\int_0^1 \sum_{j=0}^\infty \frac{(-y)^j}{j!}\left(\int_0^1 \Pho'(\xi)^{j+1}\dd \xi\right)^{-1} \Pho'(\eta)^{j+1}\dd\eta\\*
	&=\sum_{j=0}^\infty \frac{(-y)^j}{j!}\left(\int_0^1 \Pho'(\xi)^{j+1}\dd \xi\right)^{-1} \int_0^1 \Pho'(\eta)^{j+1}\dd\eta
	=\ee^{-y}.
\end{align}	
By \eqref{eq:G_S}, we have $\int_0^t\ee^{-s}g(t,s)\dd s\le \int_0^tg(t,s)\dd s = G(t)<+\infty$ for $t>0$. By applying Proposition \ref{prop:Volterra_S} to the integral equation \eqref{eq:Psi_IntEq}, we then obtain that
\begin{align}
	\Psi(y)
	&= \frac{1}{q_0 \Phi_+'(q_0)}\left[
	\exp\left(-\frac{y}{\Phi_+'(q_0)}\right) + \int_0^1 \exp\left(-\frac{y\xi}{\Phi_+'(q_0)}\right)\ks(1,\xi)\dd \xi	\right],\quad y>0.\qquad
\end{align}
Therefore,
\begin{align}
	\frac{\dd^n}{\dd y^n}\Psi(y)
	= \frac{(-1)^n}{q_0 \Phi_+'(q_0)^{n+1}}\left[
	\exp\left(-\frac{y}{\Phi_+'(q_0)}\right) + \int_0^1 \exp\left(-\frac{y\xi}{\Phi_+'(q_0)}\right)\xi^n\ks(1,\xi)\dd \xi\right],
	\quad y>0, n\ge0,
\end{align}
Finally, by using \eqref{eq:Psi_prime}, we obtain \eqref{eq:LastStep_CMIM} which concludes the proof.

\end{document}